\documentclass[aps,prx,twocolumn,superscriptaddress,floatfix,citeautoscript,preprintnumbers]{revtex4-2}

\usepackage{graphicx}
\usepackage{amssymb,amsmath,amsthm,mathrsfs,amsfonts,mathtools,comment}
\usepackage{hyperref}
\hypersetup{
  colorlinks,
  citecolor=blue,
  filecolor=blue,
  linkcolor=blue,
  urlcolor=blue
}
\usepackage[capitalise]{cleveref}
\usepackage{bm}
\usepackage{physics}
\usepackage{braket}
\usepackage[flushleft]{threeparttable}
\usepackage{booktabs}

\usepackage{makecell}
\usepackage{caption}
\usepackage[labelfont=bf]{subcaption}

\newlength{\fskip}
\newlength{\fgap}
\newcommand{\overlaption}[2][]{
  \centering
  \setbox0=\vbox{#2}\fskip=\ht0%
  #2\vspace{-\fskip}   \caption{}\label{#1}%
  \vspace{\dimexpr\fskip-\baselineskip-\belowcaptionskip-\fgap}}%

\usepackage{orcidlink}
\usepackage{dsfont}
 \usepackage{multirow} 
 \newcommand{\dInt}{{\mathrm{d}}}

\renewcommand{\tr}{{\mathrm{Tr}}}

\newcommand{\lind}{{\mathcal{L}}}
\renewcommand{\norm}[1]{\lVert #1 \rVert}

\newcommand{\varO}{\mathcal{O}}

 \newcommand{\I}{\mathrm{i}}
\newcommand{\mc}{\mathcal}
\newcommand{\wt}{\widetilde}
\newcommand{\wh}{\widehat}
\newcommand{\bv}{\mathbf}

\newcommand{\diag}{\operatorname{diag}}
\renewcommand{\Re}{\operatorname{Re}}
\renewcommand{\Im}{\operatorname{Im}}

\newcommand{\note}[1]{\overset{\text{#1}}}

\newcommand{\mf}[1]{\mathfrak{#1}}

\newcommand{\tnorm}[1]{{\left\vert\kern-0.25ex\left\vert\kern-0.25ex\left\vert#1\right\vert\kern-0.25ex\right\vert\kern-0.25ex\right\vert}}

\newcommand{\RR}{\mathbb{R}}
\newcommand{\CC}{\mathbb{C}}

\newcommand{\ZZ}{\mathbb{Z}}

\newtheorem{thm}{\protect\theoremname}
\newtheorem{lem}[thm]{\protect\lemmaname}
\newtheorem{rem}[thm]{\protect\remarkname}
\newtheorem{prop}[thm]{\protect\propositionname}

\providecommand{\definitionname}{Definition}
\providecommand{\assumptionname}{Assumption}
\providecommand{\corollaryname}{Corollary}
\providecommand{\lemmaname}{Lemma}
\providecommand{\propositionname}{Proposition}
\providecommand{\remarkname}{Remark}
\providecommand{\theoremname}{Theorem}

\makeatletter
\renewcommand*\env@matrix[1][\arraystretch]{%
        \edef\arraystretch{#1}%
        \hskip -\arraycolsep
        \let\@ifnextchar\new@ifnextchar
        \array{*\c@MaxMatrixCols c}}
\makeatother
\renewcommand{\hat}{\wh}
\DeclareMathOperator\arctanh{arctanh}

\begin{document}

\title{Dissipative phase decision without ground-state preparation}

\begin{abstract}
 We propose a dynamical approach to identifying ground-state quantum phases through short-time dissipative cooling. Rather than determining the phase by preparing highly accurate approximations to ground states, we prepare a representative state of a candidate phase and monitor the early-time response of phase-sensitive observables under cooling dynamics tailored to the target Hamiltonian. For a class of phase-decision problems in which the relevant observables can be inferred from the low-energy manifold, and with jump operators implementable using only short-time Hamiltonian simulation, the dissipative evolution rapidly suppresses high-energy components and drives the system into a low-energy manifold whose observables already reveal the underlying ground-state phase, well before mixing to the steady state. We demonstrate this strategy for the frustrated $J_1$--$J_2$ Heisenberg chain, the Kitaev honeycomb model, and the XXZ chain, including Berezinskii--Kosterlitz--Thouless and topological phase transitions. In particular, coarse filter resolutions and short evolution times suffice to recover phase-sensitive quantities such as the Luttinger parameter and topological diagnostics. We further provide theoretical justification that cooling dynamics with such jump operators can rigorously prepare low-energy manifolds for free-fermionic and free-bosonic systems, and investigate this mechanism for interacting fermionic systems. Our results suggest that phase decision is a plausible target for future utility-scale studies on early fault-tolerant quantum devices.
\end{abstract}

\author{Hao-En Li}
\thanks{These authors contributed equally to this work.}
\affiliation{Department of Mathematics, University of California, Berkeley, California 94720, USA}
\author{Yilun Yang}
\thanks{These authors contributed equally to this work.}
\affiliation{Department of Mathematics, University of California, Berkeley, California 94720, USA}
\author{Lin Lin}
\email{linlin@math.berkeley.edu}
\affiliation{Department of Mathematics, University of California, Berkeley, California 94720, USA}
\affiliation{Applied Mathematics and Computational Research Division, Lawrence Berkeley National Laboratory, California 94720, USA}

\date{\today}							
\maketitle

\section{Introduction}

Quantum many-body systems often exhibit a wide range of competing phases and phase transitions even at zero temperature~\cite{Sachdev_2011,Carr2010}. Understanding such a ground-state phase diagram involves two distinct tasks. The first is to identify the relevant competing phases, and the second is to decide, for a given parameter point, which phase is realized. In many settings one already has plausible {candidate phases}, suggested for instance by effective theories, numerical evidence, or experiments, but deciding between them remains costly. We refer to the second task as the \emph{phase-decision} problem, whose precise definition will be given in \cref{sec:models}. 

Our approach is motivated by the following classical example. To decide which of ice and liquid water is thermodynamically preferred at a given temperature, one could compare static properties such as the free energies of the two phases, as illustrated in \cref{fig:phase}a. Another approach is to start from the ice phase and ask whether it begins to melt. Once melting sets in, one need not wait for full equilibration to infer that the liquid phase is thermodynamically preferred. This dynamical perspective is illustrated in \cref{fig:phase}b. Computationally, one can monitor qualitative changes in microscopic observables during melting, such as the pair correlation function~\cite{Widom2002}.

To solve the quantum phase-decision problem, one can again take either a static or a dynamical perspective, as illustrated in \cref{fig:phase}c,d. The static perspective prepares representative states for the candidate phases, estimates their energies accurately, and compares them. This static perspective has been widely used in classical quantum many-body computation~\cite{LeBlancAntipovBeccaEtAl2015,QinChungShiEtAl2020,XuChungQinEtAl2024,GongZhuShengEtAl2014,HaghshenasLanGongEtAl2018,NomuraImada2021,LiuGongLiEtAl2022,HohenadlerLangAssaad2011,SatoAssaadGrover2018,ViterittiRendeSachdevCarleo2026}. On a quantum computer, {quantum phase estimation (QPE) can estimate energies with high precision,} but it requires an initial state with sufficient overlap with the relevant eigenstate and circuit depth long enough to resolve the target energy splitting, which is demanding for small-gap phase decisions~\cite{AbramsLloyd1999,LeeLeeZhaiEtAl2023,DingLin2023}. Adiabatic state preparation provides another route, but it can also become costly near phase boundaries where small gaps control the required evolution time~\cite{farhi2000quantumcomputationadiabaticevolution,Albash2018Adiabatic}.

This work instead explores the \emph{dissipative} analogue of the dynamical perspective. Rather than resolving small energy differences between carefully prepared approximations to competing ground states, we ask whether short-time dissipative cooling of a representative state already contains enough information to decide the phase. In this picture, one starts from a representative state of one candidate phase, evolves it with dissipative cooling dynamics tailored to the target Hamiltonian~\cite{Diehl2008driven,Kraus2008Markov,Verstraete2009Dissipative,Zhou2021DissipativeMPS,cubitt2023dissipativegroundstatepreparation,Langbehn2024Cooling,DingChenLin2024,MotlaghZiniArrazolaEtAl2024,LambertCirioLinEtAl2024,MiMichailidisShabaniEtAl2024,EderFinzgarBraunEtAl2025,LloydMichailidisMiEtAl2025,LiZhanLin2025,ZhanDingHuhnEtAl2025,ding2025endtoendefficientquantumthermal,Molpeceres2025cooling,feldmeier2026digitaldissipativestatepreparation,molpeceres2026benchmarkquantumalgorithmsground}, and monitors phase-sensitive observables. If these observables already enter the phase region associated with the ground state, the phase decision can be made before the dynamics reaches the steady state. This viewpoint is close in spirit to earlier works that use quench dynamics to probe phases~\cite{HaldarMallayyaHeylEtAl2021,KobayashiMotome2025,BhattacharyyaDasguptaDas2015,TarnowskiUnalFlaschnerEtAl2019,WangZhangChenEtAl2017}, which aim at using short-time nonequilibrium response to distinguish phases.
Dissipative protocols differ from quench probes because the evolution is biased toward lower energies, and become particularly attractive when the phase decision depends on low-energy information, but not on preparing a nearly exact ground state.

\begin{figure*} 
    \centering
    \includegraphics[width=0.8\textwidth]{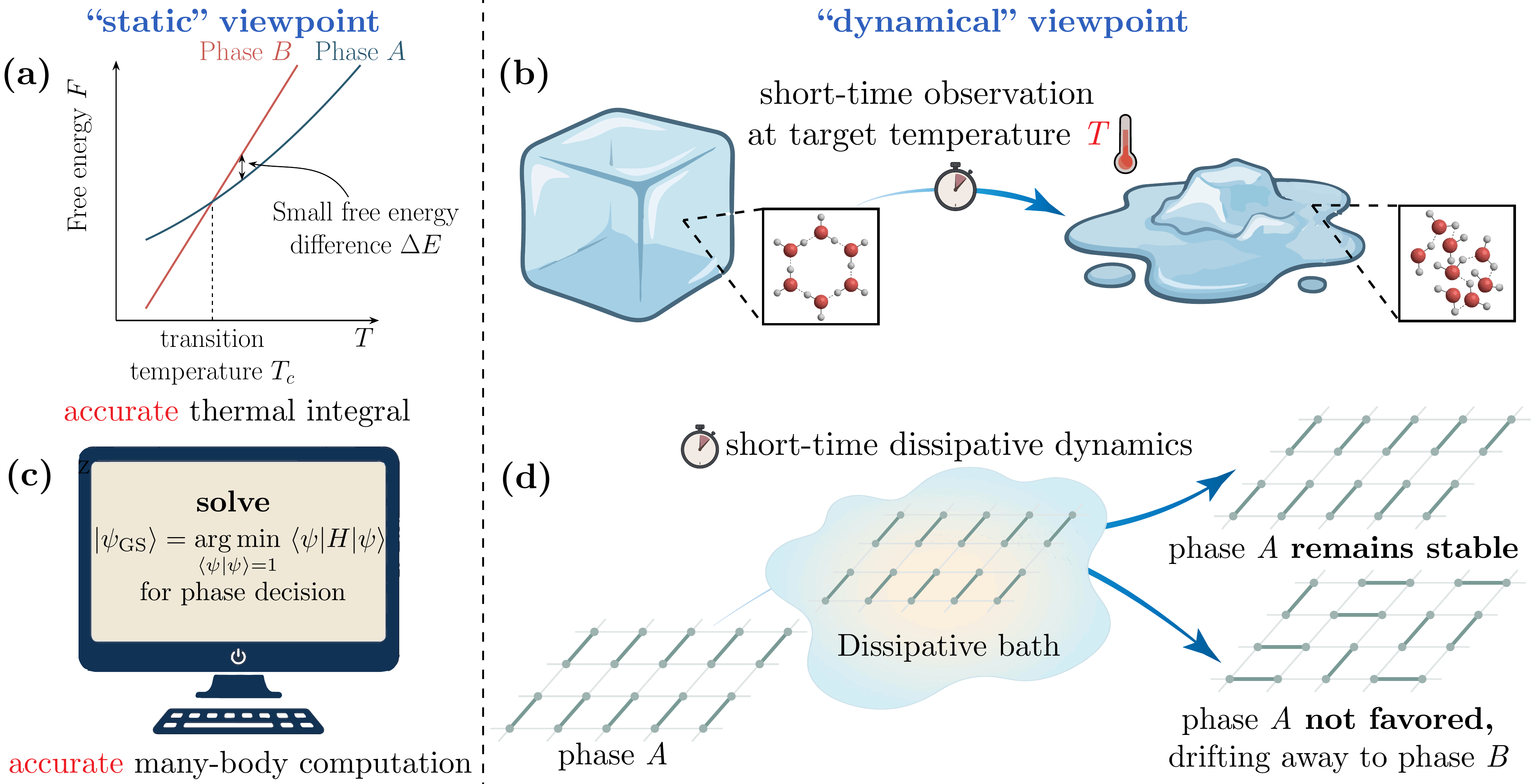}
    \caption{Conceptual illustration of the static and dynamical viewpoints for solving the phase-decision problem. {\bf (a)} Static perspective in the classical setting based on free-energy comparison. {\bf (b)} Dynamical perspective in the classical setting based on short-time response. By observing the onset of melting over a short time window, or computationally by monitoring qualitative changes in microscopic observables such as pair-correlation functions, this approach infers the thermodynamically preferred phase from early-time nonequilibrium dynamics. {\bf (c)} Static perspective for the quantum many-body phase-decision problem. Preparing representative states for competing phases and evaluating their energies with high precision can become computationally expensive near phase boundaries. {\bf (d)} Dissipative dynamical perspective for the quantum phase-decision problem. Dissipative cooling tailored to the target Hamiltonian drives a representative state toward the low-energy sector, while the resulting short-time evolution already reveals qualitatively correct phase-sensitive observables well before the system reaches its steady state. }\label{fig:phase}
\end{figure*}

We emphasize that this dynamical perspective is not intended to solve arbitrary ground-state phase-diagram problems. Classical dynamics may be slowed by metastable behavior such as nucleation barriers~\cite{Langer1969,vanKimuraRuszelSpitoni2019}, and quantum many-body dynamics can likewise exhibit metastability~\cite{YinSuraceLucas2025}. Some phase decisions also require resolving closely competing low-energy states with high accuracy, and in the worst case even determining whether a many-body system is gapped or gapless is undecidable~\cite{Cubitt2015Gap,Bausch2021Uncomputability}. We thus focus on a narrower setting, in which dissipative cooling rapidly suppresses high-energy components and reaches a low-energy sector whose observables already distinguish the competing phases. Because the phase-decision task asks only for a binary answer, it may be more robust to noise than full state preparation, and is therefore a plausible target for future utility-scale studies on early fault-tolerant devices.

More technically, our dissipative protocol is based on algorithmically engineered Lindblad dynamics~\cite{Davies1974,Lindblad1976,GoriniKossakowskiSudarshan1976}. Recent progress has produced efficient dissipative protocols for Gibbs state preparation~\cite{Temme2011Metropolis,MozgunovLidar2020,ChenBrandao2021,shtanko2021preparing,ChenKastoryanoBrandaoEtAl2025,RallWangWocjan2023,gilyen2024quantum,JiangIrani2024,DingLiLin2025,SmidMeisterBertaMario2025,ChenDingZhang2026,ScandiAlhambra2026} and for ground-state preparation beyond frustration-free settings~\cite{DingChenLin2024,LiZhanLin2025,ZhanDingHuhnEtAl2025,ding2025endtoendefficientquantumthermal}. These works typically aim at accurate state preparation. To this end, they often require precise dissipative implementations, including jump operators with enough energy resolution to distinguish low-lying excitations, together with evolution up to the mixing time, namely, the time required to drive arbitrary initial states close to the stationary state~\cite{TemmeKastoryanoRuskaiEtAl2010,kastoryano2013quantum,Bardet2023rapid,gamarnik2024slowmixingquantumgibbs,ding2025polynomialtimepreparationlowtemperaturegibbs,Kochanowski2025rapid,Rouze2025efficient,TongZhan2025,Lin2025,Rouze2026optimal,Di2025mixing,WattsSarkarCollinsEtAl2026,Bao2026typicalRelaxation}. If phase discrimination is already possible from low-energy states, one can instead use coarser filters and stop the dissipative evolution once the diagnostic has stabilized. The resource can thus be significantly reduced compared to the resolution and mixing requirements of faithful ground-state preparation.

Our work uses the dissipative constructions developed in~\cite{DingChenLin2024,ZhanDingHuhnEtAl2025}, but for a different objective and in a different regime. We employ jump operators implementable using only short-time Hamiltonian simulation, and study the resulting dissipative dynamics well before it mixes to the steady state. This pre-mixing regime has received much less analysis in the literature, and it raises two questions. First, does the dynamics already produce informative low-energy states in that regime? We answer this affirmatively for free fermions and free bosons, and argue that interacting fermionic systems generically combine aspects of both behaviors. Second, can such low-energy states decide phases in concrete settings? We show that using suitable diagnostics, phase decision problems can be solved in representative examples including the frustrated $J_1$--$J_2$ Heisenberg chain, the Kitaev honeycomb model, and the XXZ spin chain. We emphasize that the present demonstrations are proof-of-principle tests in classically accessible regimes, which provide controlled settings for validation, before moving to more challenging interacting systems. Throughout, dissipative cooling is used as an algorithmic tool for accessing the low-energy physics of the target Hamiltonian, rather than for engineering exotic mixed-state phases~\cite{Coser2019classificationof,Sang2024mixedphase,Lessa2025mixed,Ellison2025mixed,Sohal2025mixed,Guo2025mixed,sun2025anomalousmatrixproductoperator,liu2026establishingmixedstatephaseequivalence,Lang2015Phase,Kuwahara2021Gibbs,RakovszkyGopalakrishnanKeyserlingk2024,SoaresBrunelliSchiro2026}.

The rest of the paper is organized as follows. In \cref{sec:filter} we introduce dissipative cooling with jump operators that can be implemented using only short-time Hamiltonian simulation.  In \cref{sec:models}, we formulate the phase-decision problem and demonstrate the protocol on three representative examples. Next, in \cref{sec:theory} we analyze the resulting cooling dynamics for free fermionic and bosonic systems, followed by their implications for interacting fermionic systems. Finally, in \cref{sec:discussion} we discuss the scope and limitations of the approach, and whether solving phase-decision problems can lead to a practical quantum advantage. Technical proofs and additional numerical results are deferred to the appendices.

\section{Dissipative cooling with realistic filters}
\label{sec:filter}

The recently developed dissipative cooling dynamics \cite{DingChenLin2024,ZhanDingHuhnEtAl2025} are governed by the Lindblad equation \cite{Lindblad1976,GoriniKossakowskiSudarshan1976}:
\begin{equation}\label{eq:lindblad}
    \begin{aligned}
\frac{\dInt \rho}{\dInt t} &= \lind[\rho] = \underbrace{-\I[H,\rho]}_{\mc L_H[\rho]} + \underbrace{\sum_{a }  \left(K_a \rho K_a^{\dagger} - \frac{1}{2}\left\{K_a^{\dagger}K_a,\rho \right\}\right)}_{\mc D[\rho] },
    \end{aligned}
\end{equation}
where the Lindblad superoperator $\lind$, or the generator of the quantum Markov semigroup, consists of a coherent part $\mc L_H$ and a dissipative part $\mc D$. Here, $\rho(t)$ denotes the many-body density matrix, and $H = \sum_k \lambda_k \ketbra{\psi_k}$ represents the system Hamiltonian with its spectral decomposition. The jump operators $K_a$ are constructed from a set of coupling operators $A_a$ using a filter function $f(s)$ 
 as follows:
\begin{equation}
    \begin{aligned}
    K_a & = \int_{\RR} f(s) e^{\I Hs} A_a e^{-\I Hs} \dInt s\\
    & = \sum_{k,l} \hat{f}(\lambda_k - \lambda_l) \braket{\psi_k | A_a | \psi_l} \ket{\psi_k}\bra{\psi_l}.
    \end{aligned}
    \label{eq:jump_op}
\end{equation}
In this expression, $\hat{f}(\omega)$ is the Fourier transform of the time-domain filter function $f(s)$, following the convention in \cite{ChenKastoryanoBrandaoEtAl2025,DingChenLin2024}:
\begin{equation}
    \hat{f}(\omega) = \int_{\RR} f(s)e^{\I \omega s} \dInt s.
    \label{eq:ft_filter}
\end{equation}
The choice of filter function determines which transitions between energy eigenstates are allowed, depending on the values of the frequency-domain filter $\wh f(\omega)$ at the energy differences (Bohr frequencies) $\lambda_k - \lambda_l$. For ground-state preparation, {we would like to allow such a transition from $\ket{\psi_l}$ to $\ket{\psi_k}$ only when it lowers the energy, namely when $\lambda_k < \lambda_l$}. This can be ensured by requiring $\hat{f}(\omega)=0$ for all $\omega \ge  0$. We also need the first excited state to transition to the ground state, so we require $\hat{f}(-\Delta_H) > c$ for some constant $c>0$, where $\Delta_H$ is the energy gap between the ground state and the first excited state. To distinguish this from the protocol used in the rest of the manuscript, we refer to this setting as the \emph{ideal} filter, see {\cref{fig:filter_illustration_freq}}.

For the ground-state simulation task, the total Lindblad simulation time is upper bounded by the mixing time, which is the time required for the system to evolve sufficiently close to its steady state regardless of the initial state. In other words, it measures how long the system takes to ``forget'' its starting point and settle into thermal or ground-state equilibrium. The mixing time is usually highly system dependent. Under the ideal filter, rigorous upper bounds on the mixing time of the dissipative cooling algorithm have been established for several classes of non-commuting Hamiltonians, including systems satisfying certain forms of the eigenstate thermalization hypothesis (ETH)~\cite{DingChenLin2024}, free Majorana fermions, and weakly interacting spin and fermionic systems~\cite{ZhanDingHuhnEtAl2025}. Numerical evidence indicates that such protocols can be much more powerful in practice, and can prepare ground states of spin systems far beyond the weakly interacting regime, as well as ground, excited, and transition states in quantum chemistry settings~\cite{ZhanDingHuhnEtAl2025,LiZhanLin2025,LiLin2025,WattsSarkarCollinsEtAl2026}.

The end-to-end resource cost of dissipative cooling depends on several factors, including the choice of parameters in the filter function, the set of coupling operators $A_a$, the algorithm used to simulate the Lindblad dynamics, and the mixing time. The goal of this paper is not to provide concrete resource estimates, which will be left for future work. Instead, we focus on \emph{design principles} for efficient dissipative cooling algorithms and their utility in solving phase-decision problems. 

According to \cref{eq:jump_op}, 
implementing the jump operators requires Heisenberg evolution of the coupling operators $A_a$ under the system Hamiltonian, which uses Hamiltonian simulation as a subroutine. As a rule of thumb, the cost is therefore usually dominated by the total Hamiltonian simulation time.
In the ideal filter setting, constructing $K_a$ requires truncating the integral in \cref{eq:jump_op} at a time $s_{\max}$ that is inversely proportional to the energy gap $\Delta_H$ of the Hamiltonian \cite{DingChenLin2024,ZhanDingHuhnEtAl2025}. This can become large near phase boundaries, where the gap closes, forcing a large $s_{\max}$ and hence deep quantum circuits. On early fault-tolerant quantum devices, it is therefore desirable to consider a more realistic setup that resolves energies only up to some $\Delta = \Omega(1)$, with a corresponding family of filters satisfying
\begin{equation}\label{eq:finite_resolution_filter}
    \hat{f}(\omega) = \begin{cases}
         = 1, & \omega <  -\Delta,\\
        \in [0,1], &\omega \in [-\Delta,  \Delta],\\
         = 0, & \omega > \Delta,
    \end{cases}~ \Delta = \mc O(1/s_{\max}).
\end{equation}
Within the frequency window $[-\Delta, \Delta]$, the function $\hat{f}(\omega)$ is monotonically non-increasing. 
A detailed construction of these \emph{realistic filters} ({\cref{fig:filter_illustration_freq}}) with finite energy resolution can be found in Appendix~\ref{appendix:filter}. 
When $\Delta_H < \Delta$, if $\hat{f}(\omega_0) =0$ for some Bohr frequency $\omega_0\in (-\Delta,0)$, the transitions between the corresponding states become effectively dark. Conversely, when $\hat{f}(\omega_0) > 0$ for some Bohr frequency $\omega_0 \in (0, \Delta)$, the ``heating leakage'' 
would allow these energy-raising transitions, which will compete with cooling processes.

\begin{figure}
    \centering
    \begin{subfigure}{0.71\linewidth}
     \overlaption[fig:filter_illustration_freq]{\includegraphics[width=\linewidth]{{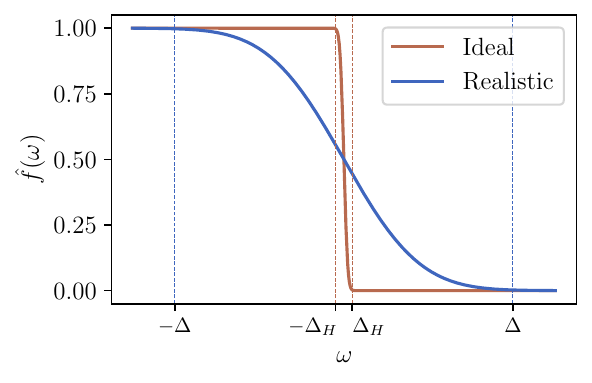}}}
    \end{subfigure}
    
      ~~~\begin{subfigure}{0.67\linewidth}
      \overlaption[fig:filter_illustration_time]{\includegraphics[width=\linewidth]{{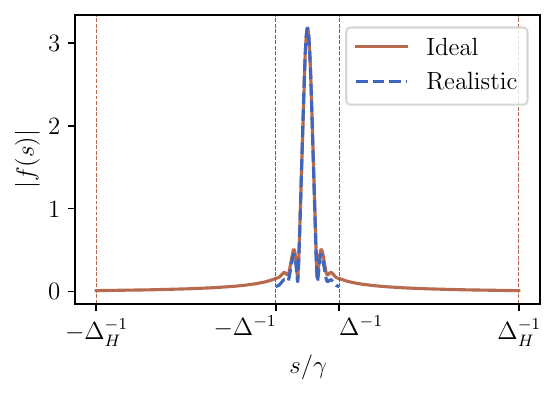}}}
    \end{subfigure}
\caption{Comparison between ideal and realistic filters in {\bf (a)} frequency domain and {\bf (b)} time domain. \label{fig:filter_illustration}}
\end{figure}

Using realistic filters shortens the Hamiltonian simulation time needed to implement the jump operators, see {\cref{fig:filter_illustration_time}}. Because of leakage, however, it cannot prepare the ground state exactly, even in the infinite-time limit of the Lindblad evolution. The first question is whether it can still drive the system efficiently into a low-energy manifold. The answer is yes, and we postpone the theoretical analysis of this question to \cref{sec:theory}. We next show that, when realistic filters drive the system into a low-energy manifold, the Lindblad simulation time needed to estimate low-energy local observables can be much shorter than the mixing time to solve phase-decision problems. Since the total Hamiltonian simulation cost is roughly the product of the time needed to implement the jump operators and the Lindblad simulation time, our protocol can use considerably fewer resources than faithful state preparation with ideal filters.

\section{Probing Quantum Phases with Dissipative Dynamics}\label{sec:models}

\subsection{Overview and problem setup}
In this section, we give concrete examples of dissipative protocols for deciding between competing ground-state phases. Before turning to these examples, we formulate the phase-decision problem studied in this work.

Consider a family of Hamiltonians $H(\kappa)$ parameterized by $\kappa \in \mathcal I \subset \mathbb{R}$. For the parameter region of interest, assume that one has already identified a finite set of candidate phases $\{\mathscr{P}_i\}$. We then choose a phase-sensitive diagnostic $\mathscr{O}: \mathscr{D} \to \varOmega$, where $\mathscr{D}$ is the set of quantum states and $\varOmega$ is the diagnostic space. In practice, $\mathscr{O}$ is obtained by measuring suitable observables and post-processing the resulting data, and need not be an order parameter in the Landau sense. For instance, in the examples below, $\mathscr{O}$ may be a fitted Luttinger parameter, or a topological invariant such as a Chern number. We assume that $\varOmega$ contains disjoint regions $\{\varOmega_i\}$ that can be used to label candidate phases $\{\mathscr P_i\}$.

For a target parameter $\kappa$, the phase-decision task is to determine which candidate phase contains the ground state of $H(\kappa)$. Given a representative initial state $\rho_0$ and dissipative dynamics tailored to $H(\kappa)$, let $\rho_\kappa(t)$ denote the evolved state. We say that the dissipative protocol succeeds at $\kappa$ if there exists a time $t_0$ such that for all $t\ge t_0$, 
\begin{equation}
    \mathscr{O}[\rho_\kappa(t)],  \mathscr{O}[\rho_{\kappa, \mathrm{GS}} ] \in \varOmega_i,
\end{equation}
where $\rho_{\kappa, \mathrm{GS}}$ is the ground state of $H(\kappa)$. {In other words, the transient dissipative state already enters the same diagnostic region as the ground state, so the ground-state phase at this $\kappa$ can be identified without waiting for full convergence to the steady state.} If the steady state of the Lindblad dynamics with ideal filters coincides with the ground state, then necessarily $t_0 \le t_{\mathrm{mix}}$, where $t_{\mathrm{mix}}$ denotes the mixing time. The main question in this section is whether one can have $t_0 \ll t_{\mathrm{mix}}$, even with realistic filters.

We examine three test cases. The first is the frustrated $J_1$--$J_2$ chain~\cite{Majumdar1969J1J2,Majumdar1969J1J22} in \cref{sec:J1J2}, which exhibits BKT-type transitions with extended critical phases~\cite{Haldane1982J1J2BKT}.
We show that even with realistic filters, short-time dissipative dynamics can still resolve these phases.
In particular, we probe the phase diagram through spin-correlation functions and extract the Luttinger parameter $\mathcal{K}$~\cite{Giamarchi2003,SongRachelLeHur2010,RachelLaflorencieSongEtAl2012} to carry out the phase decision. We also benchmark the protocol against quench dynamics in Appendix~\ref{appendix:J1J2_supplementary}.

The second example is the two-dimensional Kitaev honeycomb model~\cite{Kitaev2006}, an exactly solvable system exhibiting a topologically ordered phase.
Dissipation has long been regarded as a powerful resource for preparing topological states~\cite{DiehlRicoBaranovEtAl2011,BardynBaranovRicoEtAl2012,Bardyn2013TopoDiss,BudichZollerDiehl2015,LiuBergholtzBudich2021,PokartKoningDiehlBudich2026}, and in particular for preparing the Chern insulators \cite{Goldstein2019}. 
In \cref{sec:Kitaev}, we demonstrate that by applying dissipative dynamics to the original gapless Hamiltonian without opening the spectral gap, one can extract a mixed-state Chern diagnostic with the expected finite-size phase label 
using finite-resolution filters and short evolution time. This approach leverages topological diagnostics generalized to mixed states~\cite{Bardyn2013TopoDiss,BudichZollerDiehl2015,BudichDiehl2015,Evered2025Honeycomb}.

Finally, in \cref{sec:XXZ} we consider the anisotropic XXZ spin chain~\cite{Franchini2017}, {which also exhibits a BKT-type phase transition.} {Its nearest-neighbor structure allows efficient tensor network simulations of the Lindblad dynamics.} For a relatively large system ($L=40$), {we find from tensor network simulations that a coarse filter resolution and a short evolution time suffice to recover the  qualitatively correct  correlation-function behavior associated with the respective phases.} Adiabatic evolution~\cite{farhi2000quantumcomputationadiabaticevolution,Albash2018Adiabatic} provides an alternative route to preparing ground states or low-energy states, but it is often bottlenecked by the closing of the spectral gap at critical points. This is especially restrictive for BKT transitions, where the gap opens exponentially slowly as the coupling deviates from the critical value. 
We benchmark our dissipative protocol against adiabatic evolution for the XXZ spin chain.
Within our maximum simulated evolution time, we find that {the adiabatic dynamics fails to resolve the power-law decay of the correlation functions characteristic of the gapless phase.}

\subsection{Frustrated $J_1$--$J_2$ Heisenberg chain}\label{sec:J1J2}

We first consider the antiferromagnetic $J_1$--$J_2$ Heisenberg chain. This model incorporates frustrated next-nearest-neighbor interactions and is described by the Hamiltonian:
\begin{equation}
    H = \sum_{i=1}^{L} \left( J_1 \mathbf{S}_i \cdot \mathbf{S}_{i+1} + J_2 \mathbf{S}_i \cdot \mathbf{S}_{i+2} \right),\quad J_2/J_1\ge 0.
\end{equation}
Here $\bv S_i = (S_i^x, S_i^y, S_i^z)$ are spin-$\frac12$ operators at site $i$. We adopt the periodic boundary condition (PBC) and the subscript $i$ is defined modulo $L$.  
When the frustration ratio $\kappa := J_2/J_1\le \kappa_{\rm crit} \approx 0.241$ \cite{OkamotoNomura1992,Giamarchi2003}, the system is in the critical \emph{Tomonaga--Luttinger liquid} (TLL) phase, while for $\kappa >\kappa_{\rm crit} $ the system undergoes a BKT phase transition into a gapped dimerized phase called \emph{valence bond solid} (VBS){, see \cref{fig:phase_j1j2}}. 
For a given $\kappa$, we would like to decide whether the system is in the TLL or VBS phase based on measured observables. In principle, one could distinguish these phases by examining the decay properties of the spin correlation function $\left( \langle S_i^z S_j^z \rangle - \langle S_i^z \rangle \langle S_j^z \rangle \right)$, which decays algebraically in the TLL phase and exponentially in the gapped VBS phase. Near a BKT transition, however, the correlation length on the VBS side becomes exponentially large in $(\kappa-\kappa_{\rm crit})^{-1/2}$, and higher corrections on the TLL side can complicate finite-size scaling analysis~\cite{BursillGehringFarnellEtAl1995}.

\begin{figure} 
    \centering
     \begin{subfigure}{\linewidth}
    \overlaption[fig:phase_j1j2]{\includegraphics[width=0.92\linewidth]{{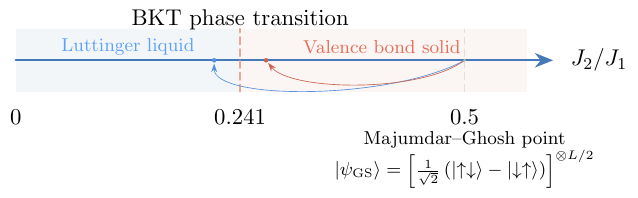}}}
    \end{subfigure}
      \begin{subfigure}{\linewidth}
  \overlaption[fig:J1J2_fitting]{\includegraphics[width=0.96\linewidth]{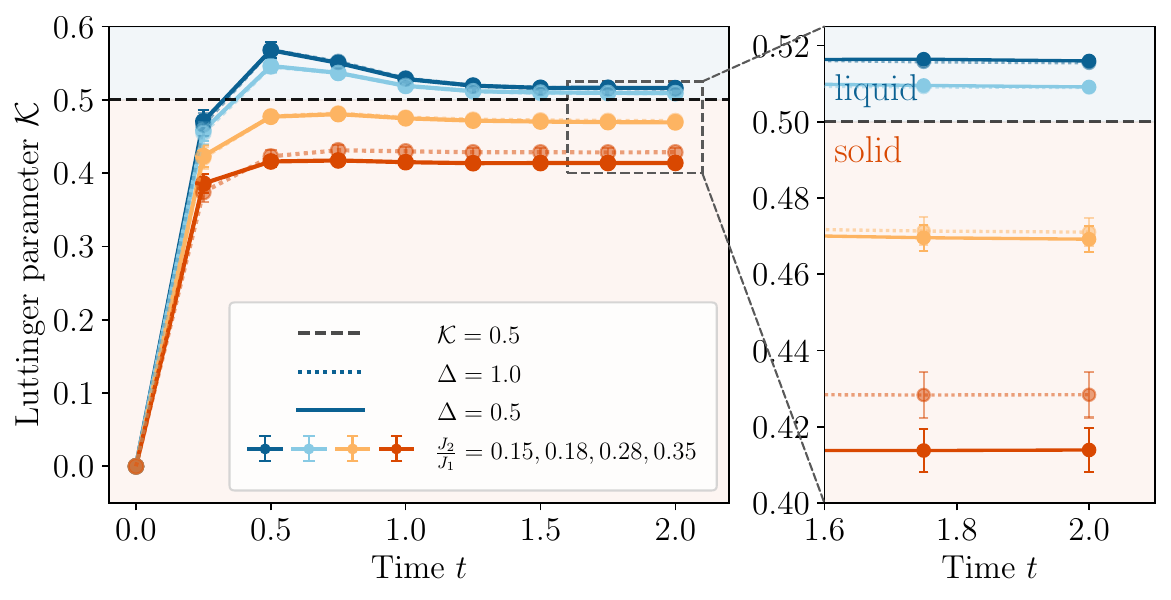}}
      \end{subfigure}
\caption{{\textbf{(a)} The phase diagram of the $J_1$--$J_2$ chain. The critical point is at $J_2/J_1\approx 0.241.$ \textbf{(b)}} Fits of the subsystem bipartite $S^z$-fluctuation   $\mathcal{F}(\ell)$ for the $J_1$--$J_2$ model for different frustration ratios $J_2/J_1$. The fitting is performed along the dissipative dynamics. See Appendix~\ref{appendix:analysis} for the fitting model. The error bars represent the standard deviation computed from the covariance matrix. Here the first two values of $\mc F$ are excluded for the fitting purpose.  }
\end{figure}

This issue can be overcome by leveraging the fact that, within the TLL regime, the effective low-energy theory is described by a free bosonic conformal field theory (CFT) with a continuously varying Luttinger parameter $\mc K$~\cite{Giamarchi2003,SongRachelLeHur2010,RachelLaflorencieSongEtAl2012}. This parameter dictates the scaling behavior of the entanglement entropy even in interacting systems \cite{SongRachelFlindtEtAl2012}, and can be extracted numerically from bipartite $S^z$ fluctuations, which are defined as the variance of the total $S^z$ in a subsystem $\mathcal{A}_\ell$ of size $\ell$:\begin{equation}
    \mathcal{F}(\ell) = \sum_{i,j \in \mathcal{A}_\ell} \left( \langle S_i^z S_j^z \rangle - \langle S_i^z \rangle \langle S_j^z \rangle \right), \quad \mathcal{A}_\ell = \{1, 2, \dots, \ell\}.
\end{equation}
Within the TLL phase, $\mc K$ appears as the coefficient of the logarithmic term in $\ell$ (see Appendix~\ref{appendix:analysis} for the detailed fitting procedure). By analyzing $\mathcal{F}(\ell)$, the collective behavior of the correlation functions is effectively amplified, allowing the phase to be identified more reliably near the critical point. The Luttinger parameter $\mc K > 1/2$ in the TLL phase, and $\mc K = 1/2$ at the BKT transition. In the gapped VBS phase, $\mathcal{F}(\ell)$ saturates rather than growing logarithmically in the thermodynamic limit, so fitting the TLL form there should be viewed only as an effective finite-size diagnostic (and should satisfy $\mc K < 1/2$) rather than interpreted literally as a Luttinger parameter.

We set the coupling operators as the valence bond operators $\{\mathbf{S}_i \cdot \mathbf{S}_{i+1}\}_{i=1}^{L}$ and consider filter energy resolutions of $\Delta = 1$ and $\Delta = 0.5$. We perform exact diagonalization simulations of Lindbladian dynamics with system size $L= 14$. The initial state is chosen to be the ground state of the Majumdar--Ghosh point $J_2/J_1=1/2$ in the VBS phase, which is one of the exact dimer-product ground states, $\ket{\psi_{\rm MG}} = \left[\frac{1}{\sqrt{2}}\left(\ket{\uparrow\downarrow} - \ket{\downarrow\uparrow} \right) \right]^{\otimes L/2}$.
As illustrated in \cref{fig:J1J2_fitting}, we find that the combination of a relatively \emph{coarse} filter resolution and a \emph{short} dissipative evolution is sufficient to accurately distinguish the different phases, even within the critical region.
On the TLL side, the extracted Luttinger parameter consistently satisfies $\mc K > 0.5$ as expected. For target couplings in the VBS phase, the same fit yields an effective logarithmic coefficient below $0.5$. While the quality of the fit is high within the TLL regime, it degrades in the VBS phase due to the fact that the logarithmic scaling form is a property only of the critical TLL phase. In the thermodynamic limit, the bipartite fluctuations ${\mc F}(\ell)$ shift from logarithmic divergence in the TLL phase to a constant saturation in the gapped VBS phase.

We also benchmark our protocol against quench dynamics in Appendix~\ref{appendix:J1J2_supplementary}. We find that the quench approach fails to detect signatures of the quantum phase transition within the simulated time scales, {highlighting the comparative advantage of our dissipative method near a BKT transition.}

 \begin{figure*}
    \begin{subfigure}{0.345\linewidth}
    \overlaption[fig:honeycomb]{\includegraphics[width=0.99\linewidth]{{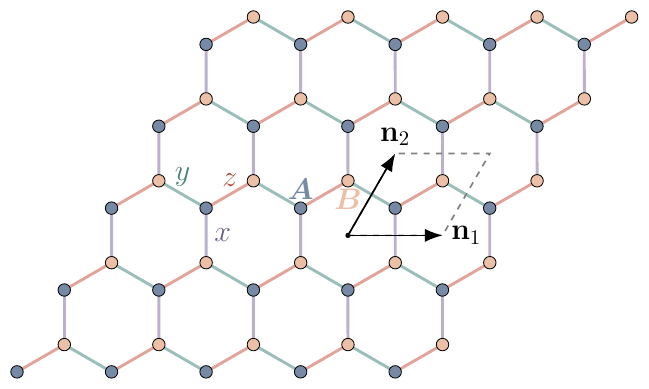}}}
    \end{subfigure}\hfill
    \begin{subfigure} {0.305\linewidth}
        \overlaption[fig:kitaev_phase]{\includegraphics[width=0.99\linewidth]{{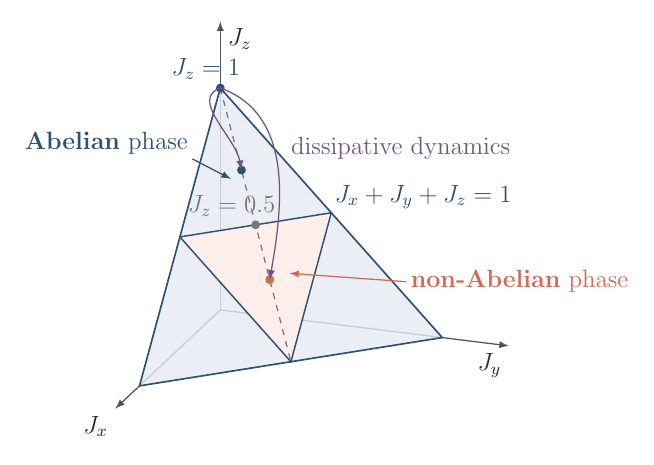}}}  
    \end{subfigure}
    \begin{subfigure}{0.34\linewidth}
        \overlaption[fig:kitaev_dissipative]{\includegraphics[width=0.99\linewidth]{{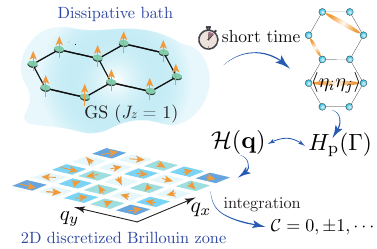}}}
    \end{subfigure}
\caption{{\bf (a)} The honeycomb lattice. The three types of bonds are labeled by $x$, $y$, and $z$. $\mathbf n_1$ and $\mathbf n_2$ are two primitive vectors. The two sublattices are labeled by $A$ and $B$. {\bf (b)} The phase diagram of the Kitaev honeycomb model on the plane $J_x+J_y+J_z = 1$. The blue region is the Abelian phase and the orange region is the non-Abelian phase. The dashed line is the $J_x = J_y =\frac{1-J_z}{2}$ line. The critical point is at $J_z = 0.5$. {\bf(c)} An illustration of the dissipative protocol for probing the topological phase transition. The system is initialized in the Ising state and then evolved under the short-time dissipative dynamics. The discretized Berry curvature is computed from the Bloch Hamiltonian $\mc H(\bv q)$ and then integrated over the first Brillouin zone to extract the Chern number $\mc C$. }
\end{figure*}

\subsection{2D Kitaev honeycomb model}

\label{sec:Kitaev}
The dissipative protocol naturally extends to two or higher dimensional systems. As an illustrative example, let us consider the Kitaev honeycomb model~\cite{Kitaev2006}, which is described by a spin-$\frac12$ Hamiltonian with anisotropic exchange on a honeycomb lattice:
\begin{equation}
    \label{eq:kitaev_spin}
    H = -4\sum_{\alpha \in \{x,y,z\}} J_\alpha \sum_{\langle i,j \rangle_\alpha} S_i^\alpha S_j^\alpha,
 \end{equation}
 where $\langle i,j \rangle_\alpha$ denotes the nearest neighbor sites connected by an $\alpha$-type bond. $\alpha$ can be $x$, $y$, or $z$ depending on the direction of the bond, see \cref{fig:honeycomb}. The model exhibits a rich phase diagram featuring Abelian and non-Abelian spin liquid phases depending on the relative strengths of the coupling constants $J_x$, $J_y$, and $J_z$~\cite{Kitaev2006}, as shown in \cref{fig:kitaev_phase}. In recent years, theoretical and experimental studies of this model have focused extensively on topological state preparation on quantum processors \cite{Evered2025Honeycomb,WillCochranRosenbergEtAl2025}, as well as topological signatures in quench dynamics \cite{WangZhangChenEtAl2017,TarnowskiUnalFlaschnerEtAl2019}.

The Kitaev honeycomb model can be mapped to Majorana fermions coupled to a static $\mathbb{Z}_2$ gauge field. When restricted to the vortex-free sector and a fixed gauge, the Hamiltonian reduces to a quadratic Majorana fermion model
\begin{equation}\label{eq:kitaev_majorana_0}
   H =  \I  \sum_{\bv s, \mu, \bv t, \nu} J_{\alpha(\bv s, \mu; \bv t, \nu)}  \eta_{\bv s\mu} \eta_{\bv t\nu}  
\end{equation}
where $\eta_{\bv{s}\mu}$ are Majorana fermions satisfying the anticommutation relations $\{\eta_{\bv s\mu},\eta_{\bv t\nu}\} = \delta_{(\bv s,\mu),(\bv t,\nu)}$, and $\mu=A, B$ are the two sites contained in the unit cell of the honeycomb lattice, as illustrated in \cref{fig:honeycomb}. 
Applying a Fourier transform followed by a Bogoliubov transformation diagonalizes the system into a free-fermion model in momentum space, in which the independent quasiparticle degrees of freedom can be represented by the $B$-sublattice fermionic operators
\begin{equation}
      H = \sum_{\bv q}\abs{g(\bv q)}  (1-2c_{\bv qB}^\dag c_{\bv qB}),
\end{equation}
with the dispersion relation given by
\begin{equation}
    g(\bv q) := J_x \cos q_x + J_y \cos q_y + J_z + \I (J_x \sin q_x+ J_y \sin q_y),
\end{equation}
and $c_{\bv qB}$, $c_{\bv qB}^\dagger $ represent fermionic annihilation and creation operators of the Bogoliubov quasiparticles. For simplicity, we take the side length  $L$ of the supercell to be odd and $q_x,q_y = \frac{2\pi n}{L}$ with $n=-\frac{L-1}{2},\cdots,\frac{L-1}{2}$. The ground state is given by
\begin{equation}
    \ket{\psi_\text{GS}} = \prod_{\bv q} c_{\bv qB}^\dag \ket{0}.
\end{equation}
Here $\ket{0}$ is the vacuum state of $c_{\bv qB}$ operators. The energy gap is given by $\min_{\bv q} 2\abs{g(\bv q)}$. We refer to Appendix \ref{app:kitaev_diagonalization} for the details of the model and its diagonalization.

{Here we are interested in the line $J_x= J_y = ({1-J_z}) / {2}$ in the parameter space
 (the dashed line in the phase diagram \cref{fig:kitaev_phase}).} 
At the critical point $J_z = 0.5$, the model changes from the gapped Abelian phase for $J_z>0.5$ to the gapless phase for $J_z<0.5$ with time-reversal symmetry preserved.
Along this line, the gapped phase has the trivial parent-Hamiltonian Chern number $\mathcal{C} = 0$. For $J_z<0.5$, the thermodynamic system is gapless; after adding a small time-reversal-breaking perturbation that opens a gap, one obtains a chiral non-Abelian spin liquid with $\mathcal{C} = \pm 1$ \cite{Kitaev2006}. {In the present work we do not explicitly add such a perturbation. Instead, we use the Chern number of the mixed-state parent Hamiltonian constructed from the covariance matrix as a finite-size diagnostic that distinguishes the two regimes in the dissipative dynamics.}

\begin{figure}
\centering 
    \begin{subfigure}{0.495\linewidth}
    \overlaption[fig:dissipative_2D_topological_chern]{\includegraphics[width=0.95\linewidth]{{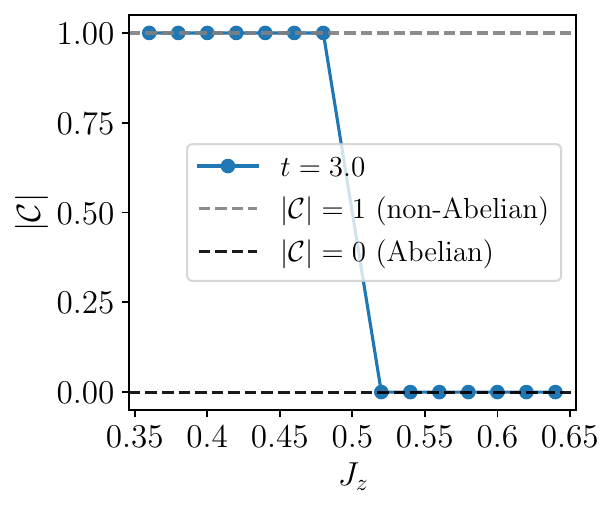}}}
    \end{subfigure}
\hfill
    \begin{subfigure}{0.485\linewidth}
 \overlaption[fig:dissipative_2D_topological_purity]{\includegraphics[width=0.95\linewidth]{{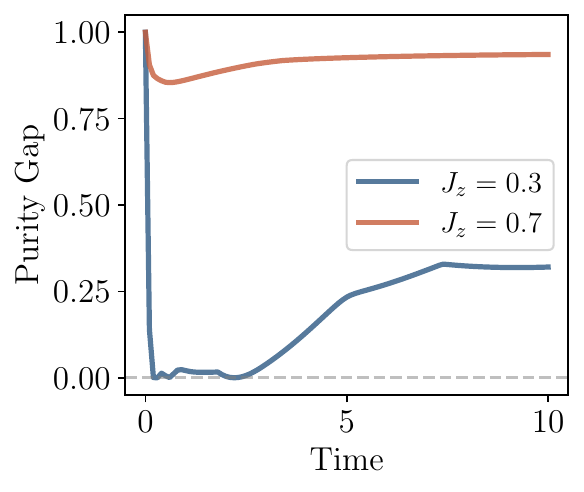}}}
    \end{subfigure}
    \begin{subfigure}{0.325\linewidth}
   \overlaption[fig:dissipative_2D_topological_band_t0]{\includegraphics[width=\linewidth]{{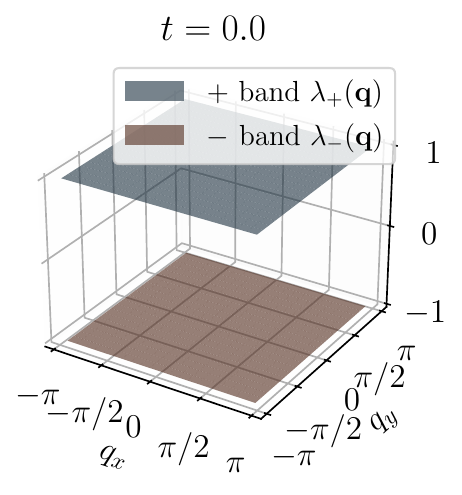}}}
    \end{subfigure}
        \begin{subfigure}{0.325\linewidth}
   \overlaption[fig:dissipative_2D_topological_band_t3]{\includegraphics[width=\linewidth]{{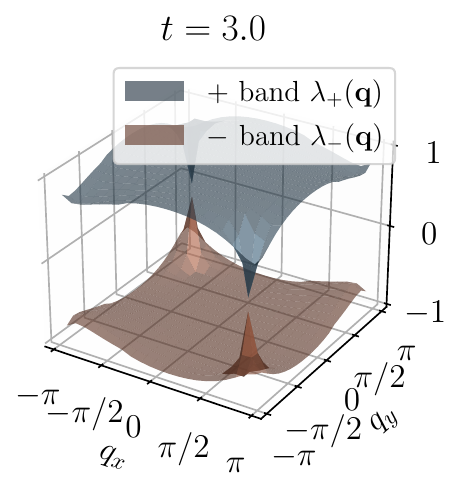}}}
    \end{subfigure}    
    \begin{subfigure}{0.325\linewidth}
   \overlaption[fig:dissipative_2D_topological_band_t8]{\includegraphics[width=\linewidth]{{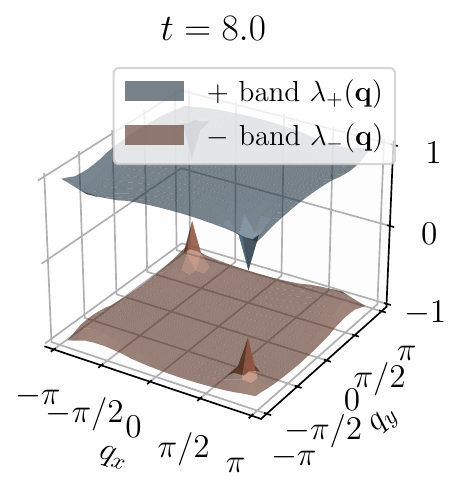}}}
    \end{subfigure}
    \caption{Dissipative dynamics for probing the topological phase transition in the Kitaev honeycomb model. The system size is $L=21$ and the filter resolution is $\Delta = 0.5$. {\bf (a)} The Chern number computed from the covariance matrix at the short simulation time $t=3$. {\bf (b)} The evolution of the purity gap along the dissipative dynamics. We start from $J_z=1$ in the Abelian phase, and for targets with $J_z<0.5$ the purity gap closes and then reopens during the dynamics. {\bf (c-e)} The band structure of the Bloch Hamiltonian $\mc H(\bv q)$ constructed from the dissipative dynamics at $t = 0, 3$ and $8$ for the target parameter $J_z = 0.4$.}
    \label{fig:dissipative_2D_topological}
\end{figure}

We choose local Majorana operators $\{\sqrt 2 \eta_{\bv s A}, \sqrt 2\eta_{\bv s B}\}_{\bv s}$ as Lindblad coupling operators and set the energy resolution to be $\Delta = 0.5$ in the realistic filter. {This choice of coupling operators is commonly referred to as \emph{bulk dissipation} and will be revisited later in \cref{sec:XXZ,sec:free_fermion}.}  
For each $J_z$, the initial state is the all-spins-up configuration, i.e., the ground state in the Ising limit $J_x = J_y = 0$ and $J_z = 1$. 
The two phases can be diagnosed by analyzing the fermionic covariance matrix $\Gamma$ of the dissipatively evolved state, which encodes the relevant topological information~\cite{DiehlRicoBaranovEtAl2011,Bardyn2013TopoDiss,BudichDiehl2015,BudichZollerDiehl2015,BardynWawerAltlandEtAl2018}. {Specifically}, at each time during the evolution, we construct the corresponding parent Hamiltonian from $\Gamma$ as
\begin{equation}
H_{\rm p } = {\I} \sum_{\bv s, \mu, \bv t, \nu} \Gamma_{\bv s\mu, \bv t\nu} \eta_{\bv s\mu} \eta_{\bv t\nu},\quad
\Gamma_{\bv s\mu, \bv t\nu} = \frac{\I}{2} \langle [\eta_{\bv s\mu}, \eta_{\bv t\nu}] \rangle.
\end{equation}
Applying the Fourier transform to the parent Hamiltonian, we obtain the $2\times 2$ Bloch Hamiltonian $\mc H(\bv q)$ for each momentum vector $\bv q$ in the discretized first Brillouin zone. For mixed states with an open purity gap  $\Delta_{P}(\Gamma) >0 $ which is the smallest eigenvalue of $(2\I \Gamma)^2$, one first flattens this parent Hamiltonian \cite{Bardyn2013TopoDiss}, and the Chern number $\mc C$ is then computed from the resulting two-band model using the specialized Fukui--Hatsugai--Suzuki (FHS) algorithm~\cite{FukuiHatsugaiSuzuki2005}; see Appendix~\ref{appendix:topological} for details.
The schematic illustration of the protocol is shown in \cref{fig:kitaev_dissipative}.

Readers may wonder whether our results contradict the no-go theorem for rapid preparation of topologically ordered states by quasi-local dissipative dynamics~\cite{Konig2014}. In fact, 
different operational notions of topological order lead to different preparation obstructions, for example those based on finite-depth local unitary circuits \cite{ChenGuWen2011}, finite-depth circuits with measurements and feedforward \cite{TantivasadakarnVishwanathVerresen2023}, or fast dissipative evolution \cite{Coser2019classificationof}.  Chern insulators are invertible free-fermionic phases, rather than intrinsically topologically ordered phases considered in \cite{Konig2014}; see Appendix~\ref{app:CI}. While the Kitaev honeycomb model can host toric-code-like anyon excitations~\cite{Kitaev2006}, our fermionic coupling operators are nonlocal in the spin representation of \cref{eq:kitaev_spin}, so the protocol is outside the quasi-local spin setting addressed there. Moreover, our goal is not to prepare the exact pure ground state, but to extract topological information from the mixed low-energy manifold encoded in $\Gamma$. In the gapless regime $J_z<0.5$, the minimum mode energy for a finite system of linear size $L$ scales as $\varO(1/L)$ rather than staying open by a constant. Even so, as long as the finite-size parent Hamiltonian has an open purity gap at the evaluation time, the parent-Hamiltonian construction together with the FHS algorithm yields a well-defined Chern diagnostic, and we will show below that the protocol recovers the expected phase label even with a realistic filter of constant width.

We perform numerical tests on a honeycomb lattice with system size $21 \times 21$, and the results are presented in \cref{fig:dissipative_2D_topological}. As shown in \cref{fig:dissipative_2D_topological_chern}, with a short total dissipative evolution time $t = 3$, the mixed-state Chern number extracted from the parent Hamiltonian already agrees with the expected phase label of the finite-size system. It is also informative to compute the purity gap. As discussed in \cite{Bardyn2013TopoDiss}, a closing of the purity gap signals that the Gaussian state passes through a point where its topological class can change, and thus it serves as an indicator of a dynamical topological transition during the dissipative evolution. For $J_z < 0.5$, the purity gap closes at an early stage and then reopens, which is consistent with the stability of the probed Chern number after the system enters the relevant low-energy manifold (see \cref{fig:dissipative_2D_topological_purity}).

To better convey the transition process, we also plot the band structure of the Bloch Hamiltonian $\mc H (\bv q)$ constructed from the parent Hamiltonian $H_{\rm p}$ (\cref{fig:dissipative_2D_topological_band_t0,fig:dissipative_2D_topological_band_t3,fig:dissipative_2D_topological_band_t8}). We see that since the short-time dynamics data from $t=3$ (\cref{fig:dissipative_2D_topological_band_t3}) corresponds to a mixed state that is still far from the true ground state, the unflattened parent-Hamiltonian spectrum is not yet close to $\pm1$. Nevertheless, the FHS-based Chern number has already reached the expected phase label, because the flattened eigenvectors are already in the correct topological class once the purity gap has reopened. At $t=8$ (\cref{fig:dissipative_2D_topological_band_t8}), the state is much closer to the true ground state, and the parent-Hamiltonian spectrum is closer to the flat-band limit. This suggests that even with a realistic filter and short-time dynamics, the dissipative evolution produces only coarse data while still capturing the qualitative topological features.
Such mixed-state diagnostics can in fact be adapted to experimental settings where only noisy local observables are accessible \cite{Evered2025Honeycomb}: one may employ Hamiltonian learning schemes to construct an effective parent Hamiltonian and compute the corresponding Chern number from it \cite{Qi2019determininglocal,Huang2023Learning,Olsacher2025}.

The performance of this protocol can be justified theoretically within the fixed-gauge quadratic Majorana description used in our numerics. In this setting, the model has a rapid-mixing property, which means that
the mixing time of the dissipative dynamics scales as $\varO(\log L)$, as proved in Ref.~\cite{ZhanDingHuhnEtAl2025} for ideal filter settings and generalized to realistic filter settings later in \cref{thm:free_fermion} (see also \cref{thm:free_fermion_convergence} in Appendix~\ref{appendix:theory}). Further technical details are provided in the appendices. We review topological diagnostics in Appendix~\ref{appendix:topological}, illustrate a simple mechanism showing how quasi-local dissipation can evade unitary obstructions in Appendix~\ref{app:CI}, and analyze steady-state properties with realistic filters in Appendix~\ref{app:steady_state_topological}. Additional supporting numerical results are provided in Appendix~\ref{appendix:kitaev_supplementary}, 
 and the explicit form of the quasi-free Lindbladian underpinning our numerics is given in Appendix~\ref{appendix:kitaev_quasifree}.

\subsection{XXZ model}\label{sec:XXZ}

\begin{figure*}
 \includegraphics[width=0.88\textwidth]{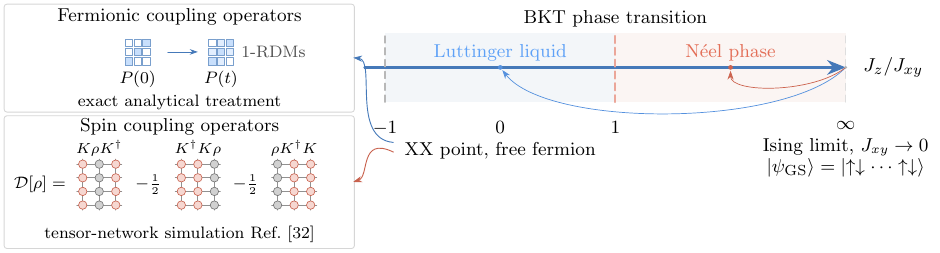}
    \caption{Part of the phase diagram of the one-dimensional XXZ model. The blue region denotes the critical, gapless Luttinger-liquid phase and the orange region denotes the gapped antiferromagnetic N\'eel phase. The transition occurs at $J_z/J_{xy}=1$. At the XX point ($J_z=0$), the model can be mapped to a free-fermion system, which allows us to evaluate two 
    different types of coupling operators: the  spin coupling operators used in \cref{sec:J1J2} and the fermionic coupling operators used in \cref{sec:Kitaev}.  The fermionic coupling operators can be treated analytically using quasi-free methods, whereas the spin coupling operators are simulated numerically using tensor-network methods introduced in \cite{ZhanDingHuhnEtAl2025}.}\label{fig:xxz_phasediagram}
\end{figure*}

\begin{figure} 
    \centering
    \begin{subfigure}{1.00\linewidth}
    \overlaption[fig:dissipative_XX_snapshots]{\includegraphics[width=\linewidth]{{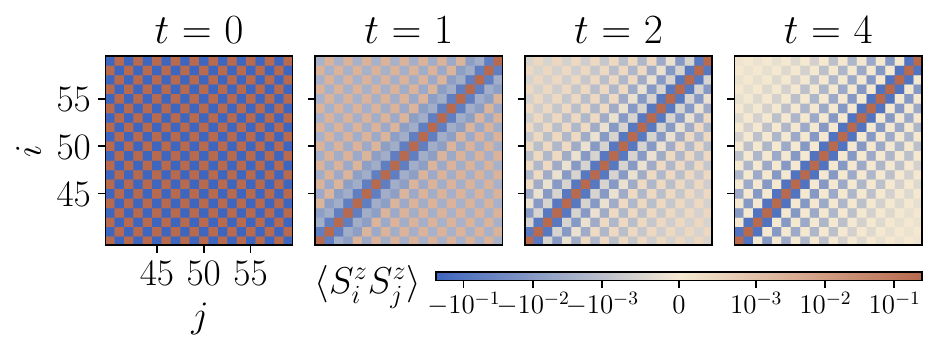}}}
    \end{subfigure}
    \hfill
    \begin{subfigure}{0.48\linewidth}
    \overlaption[fig:dissipative_XX_fit]{\includegraphics[width=\linewidth]{{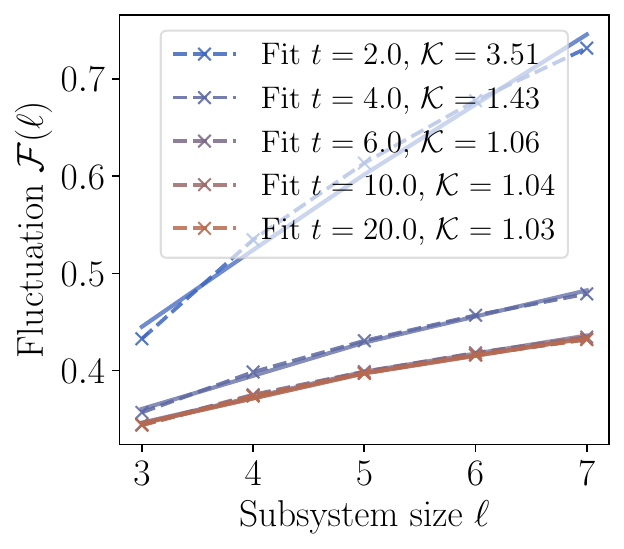}}}
    \end{subfigure}
    \hfill
    \begin{subfigure}{0.48\linewidth}
    \overlaption[fig:dissipative_XX_trace]{\includegraphics[width=\linewidth]{{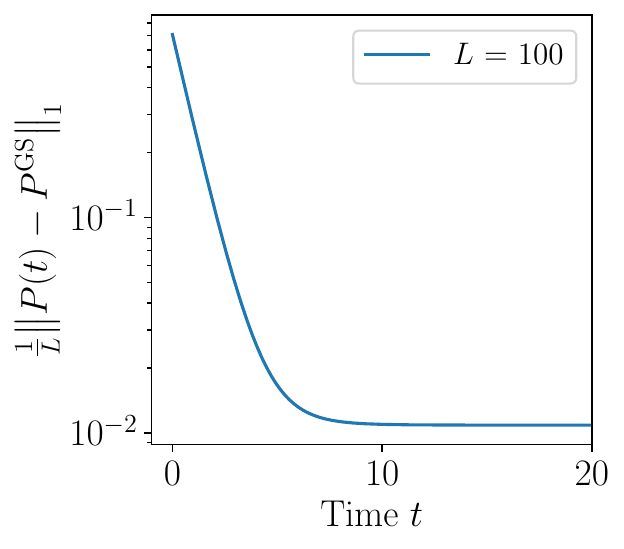}}}
    \end{subfigure}
    \caption{The dissipative cooling dynamics of the XX model with $(J_{xy}, J_z)= (1, 0)$  in \cref{eq:XXZ}, using fermionic coupling operators. The results are obtained by tracking the one-particle reduced density matrix (1-RDM) in quasi-free dynamics with $L=100$. {\bf (a)}  Snapshots of the spin correlation function $\langle S^z_iS^z_j\rangle$ at different times along the dissipative cooling dynamics. We show the data for $i,j\in \{40,\cdots,59\}$. {\bf (b)} Fitted Luttinger parameter $\mc K$ from the subsystem bipartite $S^z$-fluctuation  $\mathcal{F}(\ell)$ along the dissipative cooling dynamics. {\bf (c)} The discrepancy in the trace norm between the 1-RDMs of the evolved state and the ground state, rescaled by the system size.}
    \label{fig:dissipative_XX}
\end{figure}

\begin{figure*}
    \centering
      \begin{subfigure}{0.31\linewidth}
      \overlaption[fig:XXZ_spin_coupling_fidelity]{\includegraphics[width=\linewidth]{{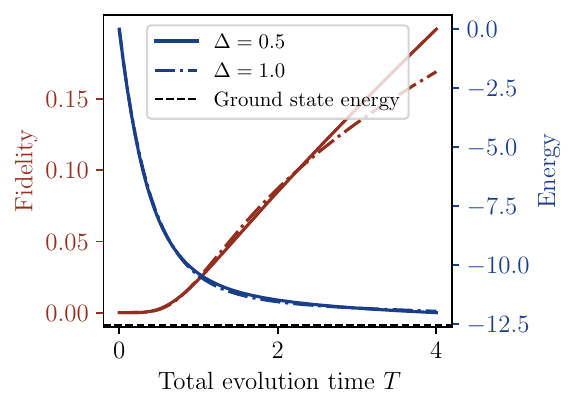}}}
    \end{subfigure}
      \begin{subfigure}{0.415\linewidth}
    \overlaption[fig:XXZ_spin_coupling_SzSz]{\includegraphics[width=\linewidth]{{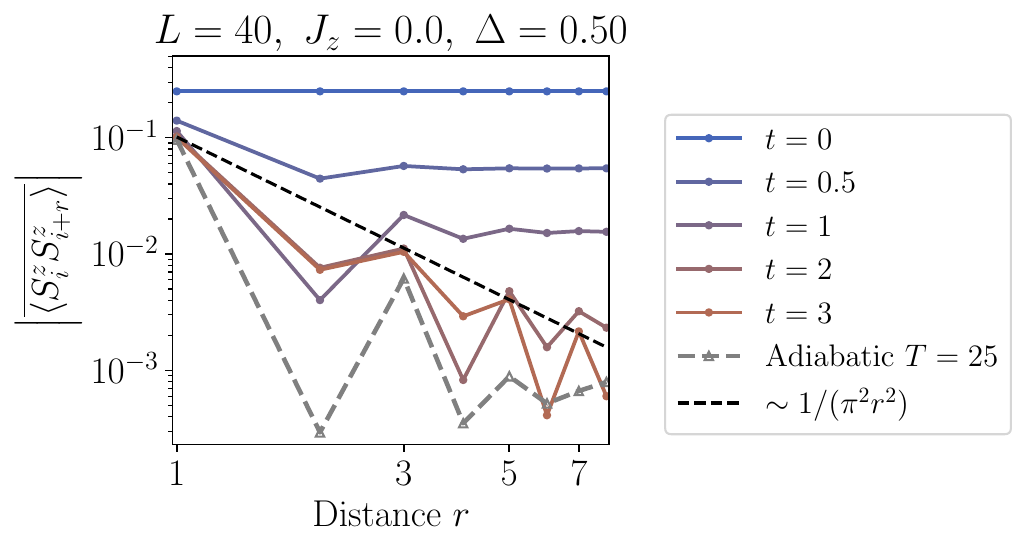}}}
        \end{subfigure}
    \hfill \begin{subfigure}{0.26\linewidth}
    \overlaption[fig:XXZ_spin_coupling_fit]{\includegraphics[width=\linewidth]{{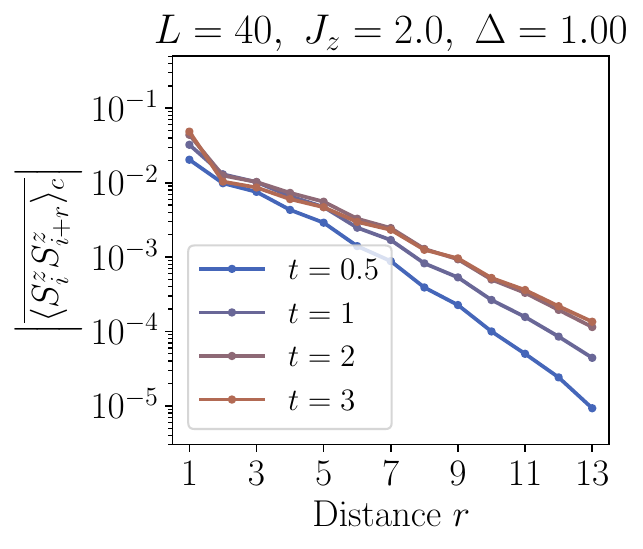}}}
    \end{subfigure}
    \caption{
    The dissipative cooling dynamics of {\bf (a, b)} the XX model  $(J_{xy}, J_z) = (1,0)$ and {\bf (c)} an antiferromagnetic gapped point $(J_{xy}, J_z) = (1,2)$ using spin coupling operators. The system size $L=40$. The results are obtained by tensor network simulations. {\bf (a)} Fidelity between the evolved state and the ground state, and energy expectation value along the dissipative cooling dynamics. {\bf (b)} Average spin correlation function $\langle S^z_iS^z_{i+r}\rangle$ at different times along the dissipative cooling dynamics for $J_z=0$. The gray dashed line stands for adiabatic-evolution results with total evolution time $T=25$. The exact result in the thermodynamic limit is shown with black dashed line for odd $r$, and is 0 for even $r$. 
    {\bf (c)} Average connected spin correlation function $\langle S^z_iS^z_{i+r}\rangle_c $ at different times along the dissipative cooling dynamics for $J_z = 2$. Note that only the $y$-axis is on a logarithmic  scale in {\bf (c)} in contrast to the log--log plot in {\bf (b)}. In this subplot we use the sharp-cutoff filter, as defined in Appendix~\ref{appendix:filter}.
    }
    \label{fig:XXZ_spin_coupling}
\end{figure*}

Our last example is the XXZ Hamiltonian  
\begin{equation}\label{eq:XXZ}
    H = \sum_{i }  \left[ J_{xy} \left(S_i^x S_{i+1}^x + S_i^y S_{i+1}^y \right)+ J_zS_i^z S_{i+1}^z \right],
\end{equation}
where $J_{xy} > 0$. The ground state of the XXZ model in one dimension is solvable via the Bethe ansatz~\cite{Yang1966XXZBethe} and possesses a rich phase diagram~\cite{Langari1998,Kitanine2002,Rakov2016}. In this study, we focus on the transition between two distinct regimes. When $J_z / J_{xy} > 1$, the system resides in a gapped antiferromagnetic N\'eel phase. In the Ising limit $J_{xy} \to 0$, one of the two degenerate ground states reduces to the product state $|\psi_{\text{ini}}\rangle = \ket{\uparrow\downarrow\cdots\uparrow\downarrow}$. It is easy to prepare experimentally and serves as our initial state. At $J_{z}/ J_{xy} = 1$, the system undergoes a BKT liquid--solid phase transition.
When $-1 < J_z / J_{xy} < 1$, the system lies in the critical, gapless XX phase, which is a gapless Luttinger liquid; {see \cref{fig:xxz_phasediagram}.}

We choose the $J_z=0$ point as our target Hamiltonian, since it allows for an exact mapping to a free-fermion system. This specific point provides a unique platform to evaluate two distinct approaches: the spin coupling operators and the fermionic coupling operators, which we have used in \cref{sec:J1J2} and \cref{sec:Kitaev}, respectively. 
Although the fermionic operators are nonlocal {in the spin representation because of} the Jordan--Wigner strings, they allow for exact analytical treatment, whose performance will later be rigorously proved in \cref{sec:free_fermion}.  
The spin coupling operators are strictly local in the spin basis, making them suitable for implementation without non-local operations.

The low-energy physics of the XX phase is also effectively captured by the TLL
theory. Within this regime, the Luttinger parameter ${\mc K}$ varies continuously as a function of the ratio $J_z / J_{xy}$, which can be analytically obtained from the Bethe ansatz solution as \cite{Giamarchi2003}
\begin{equation}\label{eq:bethe_solution}
   \mc K = \frac12 \left[1-\frac{\arccos \left(J_z / J_{xy}\right)}{\pi}\right]^{-1}.
\end{equation}

Again, the Luttinger parameter can be fitted from the subsystem bipartite $S^z$-fluctuation  $\mathcal{F}(\ell)$. In the critical regime near the phase transition point $J_z/ J_{xy}=1$, it is in principle still possible to determine the Luttinger parameter from fitting $\mc F(\ell)$ with subleading correction terms in order to identify the phases more precisely. However, as discussed in Refs.~\cite{SongRachelFlindtEtAl2012,SongRachelLeHur2010,RachelLaflorencieSongEtAl2012}, {the treatment of the subleading terms becomes considerably more subtle under OBCs in this regime.} We therefore focus here on cases deep inside the phases, where the phase structure can be identified more directly from the scaling behavior of the correlation functions in the tensor-network based simulations where OBCs are employed.

We first analyze the performance of dissipative dynamics for both choices of coupling operators, and then benchmark the performance against adiabatic evolution.

\subsubsection{Fermionic coupling operators}

Setting $J_z=0$ in \cref{eq:XXZ} gives the XX chain, a special point deep in the gapless XX phase, which can be mapped to a free-fermion Hamiltonian through the Jordan--Wigner transformation and Fourier transform:
\begin{equation}
    H = \sum_{k=1}^L \varepsilon_k \wt c_k^\dag \wt c_k,\  \varepsilon_k =
     \begin{cases}
    <0, & k \le L/2;\\
    >0, & k > L/2.
    \end{cases}
    \label{eq:diag_xxhamil}
\end{equation}
When imposing the periodic boundary condition (PBC) for the spin system \cref{eq:XXZ}, we have $\varepsilon_k = {- J_{xy}\cos\left( {2k \pi } / {L} \right)}$. 
We set the fermionic coupling operators to $\{c_i^\dagger, c_i\}_{i=1}^{L}$. {The performance of bulk dissipation with an ideal filter has been studied extensively in prior work~\cite{ZhanDingHuhnEtAl2025,LiZhanLin2025,LiLin2025}.} Later in \cref{sec:free_fermion} and Appendix \ref{appendix:free_fermion}, we will also provide a rigorous analysis of these operators in the presence of a realistic filter with constant resolution.

 Thanks to the fermionization, the quasi-free dynamics can be solved by tracking the evolution of the one-particle reduced density matrix (1-RDM), which allows simulations up to $L=100$. 
Under this setup, the Lindblad dynamics remains quasi-free and can therefore be computed analytically (see Appendix~\ref{appendix:xx_bulk} for details).  
As shown in \cref{fig:dissipative_XX_fit}, with an energy resolution of $\Delta  = 0.25$, the fitted Luttinger parameter approaches $K=1$ (\cref{eq:bethe_solution} at $J_z=0$) by $t=6$. Notably, during this evolution, the trace norm of the difference between the 1-RDMs of the exact ground state and the evolved state remains on the order of $10^{-1}$ (see \cref{fig:dissipative_XX_trace}). This again demonstrates the key advantage of our approach: identifying quantum phases only requires driving the system into a low-energy manifold where local observables converge (see \cref{sec:free_fermion} for further discussion), rather than achieving exact ground-state preparation. Consequently, the requirements for dissipative evolution time and filter resolution are significantly reduced.

\subsubsection{Spin coupling operators} 
Although fermionic operators $\{c_i^\dagger, c_i\}_{i=1}^{L}$ allow analytical treatment, they are inherently non-local in the spin representation due to Jordan--Wigner strings. For dissipative cooling to be practical on spin-based hardware, the coupling operators should instead be local spin observables, such as the site-wise {Pauli} operators $\{{\sigma_i^z= 2S_i^z}\}_{i=1}^{L}$. We impose open boundary conditions (OBC) for \cref{eq:XXZ}. 
In this setting, the resulting Lindblad dynamics is no longer quasi-free. Therefore, we rely on tensor network simulations to analyze the performance of dissipative cooling.

In \cref{fig:XXZ_spin_coupling} we present the dissipative cooling dynamics with realistic filters generated by {$\{\sigma_i^z \}_{i=1}^L$}. {We choose energy resolutions $\Delta = 1$ and $0.5$ for the realistic filters.} 
The simulations are performed using the tensor network scheme introduced in \cite{ZhanDingHuhnEtAl2025}, up to a total evolution time $t=4$. The system size $L=40$.
In these cases, the initial state still undergoes a rapid cooling phase reaching a relatively low-energy manifold by a total evolution time of $t \approx 2$, followed by a slower saturation towards a higher fidelity with the ground state, see \cref{fig:XXZ_spin_coupling_fidelity}.

{Although an evolution time of $t=4$ is insufficient for the system to enter the long-distance scaling regime needed to reliably extract the Luttinger parameter under OBC, the phase can still be characterized using local observables.}
In \cref{fig:XXZ_spin_coupling_SzSz}, we present snapshots of the average two-point correlation function $C(r) = {\overline{\langle S_i^z S_{i+r}^z \rangle}}$ for $J_z = 0$ at different times along the dissipative cooling dynamics. These results show that the initial long-range order of the N\'eel state is rapidly suppressed, while the correlations move toward the thermodynamic-limit XX-chain behavior~\cite{Lieb1961}
\begin{equation}
    C(r) = \begin{cases}
    0, & r \text{ even};\\
    -1 / {\pi^2 r^2}, & r \text{ odd}.
\end{cases}
\end{equation}
{The rapid emergence of this algebraically decaying behavior as well as the 
alternating pattern provides direct evidence that dissipative cooling can identify the XX phase.}
We see that the behavior of the correlation function can be captured by the dissipative dynamics even at relatively early stages of the evolution with coarse energy filters.
For the results of $(J_{xy}, J_z) = (1, 2)$ in \cref{fig:XXZ_spin_coupling_fit}, this target point lies in the same gapped antiferromagnetic phase as the initial product N\'eel state. In this case, the exponential decay of the connected correlation functions,
\begin{equation}
\langle S_i^z S_{i+r}^z \rangle_c = \langle S_i^z S_{i+r}^z \rangle - \langle S_i^z \rangle \langle S_{i+r}^z \rangle,
\end{equation}
persists throughout the dissipative evolution, indicating the absence of a phase transition during the dynamics.

\begin{figure}
    \centering
    \includegraphics[width=0.85\linewidth]{{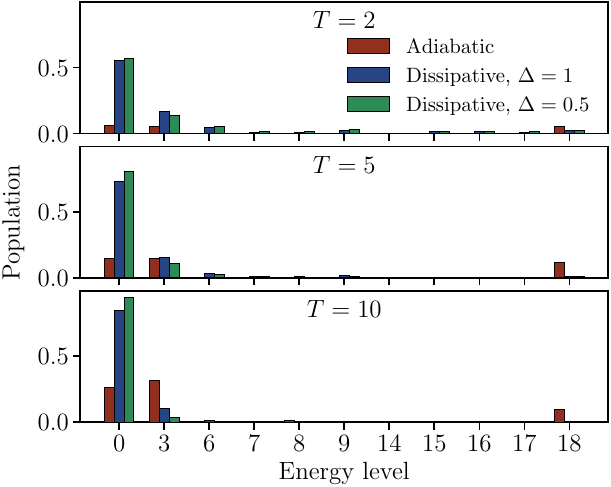}}
    \caption{
    Populations of 20 lowest energy levels for adiabatic and dissipative protocols with different total times $T$. The simulation is performed for $L=10$ using exact diagonalization. Note that we omit the levels with population smaller than $10^{-4}$ during both dynamics.}
    \label{fig:adiabatic_population}
\end{figure}

\subsubsection{Benchmark against adiabatic evolution}
We benchmark the performance of the dissipative cooling protocol using spin coupling operators against adiabatic evolution. To simulate a scenario where the system properties are not known a priori, we employ a simple linear interpolation between the initial Ising Hamiltonian and the target XX Hamiltonian 
for the adiabatic path:
\begin{equation}
    H^{\rm adiabatic}(t) = \left(1- \frac{t}{T}\right) \left(H_{0,1} - 0.1 S_1^z\right) + \frac{t}{T} H_{1,0}.
\end{equation}
Here $H_{J_{xy},J_z}$ denotes the corresponding XXZ Hamiltonian, and $t \in [0, T]$ represents the time variable across a total evolution time $T$. A magnetic field is applied to the initial Hamiltonian to lift degeneracies and ensure a unique initial ground state.

Since this simple adiabatic path crosses a quantum critical point and ends in a gapless phase, one should not expect it to reproduce the target correlations on the short timescales accessible here. As shown in \cref{fig:XXZ_spin_coupling_SzSz}, while our dissipative dynamics correctly capture the scaling of the correlation functions, adiabatic evolution fails to reproduce the polynomially decaying low-energy behavior for total evolution times up to $T=25$.

We further compare the two protocols by examining the final population distribution across the Hamiltonian eigenstates. In \cref{fig:adiabatic_population}, we present the populations of the 20 lowest-energy levels for a system of size $L=10$, obtained via exact diagonalization. Under dissipative evolution, high-energy populations are exponentially suppressed, enabling extraction of ground-state properties at moderate evolution times.
{This suppression can be systematically enhanced by using a finer energy  resolution $\Delta$ at longer total evolution times}, whereas the short-time behaviors of the two resolutions are interestingly close. We will provide a qualitative explanation of this behavior later in \cref{sec:free_fermion}.
Notably, for both cases, the ground-state population exceeds $0.5$ as early as $T=2$.
In contrast, adiabatic evolution retains significant populations in higher energy states even at $T=10$, with the ground-state population remaining below $0.2$.

\section{Low-energy state preparation with realistic filters}\label{sec:theory}

The success of phase probing via dissipative dynamics with realistic filters depends on whether the dynamics prepares low-energy states. In this section, we analyze this question using a drift inequality and three representative regimes. The drift inequality is a Lyapunov-type operator bound that controls the steady-state population at high energies. We first apply this method to free fermions, where the mode decomposition yields a sharp cutoff at the filter resolution and exponential convergence into the corresponding low-energy subspace. We then study free bosons, where the ladder structure of each mode prevents such a strict cutoff and instead yields an exponentially decaying high-energy tail. Proofs are given in Appendix~\ref{appendix:theory}, and additional numerical results appear in Appendix~\ref{appendix:free_and_interacting_MoreNumerical}. Finally, we discuss interacting systems, where local couplings no longer preserve independent mode occupations and numerical results indicate a population profile resembling the bosonic leakage mechanism.

\begin{figure} 
    \centering
    \begin{subfigure}[b]{0.5\linewidth}
        \overlaption[fig:imperfect_ff]{\includegraphics[width=\linewidth]{{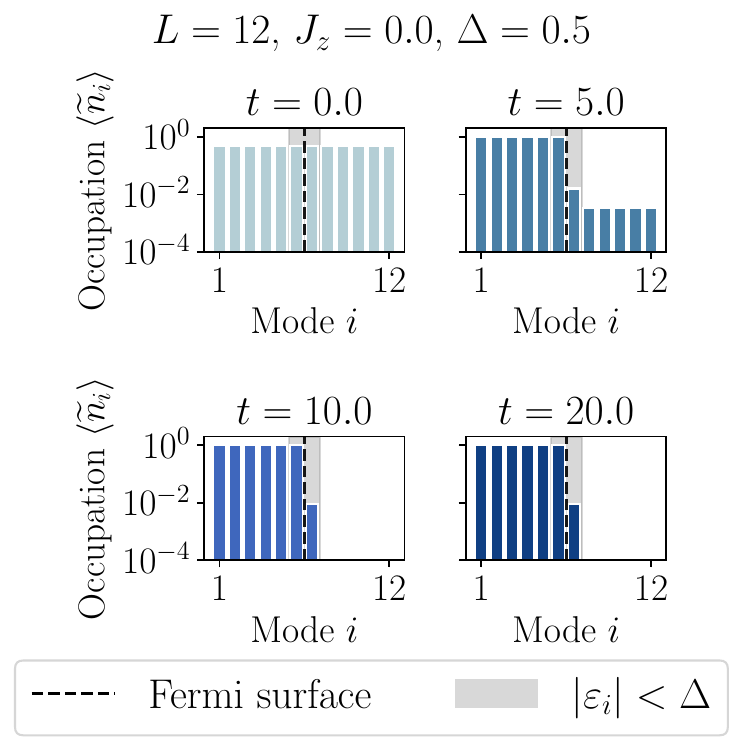}}}
    \end{subfigure}
    \hfill
    \begin{subfigure}[b]{0.47\linewidth}
        \overlaption[fig:imperfect_fb]{\includegraphics[width=0.95\linewidth]{{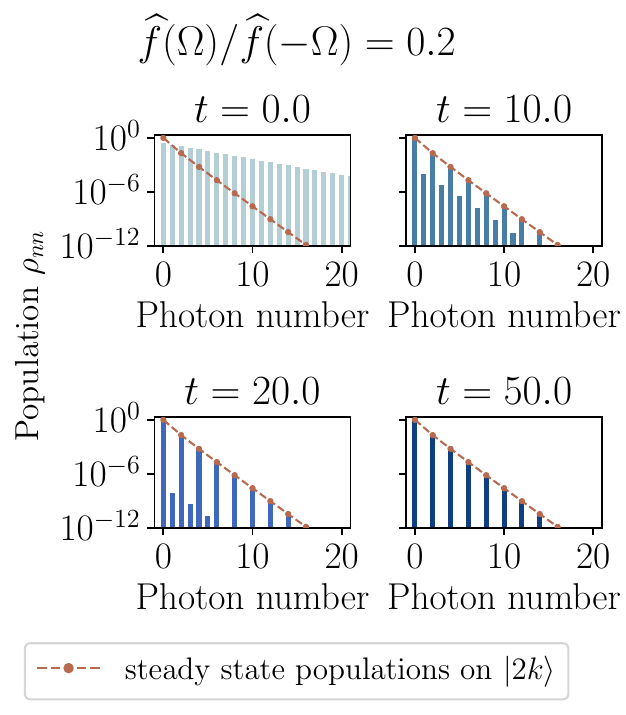}}}
    \end{subfigure}
    \caption{\label{fig:imperfect_ff_fb} Representative evolution of \textbf{(a)} canonical occupation numbers for free fermions, and of \textbf{(b)} state populations for free bosons.}
\end{figure}

\subsection{Drift inequality}

We use a drift-inequality framework to control low-energy state preparation and convergence for dissipative dynamics with realistic filters. The key step is to construct a positive semidefinite drift operator $\mathsf G = \sum_k \mathfrak g_k \dyad{\psi_k}$ that is diagonal in the energy eigenbasis $\{\ket{\psi_k}\}$ of the Hamiltonian $H$, and to establish an operator inequality of the form
\begin{equation}
  \mc L^\dag[\mathsf G] \le -a \mathsf G + b I, 
\end{equation}
for constants $a>0$ and $b\ge 0$. This inequality gives a bound for $\Tr[\mathsf G\rho(t)]$, and the operator Markov's inequality then bounds the steady-state population above an energy threshold $E$ by
\begin{equation}
    \Tr[\mc P_{>E} \rho_{\rm ss}] \le \frac{b}{a \mathfrak g_E},\quad \text{if}\quad \mathfrak g_E = \inf_{k:\lambda_k > E} \mathfrak g_k > 0.
\end{equation}
Here, $\mc P_{>E} := \mathbf 1_{(E,\infty)}(H)$ denotes the spectral projector onto the subspace of states with energy above $E$, and $\lambda_k$ is the energy of the eigenstate $\ket{\psi_k}$. 
The choice of $\mathsf G$ is system-dependent. 
For free fermionic systems, we use the canonical number operator $N_\Delta$ that counts quasiparticle modes with excitation energies above the filter resolution $\Delta$. For single-mode free bosonic systems, we instead use an exponential moment of the Hamiltonian, $e^{\beta H}$, with a parameter $\beta>0$ chosen below the stability threshold. Lemma~\ref{lem:drift} in Appendix~\ref{appendix:theory} gives the finite-dimensional version of the argument, while the bosonic tail bound uses the corresponding unbounded drift estimate proved in Appendix~\ref{appendix:boson}.

\subsection{Free fermions}
\label{sec:free_fermion}

We first consider a generic quadratic fermionic Hamiltonian
\begin{equation}
    H = \sum_{i,j=1}^L F_{ij} c_i^\dag c_j
    \label{eq:hamil_fermion}
\end{equation}
where $F$ is a Hermitian coefficient matrix. We choose the coupling operators to be 
\begin{equation}
    \{c_i^\dag,c_i\}_{i=1}^{L}.
    \label{eq:coup:ferimon}
\end{equation} We assume that the coefficient matrix can be unitarily diagonalized as $F =  U \text{diag}(\varepsilon_1, \ldots, \varepsilon_L) U^\dagger$. 
The Hamiltonian can then 
be diagonalized as 
\begin{equation}
    H = \sum_{k = 1}^{L} \varepsilon_k \wt{c}^{\dagger}_k \wt{c}_k,\quad \wt{c}_k = \sum_{j=1}^L U_{jk} c_j,
\end{equation}
and the ground state energy is given by $E_0 = \sum_{\varepsilon_k<0} \varepsilon_k$.

\begin{thm}[Low-energy state preparation in free-fermion dissipative dynamics, informal]
    \label{thm:free_fermion}
  Consider the free-fermion Hamiltonian \cref{eq:hamil_fermion} and coupling operators \cref{eq:coup:ferimon}. Given a realistic filter with resolution $\Delta$ and any initial state, the total population $\Tr(\mc P_{>E_\Delta}\rho(t))$ in the high-energy subspace above $E_\Delta := E_0 + \sum_{|\varepsilon_k| \le \Delta} |\varepsilon_k|$ decays exponentially in time. Consequently, the state becomes close, in trace norm, to its projection onto the subspace with energy at most $E_\Delta$. Moreover, the convergence rates of the energy and occupation numbers are exponential and independent of the system size.
\end{thm}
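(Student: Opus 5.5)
We exploit the quasi-free structure. Because the Hamiltonian is quadratic, the Heisenberg-evolved coupling operator satisfies $e^{\I Hs}c_j e^{-\I Hs}=\sum_k U_{jk}^{*}e^{-\I\varepsilon_k s}\wt c_k$, up to the phase convention fixed by $\wt c_k=\sum_j U_{jk}c_j$. The jump operators are therefore linear in the normal modes:
\begin{equation*}
K_{j,-}=\sum_k U_{jk}^{*}\hat f(-\varepsilon_k)\wt c_k,\qquad K_{j,+}=\sum_k U_{jk}\hat f(\varepsilon_k)\wt c_k^\dagger .
\end{equation*}
We sum over $j$ and use unitarity of $U$. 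The dissipator then splits into independent single-mode dissipators. Mode $k$ has an annihilation rate $\gamma_k^-=|\hat f(-\varepsilon_k)|^2$ and a creation rate $\gamma_k^+=|\hat f(\varepsilon_k)|^2$. Cross terms $\wt c_k\rho\wt c_{k'}^\dagger$ with $k\neq k'$ are not removed by this summation, but they act only on coherences and drop out when we compute $\mc L^\dagger$ of diagonal number operators. For each mode we pass to the particle--hole transformed operator $\wt n_k:=\wt c_k^\dagger\wt c_k$ if $\varepsilon_k>0$ and $1-\wt c_k^\dagger\wt c_k$ if $\varepsilon_k<0$, so that $\wt n_k$ counts excitations. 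The first step is to compute
\begin{equation*}
\mc L^\dagger[\wt n_k]=-(\gamma_k^{\rm cool}+\gamma_k^{\rm heat})\,\wt n_k+\gamma_k^{\rm heat}.
\end{equation*}
The Hamiltonian part commutes with $\wt n_k$. For $|\varepsilon_k|>\Delta$, \cref{eq:finite_resolution_filter} gives $\gamma_k^{\rm cool}=1$ and $\gamma_k^{\rm heat}=0$.

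The drift operator is $\mathsf G=N_\Delta:=\sum_{|\varepsilon_k|>\Delta}\wt n_k$. The previous identity gives $\mc L^\dagger[N_\Delta]=-N_\Delta$, which is the drift inequality with $a=1$ and $b=0$. The argument of Lemma~\ref{lem:drift} then yields $\Tr[N_\Delta\rho(t)]=e^{-t}\Tr[N_\Delta\rho(0)]\le L e^{-t}$. Next we relate this to the energy cutoff. In the eigenbasis of occupation configurations, $H-E_0=\sum_k|\varepsilon_k|\wt n_k$. If no mode with $|\varepsilon_k|>\Delta$ is excited, the energy is at most $E_0+\sum_{|\varepsilon_k|\le\Delta}|\varepsilon_k|=E_\Delta$. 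Hence $\mc P_{>E_\Delta}\le N_\Delta$ as commuting operators, since $N_\Delta\ge 1$ on the support of $\mc P_{>E_\Delta}$. It follows that $\Tr(\mc P_{>E_\Delta}\rho(t))\le\Tr[N_\Delta\rho(0)]e^{-t}$.

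The trace-norm statement follows from the gentle measurement lemma. With $P=\mc P_{\le E_\Delta}$ and $\epsilon=\Tr(\mc P_{>E_\Delta}\rho(t))$, we get $\norm{\rho-P\rho P}_1\le 2\sqrt{\epsilon}$, which is exponentially small. Occupation numbers converge mode by mode. The same single-mode equation gives $\langle\wt n_k\rangle(t)\to\gamma_k^{\rm heat}/(\gamma_k^{\rm cool}+\gamma_k^{\rm heat})$ at rate $\gamma_k^{\rm cool}+\gamma_k^{\rm heat}$. The energy $\langle H\rangle=E_0+\sum_k|\varepsilon_k|\langle\wt n_k\rangle$ converges accordingly, and the number of particles is handled in the same way.

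The main obstacle is size-independence of the rates for modes inside the window $|\varepsilon_k|\le\Delta$. There $\hat f$ may make both rates small; for example $\hat f(0)$ could be small when $\varepsilon_k\approx 0$. The plan is to use the realistic-filter construction of Appendix~\ref{appendix:filter} to show that $\hat f(\omega)^2+\hat f(-\omega)^2\ge c>0$ uniformly in $\omega$. For a monotone filter going from $1$ to $0$, one of $\hat f(\pm\omega)$ should be bounded below, for instance if $\hat f(\omega)+\hat f(-\omega)\approx 1$. If this cannot be secured, we state the energy and occupation convergence at rate $\min_k(\gamma_k^{\rm cool}+\gamma_k^{\rm heat})$ and rely only on the unconditional rate $1$ for the high-energy population. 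That rate already delivers the main claim independently of $L$, apart from the prefactor $\Tr[N_\Delta\rho(0)]\le L$, which affects only the time by $\log L$.
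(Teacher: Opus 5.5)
Your core argument is the paper's proof: the same single-mode decomposition, the drift operator $N_\Delta=\sum_{|\varepsilon_k|>\Delta}\wt n_k$ with $\mc L^\dagger[N_\Delta]=-N_\Delta$, the operator Markov bound $\mc P_{>E_\Delta}\le N_\Delta$, and a Fuchs--van de Graaf (gentle-measurement) step giving $2\sqrt{L}\,e^{-t/2}$.

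Delete the sentence claiming that cross terms $\wt c_k\rho\wt c_{k'}^\dagger$ with $k\neq k'$ survive, because it is wrong. The filter weights factor out of the $j$-sum, and $\sum_j U_{jk}^{*}U_{jk'}=\delta_{kk'}$ removes every cross term in both $\sum_j K\rho K^\dagger$ and $\sum_j K^\dagger K$. That cancellation is exactly what makes the generator a sum of commuting one-mode generators. Your fallback justification is also false. If such a term were present, its adjoint would contribute an off-diagonal piece proportional to $\wt c_{k'}^\dagger\wt c_k$ to $\mc L^\dagger[\wt n_k]$. It would then couple populations to coherences rather than drop out.

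Your ``main obstacle'' does not arise for the theorem as the paper proves it. The paper establishes the last sentence only as $\Tr(H\rho(t))\le E_\Delta+e^{-t}\sum_{|\varepsilon_k|>\Delta}|\varepsilon_k|\langle\wt n_k\rangle_0$, which follows from bounding every in-window occupation by $1$. It pairs this with $\langle\wt n_k\rangle_t=e^{-t}\langle\wt n_k\rangle_0$ for $|\varepsilon_k|>\Delta$. Both rates are exactly $1$, and no in-window rate enters.

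What you are aiming at is stronger: convergence of the in-window occupations and the energy to their stationary values. For the general filter class \cref{eq:finite_resolution_filter} that convergence has no size-independent rate, so your fallback does not deliver the claim. If $\hat f$ vanishes on a neighborhood of $0$, modes with $|\varepsilon_k|=O(1/L)$ are exactly dark. If merely $\hat f(0)=0$, their rates can tend to zero with $L$. In either case $\min_k(\gamma_k^{\rm cool}+\gamma_k^{\rm heat})$ is not bounded below uniformly in $L$. For monotone filters with $\hat f(0)>0$, such as the erf filters of Appendix~\ref{appendix:filter}, you do get $\hat f(\omega)^2+\hat f(-\omega)^2\ge\hat f(0)^2$. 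That is a filter-specific strengthening, not something the theorem needs.
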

The proof of \cref{thm:free_fermion} is deferred to Appendix~\ref{appendix:free_fermion}. Since $\Tr(\mc P_{>E_\Delta} \rho(t))$ decays exponentially, we have $\Tr(\mc P_{>E_\Delta} \rho_{\rm ss})=0$ {in the infinite-time limit}. {Thus the steady state is strictly supported on the subspace with energy at most $E_\Delta$}. The derivation uses the fact that the chosen coupling operators decouple the dynamics of the canonical fermionic modes $\wt{c}_k$. In this representation, the occupation of each canonical mode, $\langle  {n}_k \rangle = \langle \wt{c}_k^\dagger \wt{c}_k \rangle$, converges exponentially toward the steady-state value
\begin{equation}
\braket{ {n}_k}_{\text{ss}} = \frac{\hat{f}(\varepsilon_k)^2}{\hat{f}(-\varepsilon_k)^2 + \hat{f}(\varepsilon_k)^2},
\label{eq:nkss}
\end{equation}
at a relaxation rate given by $\hat{f}(-\varepsilon_k)^2 + \hat{f}(\varepsilon_k)^2$. Based on our construction of the filter, modes with $|\varepsilon_k| > \Delta$ resolve rapidly to the ground state configuration at a constant rate, while the imperfect nature of the realistic filter is confined to the low-energy window defined by the resolution $\Delta$.
For a gapped free-fermion system whose smallest quasiparticle excitation energy is $\Delta_H$, any initial state rapidly relaxes toward the ground state provided that the filter resolution satisfies $\Delta < \Delta_H$. In contrast, for a gapless system, a fixed resolution $\Delta$ guarantees cooling into a low-energy manifold rather than the exact ground state.

We numerically illustrate \cref{thm:free_fermion} in \cref{fig:imperfect_ff}, where we consider the tight-binding model with OBC
\begin{equation}\label{eq:XX_fermionized}
     H = \sum_{i=1}^{L-1} \left( c_{i+1}^\dag c_i +  c_i^\dag c_{i+1} \right),\quad \text{$L$ is even.}
\end{equation}
\cref{eq:XX_fermionized} can also be viewed as the XX Hamiltonian up to a Jordan--Wigner transformation, which is a special case of the XXZ Hamiltonian discussed in \cref{sec:XXZ}.
Its ground state is a half-filled Fermi sea. The Hamiltonian can be diagonalized with $\varepsilon_k = -2\cos\left[{k\pi}/(L+1) \right]$, which is negative when $k \le L / 2$ and positive when $k > L / 2$, and thus
the ground state is given by $\ket{\psi_0} = \prod_{k=1}^{L/2} \wt c_k^\dag \ket{0}_{\wt c}$. 
Since the resulting dissipative dynamics is quasi-free and particle-number preserving, its {1-RDM} evolves autonomously and can be solved using the techniques described in \cite{BarthelZhang2022}. In particular, the population of the $k$-th canonical mode $\langle \wt n_k\rangle$ can be directly read out from the diagonal entries of the unitary-transformed 1-RDM. 

We choose the energy resolution of the realistic filter to be $\Delta = 0.5$, which is larger than the spectral gap. The resulting steady state population closely approximates the ideal half-filling distribution, with deviations confined to a small subset of modes near the Fermi surface. These modes, whose energies $|\varepsilon_k|$ fall within the filter's resolution window $\Delta$, are highlighted by the gray shaded regions in \cref{fig:imperfect_ff}.

The total leakage energy $E_{\rm exc}$ of this gapless free fermion Hamiltonian can be bounded by
\begin{equation}
    \begin{aligned}
        E_{\rm exc} & \le \sum_{|\varepsilon_k| \le \Delta} |\varepsilon_k|  \approx \frac{L+1}{\pi} \int^{\arccos (-\frac{\Delta}{2})}_{\arccos \frac{\Delta}{2}} 2 |\cos x| \dInt x\\
        & = \frac{4(L+1)}{\pi} \left[ 1 - \sqrt{1 - (\Delta/2)^2} \right]\\
        & = \frac{L+1}{2\pi}\Delta^2 + \varO((L+1)\Delta^4).
    \end{aligned}
\end{equation}
It is well established that an extensive excitation energy $E_{\rm exc}$ corresponds to an effective inverse temperature $\beta \propto \sqrt{L / E_{\rm exc}} = \mathcal{O}(1 / \Delta)$, as derived from the Sommerfeld expansion~\cite{ashcroft1976solid}. Generally, approximating a ground state with a thermal state at inverse temperature $\beta$ requires $\beta = \Omega(1/\Delta_H)$ to resolve the spectral gap $\Delta_H$, which in this case closes as $\Delta_H=\varO(1/L)$. However, if the objective is restricted to estimating ground-state \emph{local} observables, these requirements are significantly relaxed; in such cases, $\beta$ need only scale polynomially with the subsystem size.

Let us consider the two-point correlation function, or the 1-RDM element $P_{ji} = \langle c_i^{\dagger} c_j \rangle$. For small $\Delta$, evaluating the discrepancy between the steady-state and ground-state expectation values yields $|\langle c_i^{\dagger} c_{i+r} \rangle_{\rm ss} - \langle c_i^{\dagger} c_{i+r} \rangle_{\rm GS}| = \varO(\Delta^2  r)$ for sites deep in the bulk of the system (see Appendix~\ref{appendix:1rdm_error} for {detailed analysis}).
Thus, to bound the error by a constant, it suffices to choose $\Delta = \varO(1 / \sqrt{r})$, or equivalently $\beta = \varO(\sqrt{r})$.
For our phase-probing application, this implies that a cooling protocol with a realistic filter can effectively estimate observables within a subsystem up to a spatial scale $r = \varO(1 / \Delta^2)$. This demonstrates that accessing an approximate low-energy subspace is sufficient for extracting the bulk signatures of underlying zero-temperature phase transitions in practical settings.
Furthermore, local rapid mixing is stable against suitable local noise perturbations~\cite{Cubitt2015localdissiaptive,Lucia2015localdissiaptive}, which supports the robustness of this mechanism in experimental implementations.

\subsection{Free bosons}\label{sec:free_bosons}
The reduction to independent two-level occupation dynamics is a special property of free fermions and does not hold in general. To investigate a more complex and representative scenario, we consider bosonic systems, where each bosonic mode contains an infinite ladder of coupled energy levels. Specifically, let us consider the following $L$-mode free bosonic Hamiltonian:
\begin{equation} \label{eq:hamil_boson}
    H = \sum_{i,j=1}^L h_{ij} a_i^\dag a_j, 
\end{equation}
where $h$ is a real-symmetric positive definite matrix whose eigenvalues $\varepsilon_i>0$ are the canonical mode frequencies. We choose the coupling operators to be
\begin{equation} \label{eq:coup_boson}
    A_i = a_i + a_i^{\dagger},\quad i = 1,2,\ldots,L.
\end{equation}
In order to discuss the effect of realistic filters, 
we assume that every relevant mode frequency satisfies $\varepsilon_i<\Delta$ and that the filter has stronger cooling than heating at these frequencies, namely $0 \le \hat{f}(\varepsilon_i) < \hat{f}(-\varepsilon_i) \le 1$.
This simple model can describe, for example, optical cavity modes coupled to external modes with both cooling and heating drives.
Since the cooling is stronger than heating, the corresponding Gaussian steady state is stable and takes the form of a squeezed thermal state when the coherent Hamiltonian contribution is included \cite{GrynbergAspectFabre2010,MarianMarian1993}. Here we use the drift inequality to control the low-energy character of the steady state. For simplicity, we first analyze a single-mode bosonic system with Hamiltonian $H= \Omega a^\dag a$ with {$\Omega>0$}.
\begin{thm}[Low-energy state preparation in free-boson dissipative dynamics, informal]\label{thm:boson_population}
    Consider the single-mode free bosonic Hamiltonian and coupling operator $A = a + a^{\dagger}$, and assume $0\le \wh f(\Omega)<\wh f(-\Omega)$. For any parameter $0<\beta < \frac{1}{\Omega}\log \frac{\wh f(-\Omega)}{\wh f(\Omega)}$, there exists a constant $C_\beta > 0$ such that for any $E>0$, the tail population of the steady state above $E$ can be bounded as
\begin{equation}
    \Tr[\mc  P_{>E} \rho_{\rm ss}] \le C_\beta e^{-\beta E}.
\end{equation}
\end{thm}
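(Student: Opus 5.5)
The plan is to apply the drift-inequality framework from the drift-inequality subsection, with the exponential moment $\mathsf G = e^{\beta H} = x^{N}$, where $N = a^\dagger a$ and $x := e^{\beta\Omega} > 1$. The first step is to compute the jump operator explicitly. Since $e^{\I Hs} a e^{-\I Hs} = a e^{-\I\Omega s}$, \cref{eq:jump_op} gives
\[
K = \hat f(-\Omega)\, a + \hat f(\Omega)\, a^\dagger =: \alpha a + \gamma a^\dagger .
\]
Here I take $\hat f$ real on these frequencies, as it is for the realistic filters of \cref{eq:finite_resolution_filter}, so that $\alpha > \gamma \ge 0$. The coherent part commutes with $\mathsf G$, so $\mc L^\dagger[\mathsf G] = K^\dagger \mathsf G K - \tfrac12\{K^\dagger K, \mathsf G\}$.

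The second step is to split $\mc L^\dagger[\mathsf G]$ into a part diagonal in the Fock basis and a cross part.

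For the diagonal part, use $a^\dagger x^N a = N x^{N-1}$ and $a x^N a^\dagger = (N+1)x^{N+1}$. This gives
\[
\mc D_{\rm diag} = x^N\bigl[\alpha^2 N (x^{-1}-1) + \gamma^2 (N+1)(x-1)\bigr].
\]
The coefficient of $N x^N$ equals $(x-1)(\gamma^2 - \alpha^2/x)$, which is strictly negative whenever $x < \alpha^2/\gamma^2$.

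The cross part comes with prefactor $\alpha\gamma$. It is a combination of $a^\dagger x^N a^\dagger$, $a x^N a$, and $\{a^2 + a^{\dagger 2}, x^N\}$, all of which shift $N$ by $\pm 2$. I will write each such term as $B^\dagger C + C^\dagger B$ with $B \propto x^{N/2} a^{(\dagger)}$-type factors. I will then bound it by $tB^\dagger B + t^{-1}C^\dagger C$ and optimize $t$. This bounds the cross part above by $\alpha\gamma\, c(x)\, N x^N + \mathcal O(x^N)$ for an explicit $c(x)$ that vanishes as $x \to 1$.

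The third step checks the sign. The total coefficient of $N x^N$ should be negative precisely when $x < \alpha/\gamma$, i.e. $\beta < \Omega^{-1}\log(\alpha/\gamma)$. Heuristically, the combined coefficient organizes as $-(\alpha/\sqrt x - \gamma\sqrt x)(\ldots)$, which explains why the threshold in the theorem is $\alpha/\gamma$ rather than the diagonal-only $\alpha^2/\gamma^2$.

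This step is the main obstacle. The Cauchy--Schwarz splitting must be done sharply enough to reach exactly the stated threshold. If it is not sharp, I would settle for a slightly smaller admissible range of $\beta$ and recover the full range by a limiting argument. Alternatively, I would use the explicit Gaussian (squeezed thermal) steady state as a cross-check.

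Once the coefficient $-\kappa N x^N$ with $\kappa > 0$ is secured, the remaining $\mathcal O(x^N)$ terms are dominated for $N \ge N_0$. This yields $\mc L^\dagger[\mathsf G] \le -a\mathsf G + bI$, where $a = \kappa$ and $b$ is the maximum of the positive remainder on the finite range $N < N_0$.

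The fourth step deals with unboundedness, since Lemma~\ref{lem:drift} is finite-dimensional. I will either truncate $\mathsf G$ to $\min(x^N, x^M)$ and pass $M \to \infty$ by monotone convergence, or verify the inequality on finite-rank states in the domain and use that $\rho_{\rm ss}$ is Gaussian, hence has all exponential moments below the stability threshold. Stationarity then gives $0 = \Tr[\rho_{\rm ss}\mc L^\dagger[\mathsf G]] \le -a\Tr[\mathsf G\rho_{\rm ss}] + b$, so $\Tr[\mathsf G\rho_{\rm ss}] \le b/a$. Finally, operator Markov with $\mathfrak g_E = e^{\beta E}$ gives $\Tr[\mc P_{>E}\rho_{\rm ss}] \le (b/a)e^{-\beta E}$, so $C_\beta = b/a$.
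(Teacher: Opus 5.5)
Your plan is essentially the paper's proof: the paper takes $\mathsf F_\beta=e^{\beta\Omega N}$, computes the diagonal and $N\pm 2$ Fock matrix elements of $\mc L^\dagger[\mathsf F_\beta]$, and bounds the off-diagonal part by $2|uv|\le|u|^2+|v|^2$ in the frame normalized by $\mathsf F_\beta^{-1/2}$. This is the same leading-order estimate your split gives with $B=a^\dagger x^{N/2}$, $C=x^{N/2}a$, $t=x^{-1/2}$; it yields the coefficient $\frac{(x-1)(\alpha+\gamma)}{x}(\gamma x-\alpha)$ of $N x^N$, so the step you flagged closes at exactly $x<\alpha/\gamma$ and no fallback is needed. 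The paper then applies the finite-dimensional drift lemma directly to the unbounded $\mathsf F_\beta$, so your care there goes beyond it; but use your Gaussian-moment option rather than truncation, because $\mc L^\dagger[\min(x^N,x^M)]$ has zero expectation in every vector supported on Fock levels $n\ge M+1$, where $-a\mathsf G+bI$ is negative once $x^M>b/a$, so the truncated drift inequality fails.
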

The proof of the above theorem is deferred to \cref{thm:boson-drift} in Appendix~\ref{appendix:boson}.
In contrast to the free-fermion case, the free-boson steady state does not exhibit a sharp energy cutoff; instead, the total population above some energy threshold $E$ decays exponentially in $E$. The reason is that the jump operator $K=\wh f(-\Omega)a+\wh f(\Omega)a^\dag$ couples adjacent Fock levels of the same bosonic mode, so filter leakage propagates through the full Fock ladder rather than remaining confined to a finite set of two-level occupations.

We also obtain trace-distance convergence bounds to the steady state.
In the purely dissipative case, i.e. $\mc L = \mc D$ in \cref{eq:lindblad}, the steady state is a pure state known as the \emph{squeezed vacuum state}. For Gaussian initial states, the covariance-matrix analysis can include the coherent contribution $\mc L_H$, in which case the steady state becomes a \emph{squeezed thermal state}.

\begin{prop}
    \label{prop:free_boson}
{Consider the free-boson Hamiltonian in \cref{eq:hamil_boson} with the coupling operators defined in \cref{eq:coup_boson}. Under the same realistic filter and stability assumptions as above, the following results hold.}
\begin{enumerate}
    \item If $\mc L_H = 0$ so that $\mathcal L = \mathcal D$, then any initial state with finite initial squeezed-mode occupation converges exponentially fast in trace distance to a tensor product of squeezed vacuum states. The bound has an initial-state-dependent prefactor and decay factor $e^{-\gamma t/2}$, where
\begin{equation}
    \gamma :=
    \min_{i=1,\cdots,L}
    \left[
        \widehat f(-\varepsilon_i)^2
        -
        \widehat f(\varepsilon_i)^2
    \right].
\end{equation}

    \item In the single-mode case $L=1$, the full Lindblad dynamics converges exponentially fast in trace distance from any initial Gaussian state to the Gaussian squeezed thermal steady state. The convergence occurs with decay rate $\left[\widehat f(-\Omega)^2-\widehat f(\Omega)^2\right]/2$, up to an initial-state-dependent prefactor.
\end{enumerate}
\end{prop}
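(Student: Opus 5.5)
The plan is to reduce everything to independent damped squeezed modes by an explicit Bogoliubov change of variables. Since $h$ is real symmetric, I diagonalize $h=U\,\diag(\varepsilon_1,\ldots,\varepsilon_L)U^{T}$ with $U$ real orthogonal. I then set $a_i=\sum_k U_{ik}b_k$, so that $H=\sum_k\varepsilon_k b_k^\dag b_k$.

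Using $e^{\I Hs}b_ke^{-\I Hs}=e^{-\I\varepsilon_k s}b_k$ and the convention \cref{eq:ft_filter}, the jump operators \cref{eq:jump_op} become
\begin{equation}
K_i=\sum_k U_{ik}L_k,\qquad L_k:=u_k b_k+v_k b_k^\dag,\quad u_k:=\wh f(-\varepsilon_k),\; v_k:=\wh f(\varepsilon_k).
\end{equation}
Because $U$ is real orthogonal, $\sum_i U_{ik}U_{ik'}=\delta_{kk'}$, so the cross terms cancel and $\mc D=\sum_k\mc D_{L_k}$. The stability assumption $u_k>v_k\ge0$ lets me define $d_k:=(u_kb_k+v_kb_k^\dag)/\sqrt{\gamma_k}$ with $\gamma_k:=u_k^2-v_k^2>0$. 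This is a canonical (Bogoliubov/squeezing) transformation, $[d_k,d_{k'}^\dag]=\delta_{kk'}$, and $L_k=\sqrt{\gamma_k}\,d_k$. So $\mc D$ is exactly independent amplitude damping of the squeezed modes $d_k$ at rates $\gamma_k$. Its unique steady state is the joint $d$-vacuum, which is a product of squeezed vacua of the $b_k$.

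For part 1 ($\mc L=\mc D$), I will use the squeezed-mode number $N_d=\sum_k d_k^\dag d_k$ as a Lyapunov function. The Heisenberg equation gives $\mc D^\dag[d_k^\dag d_k]=-\gamma_k d_k^\dag d_k$, hence
\begin{equation}
\Tr[N_d\rho(t)]\le e^{-\gamma t}\Tr[N_d\rho(0)].
\end{equation}
This is the drift inequality with $b=0$. Let $P_0$ be the projector onto the joint $d$-vacuum. Since $I-P_0\le N_d$, the fidelity with the vacuum obeys $1-\bra{0_d}\rho(t)\ket{0_d}\le e^{-\gamma t}\langle N_d\rangle_0$. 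The Fuchs--van de Graaf inequality against a pure state then gives a trace distance of at most $\sqrt{\langle N_d\rangle_0}\,e^{-\gamma t/2}$, which is the claimed decay factor. Some care is needed to justify the Heisenberg computation on unbounded operators. I would do this by a truncation argument, or by differentiating the generating function for initial states with finite $\langle N_d\rangle$.

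For part 2 ($L=1$, with $\mc L_H\neq0$), $H=\Omega b^\dag b$ does not commute with $d^\dag d$, so I switch to moments. The generator is quadratic, so Gaussianity is preserved and the state is determined by its first moments and covariance matrix $V$. A direct computation gives $\mc L^\dag[b]=(-\I\Omega-\gamma/2)b$ with $\gamma=u^2-v^2$, so the mean decays like $e^{-\gamma t/2}$. The covariance satisfies a Lyapunov equation $\dot V=AV+VA^T+Q$ in which the drift $A$ has eigenvalues $-\gamma/2\pm\I\Omega$. The unique stationary solution $V_{\rm ss}$ is the squeezed thermal state, and $V(t)-V_{\rm ss}=e^{At}(V(0)-V_{\rm ss})e^{A^Tt}$ decays like $e^{-\gamma t}$.

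The step I expect to be the main obstacle is converting this moment convergence into trace distance. I would use a known bound on the trace distance between Gaussian states in terms of the mean difference and $\norm{V_1-V_2}$. Such bounds are either linear in these quantities, or go through the explicit Gaussian fidelity formula followed by Fuchs--van de Graaf, which contributes a square root. Choosing the fidelity route should give a rate $\gamma/2$ with a prefactor depending on $V(0)$ and the initial mean. The details require checking that $V_{\rm ss}$ is non-degenerate (strictly mixed) so that the fidelity formula is well-conditioned along the trajectory.
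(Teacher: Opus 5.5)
Your proposal is correct and is essentially the paper's proof. Part 1 is exactly the paper's argument, using the squeezed number operators $\wt N_i=J_i^\dag J_i/(\gamma_{c,i}^2-\gamma_{h,i}^2)$ (your $d_k^\dag d_k$), the bound $I-\rho_{\rm ss}\le\sum_i\wt N_i$, and Fuchs--van de Graaf. Part 2 uses the same closed moment equations (rate $\gamma/2$ for $\langle a\rangle$, rate $\gamma$ for $\langle a^\dag a\rangle$ and $\langle a^2\rangle$).

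For the final conversion to trace distance, the paper takes your first option, the Bittel--Mele--Tirone--Lami bound. It is linear in $\norm{V(t)-V_{\rm ss}}$ and in the mean difference and needs no non-degeneracy of $V_{\rm ss}$. Your fidelity route also works. The only pure-steady-state case, $\wh f(\Omega)=0$, is covered by your part-1 argument, since then $d=b$ and $[H,d^\dag d]=0$.
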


The proofs of Proposition~\ref{prop:free_boson} are deferred to Appendix~\ref{app:additional_boson}. Quadratic dissipative bosonic systems with linear jump operators have been extensively studied~\cite{Prosen2010,BarthelZhang2022}, yet sharp results on convergence rates in trace distance remain limited. 
Our analysis uses two ingredients. The proof of the first part employs a Lyapunov-function approach for purely dissipative dynamics, while the proof of the second part uses recent trace-distance bounds for Gaussian bosonic states~\cite{ZhanDingHuhnEtAl2025,BittelMeleTironeLami2025,BittelMeleEisertLeone2025}.

To further illustrate the effect of realistic filters and validate the theoretical results, we carry out an illustrative numerical simulation. 
For simplicity, we numerically demonstrate only the purely dissipative case for a single-mode bosonic system with Hamiltonian $H = \Omega a^\dag a$ and $\Omega>0$
in \cref{fig:imperfect_fb}, where we choose the leakage strength to be $s = \wh f(\Omega) / \wh f(-\Omega) = 0.2$. 
We initialize the system in a mixed Gaussian state with a mean photon number of $\langle a^\dag a\rangle_0 = 2$, a two-photon coherence of $\langle a^2\rangle_0 = 0.01$, and a vanishing first moment $\langle a\rangle_0=0$. 
We plot the individual Fock state populations, $\rho_{n,n} = \mel{n}{\rho}{n}$, which converge to those of a squeezed pure state (red dashed lines). We observe that the free-boson steady state does not exhibit a sharp energy cutoff; instead, we observe exponentially suppressed populations in arbitrarily high-energy Fock states $\ket{2k}$, which is consistent with the theoretical discussions above.

\begin{figure}
    \centering
    \includegraphics[width=0.99\linewidth]{{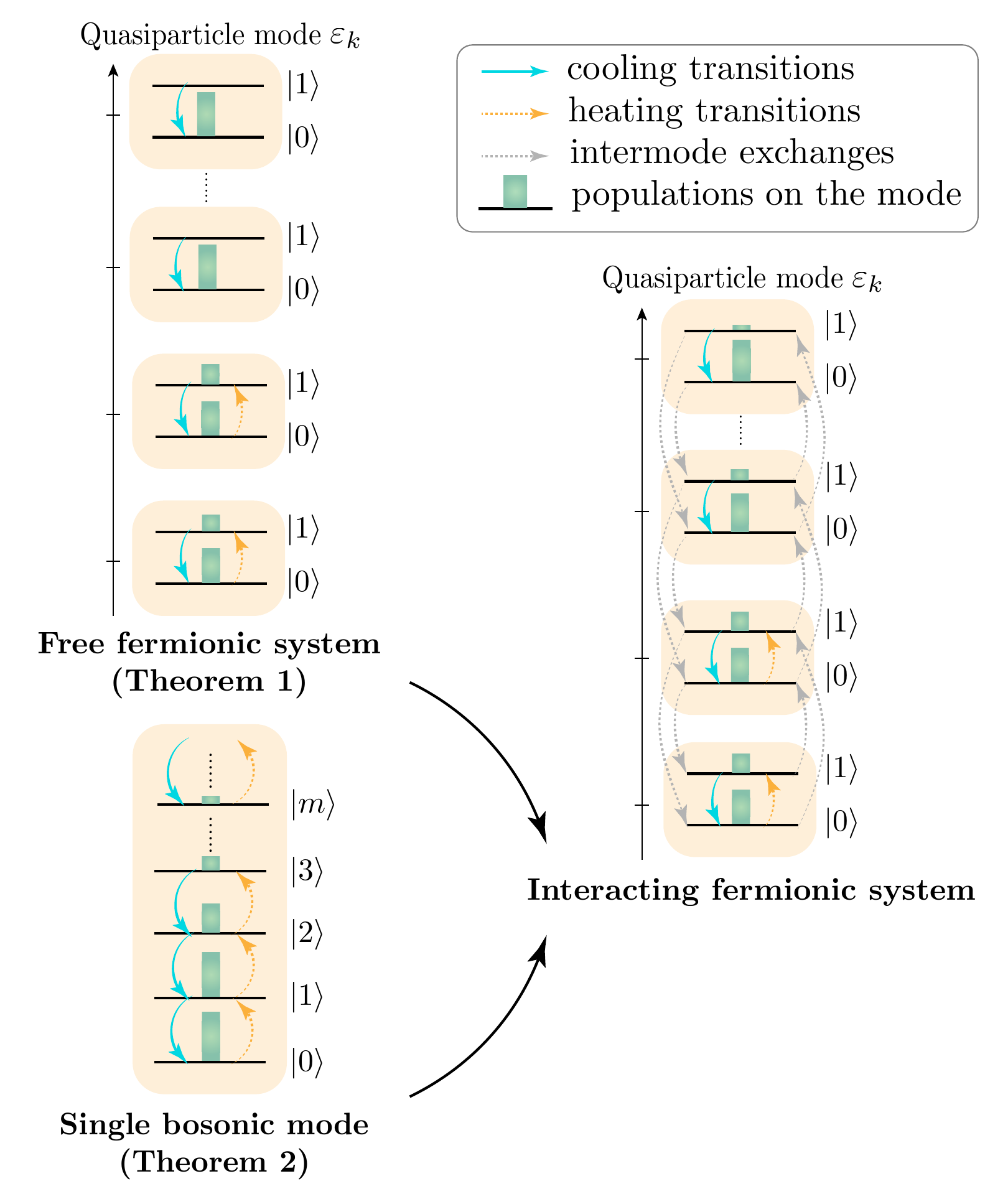}}
    \caption{Schematic illustration of the mechanism behind the steady-state population distribution under a realistic filter for free fermions, free bosons, and interacting fermions. The interacting fermionic system is expected to combine characteristics of both the free-fermion and free-boson cases. It exhibits a spillover effect slightly above the Fermi surface due to the finite filter resolution, similar to the free-fermion case, as well as intermode exchanges that couple different modes in the system. In terms of canonical occupations, these exchanges can populate modes beyond the nominal resolution window with a decaying tail, similar to the free-boson case. For clarity, only a subset of these exchanges is shown in the figure.}
    \label{fig:ff_fb_int_mechanism}
\end{figure}

\begin{figure} 
    \centering
    \begin{subfigure}[b]{0.543\linewidth}
        \overlaption[fig:imperfect_int_mode]{\includegraphics[width=\linewidth]{{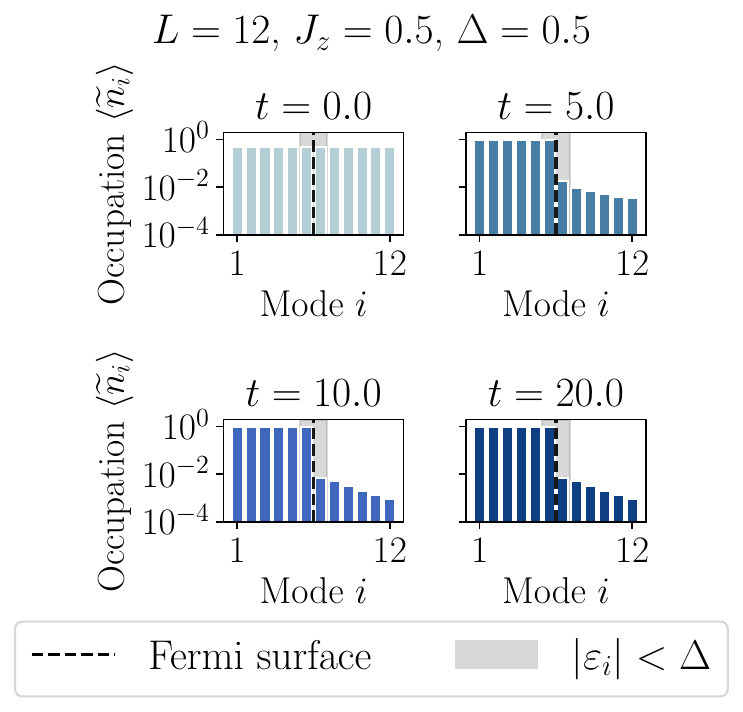}}}
    \end{subfigure}
    \hfill
    \begin{subfigure}[b]{0.437\linewidth}
        \overlaption[fig:imperfect_int_level]{\includegraphics[width=\linewidth]{{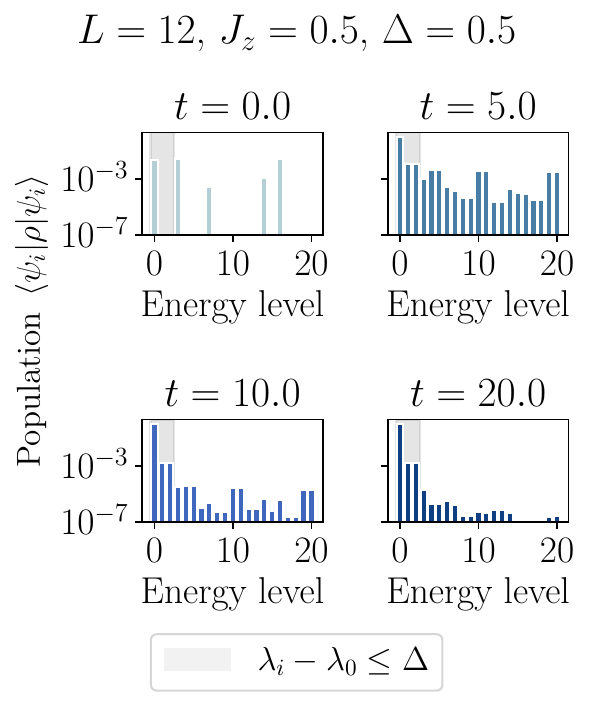}}}
    \end{subfigure}
    \caption{\label{fig:imperfect_int} 
    Representative evolution of \textbf{(a)} canonical occupation numbers and \textbf{(b)} state populations under a realistic filter for interacting fermions.}
\end{figure}

\subsection{Interacting systems}

So far, our analysis has focused on free systems. For interacting models, local coupling operators generally do not preserve the mode-by-mode decomposition that made the free-fermion dynamics exactly solvable. Instead, the same local observable can connect many-body eigenstates that differ in several single-particle occupations.
Consequently, the steady-state behavior under dissipative dynamics generated by realistic filters is expected to show leakage across the many-body spectrum. The resulting population profile can resemble the free-boson case, with a decaying tail, as schematically illustrated in \cref{fig:ff_fb_int_mechanism}.

In the drift inequality framework, we would like to choose $E_0$, and find  $\beta>0$, $a_\beta>0$, and $b_\beta<\infty$ such that
\begin{equation}\label{eq:interacting-drift-target}
    \mathsf G_\beta := e^{\beta(H-E_0)},\qquad
    \mc L^\dag[\mathsf G_\beta] \le -a_\beta \mathsf G_\beta + b_\beta I.
\end{equation}
By the drift argument above, such an estimate would imply an exponential steady-state tail,
\begin{equation}
    \Tr[\mc P_{>E}\rho_{\rm ss}] \le C_\beta e^{-\beta(E-E_0)}
\end{equation}
for energies $E>E_0$ and a constant $C_\beta$ depending on the drift constants. Intuitively, the downward transitions should still dominate upward leakage outside the filter-resolution window. But proving \cref{eq:interacting-drift-target} requires quantitative control of the filtered transition rates in the many-body energy basis and we leave this problem for future work. Below, we provide numerical evidence for this exponential-tail picture.

Let us take a simple illustrative Hamiltonian $H=H_0+J_z V$ with
\begin{equation}\label{eq:XXZ_fermionized}
    \begin{aligned}
          H_0   &= \sum_{i=1}^{L-1} \left( c_{i+1}^\dag c_i +  c_i^\dag c_{i+1} \right),\\
               V &= \sum_{i=1}^{L-1}\left( n_{i+1} - \frac{1}{2} \right) \left( n_i - \frac{1}{2} \right) .
    \end{aligned}
\end{equation}
Here $n_i = c^{\dagger}_i c_i$. This is the spinless-fermion form of the open-boundary XXZ chain, up to the standard Jordan--Wigner transformation and an additive constant.
In \cref{fig:imperfect_int_mode}, we demonstrate the dynamics of the canonical population operators $\langle \wt{n}_i \rangle$, defined with respect to the diagonal modes of the noninteracting Hamiltonian $H_0$. We set the interaction strength $J_z = 0.5$, while maintaining a filter resolution of $\Delta = 0.5$, which remains larger than the free fermion spectral gap. The data show that the noninteracting canonical occupations are no longer sharply cut off at $|\varepsilon_i|=\Delta$; instead, modes above the nominal resolution window acquire a decaying tail. This behavior is consistent with the bosonic case, where the dissipative dynamics couples an infinite ladder of levels rather than independent two-level modes.
These numerics are outside the direct scope of existing perturbative analyses for Lindblad-based thermal-state and ground-state preparation in weakly interacting fermionic systems, for
example in \cite{ZhanDingHuhnEtAl2025,TongZhan2025,ChenRouzeChenEtAl2025,SmidMeisterBertaMario2025}.

For interacting systems, populations in the many-body energy eigenbasis are a more direct diagnostic than occupations of the noninteracting canonical modes. The data in \cref{fig:imperfect_int_level} show a rapidly decaying high-energy tail whose width is controlled by the filter resolution, which is qualitatively similar to the bosonic case.

\section{Discussion}
\label{sec:discussion}

We have shown that short-time dissipative cooling can probe ground-state quantum phases without full ground state preparation. The main cost can be measured by the total Hamiltonian simulation time, which is determined by the filter resolution used to construct the jump operators and by the dissipative evolution time. In our examples, finite-resolution filters already produce informative phase-sensitive diagnostics well before global mixing, including in BKT-type and covariance-matrix based topological phase decisions. Combined with recent progress on dissipative thermalization and ground-state preparation~\cite{lloyd2025quantumthermal,langbehn2025universal,HahnSwekeDeshpandeShtanko2026,ZhanDingHuhnEtAl2025,WangDing2026,slezak2026polynomial,ChenDingZhang2026}, we argue that dissipative phase identification is a plausible target for early fault-tolerant quantum devices.

Metastability arises in many settings, including long-range interacting and periodically driven (Floquet) quantum many-body systems~\cite{YinSuraceLucas2025,Defenu2021,LiuLundgrenTitumEtAl2019,ColluraDeLucaRossiniEtAl2022}, and it has also received increasing attention in open quantum many-body systems~\cite{MacieszczakGutaLesanovskyGarrahan2016,LandaSchiroMisguich2020,MacieszczakRoseLesanovskyGarrahan2021,LiRoseGarrahanLuitz2022,SchutzJagerMorigi2016,RakovszkyGopalakrishnanKeyserlingk2024,ChenHuangPreskillZhou2024,BergamaschiChenVazirani2025}. In such regimes, short-time dissipative dynamics may still carry phase information, but long-lived nonequilibrium sectors {or local energy minima} can obstruct reliable phase identification on practical timescales. Extending the method to these settings may require quantum analogues of enhanced-sampling methods, designed to move between metastable sectors, such as parallel tempering or replica exchange~\cite{BelloRivasElber2015,LaydenMazzolaMishmashEtAl2023,LengDingChenEtAl2026,ChenBassoDingEtAl2025}.

The examples studied in this work are benchmark applications and are classically accessible at the simulated sizes and in the exactly solvable or quasi-one-dimensional regimes considered here. A natural next step is to push toward more challenging regimes, such as frustrated two-dimensional $J_1$--$J_2$ models or perturbed Kitaev honeycomb models, and to move from deep within a phase toward phase boundaries. 
It remains an open question whether classical methods for these harder regimes retain polynomial scaling in the relevant length scales, or develop prohibitive dependence on the correlation length as the phase boundary is approached. For such applications, we do not expect that a quantum advantage can be established from \emph{a priori} complexity-theoretic arguments. Instead, quantum advantages will likely be empirical and depend on the physical details of the model and the phase transition. A practical strategy is to begin with benchmark cases, then move toward phase boundaries and identify where such approaches can offer a genuine computational advantage.

\begin{acknowledgments}
    This material is based upon work supported by the U.S. Department of Energy, Office of Science, Accelerated Research in Quantum Computing Centers, Quantum Utility through Advanced Computational Quantum Algorithms, grant no. DE-SC0025572 (H.L., Y.Y., L.L.). Additional support is from the U.S. Department of Energy, Office of Science, National Quantum Information Science Research Centers, Quantum Systems Accelerator (H.L., L.L.). L.L. is a Simons Investigator in Mathematics. The authors thank Garnet Chan, Yu-Jie Liu, Kevin Stubbs and Chao Yin for insightful discussions.
\end{acknowledgments}

\bibliography{ref}

\clearpage
\newpage

\widetext
\appendix
\setcounter{secnumdepth}{3}

\section{Choice of filter functions}\label{appendix:filter}

To construct the \emph{realistic} filters used in \cref{sec:filter}, we specify the Fourier transform $\hat{f}(\omega)$ of the time-domain filter function $f(s)$ as in~\cite{DingChenLin2024}:
\begin{equation}
    \hat{f}(\omega) = \frac{1}{2} \left[\mathrm{erf}\left( \frac{\omega + a}{\Delta_a} \right) - \mathrm{erf}\left( \frac{\omega + b}{\Delta_b} \right)   \right], \quad 0 \le b < a,
\end{equation}
where
\begin{itemize}
    \item ${\rm erf}(x) := \frac{2}{\sqrt{\pi}}\int_{0}^{x} e^{-t^2} \, \dInt t$ is the error function.
    \item The parameters $a$ and $\Delta_a$ control the turn-on on the negative-frequency side. If $k > 0$ is chosen so that ${\rm erf}(k) \approx 1$, then $\hat{f}(\omega)$ starts to become nonzero near $\omega = -a-k\Delta_a$ and is already close to $1$ by $\omega = -a+k\Delta_a$. In practice we take $k=2$.
    \item The parameters $b$ and $\Delta_b$ control the turn-off near $\omega = 0$. More precisely, the second error-function factor changes from approximately $-1$ to approximately $1$ across the window $[-b-k\Delta_b,-b+k\Delta_b]$, so the cutoff is centered at $\omega=-b$ with width $\varO(\Delta_b)$.
    \item Consequently, the plateau where $\hat{f}(\omega) \approx 1$ is approximately the interval $[-a+k\Delta_a,-b-k\Delta_b]$. One should therefore choose $a-b$ larger than $k(\Delta_a+\Delta_b)$.
\end{itemize}

In the physically relevant cutoff window near $\omega=0$, $\hat{f}(\omega)$ is monotonically non-increasing. We choose $a$ sufficiently large that the left turn-on does not affect the frequencies in $[-\Delta,\Delta]$, and then use $b$ and $\Delta_b$ to set the desired cutoff. If we only require $\hat{f}(-\Delta) \approx 1$ and $\hat{f}(\Delta) \approx 0$, then the filter allows \emph{spillover}, namely a small amount of heating leakage for frequencies in $(0,\Delta)$. A convenient choice is $b = 0$ and $\Delta_b = \Delta/k$, for which $\hat{f}(\Delta) \approx 0$ while $\hat{f}(0)$ remains of order one; with $k=2$, this gives $\Delta_b = \Delta/2$, as used in \cref{fig:filter}. If instead we want a \emph{sharp cutoff} with $\hat{f}(0) \approx 0$, then we set $b = \Delta/2$ and $\Delta_b = \Delta/(2k)$, which places $\omega=0$ at the right edge of the transition window. Unless stated otherwise, we use the spillover filter in this work.

\begin{figure}[b]
    \centering
    \includegraphics[width=0.4\linewidth]{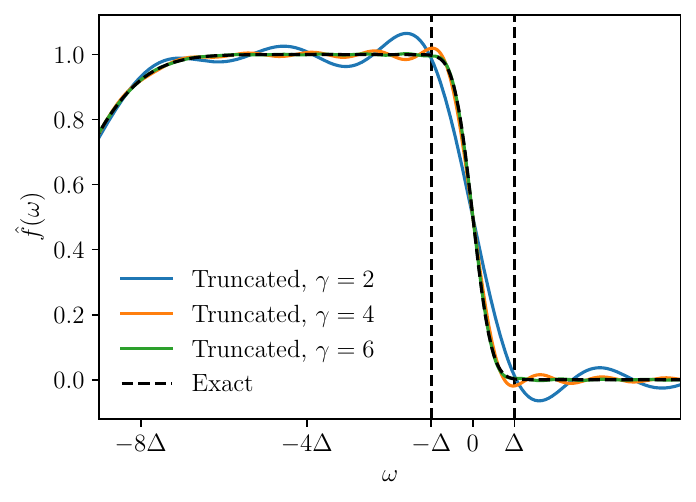}
    \caption{The ``spillover'' type filter function in frequency domain. Here $b = 0$, $\Delta_b = \Delta / 2$, $a = 10 \Delta$, and $\Delta_a = a / 5$.}
    \label{fig:filter}
\end{figure}

In the time domain, the filter is given by
\begin{equation}
    f(s) = \frac{1}{2\pi} \int_{\RR} \hat{f}(\omega) e^{-\I \omega s} \dInt \omega = \frac{e^{-\Delta_a^2 s^2/4}e^{\I a s} - e^{-\Delta_b^2 s^2/4}e^{\I b s }}{2\pi \I s}.
    \label{eq:filter_exact}
\end{equation}
The singularity at $s=0$ is removable because $\lim_{s\to 0} f(s) = (a - b)/2\pi$, hence $f(s)$ is continuous.
In practice, the jump operator is implemented by truncating the integral and subsequently applying the trapezoidal rule,
\begin{equation}
    K_a \approx \sum_{s} f(s)  e^{\I Hs} A_a e^{-\I Hs} \tau,
    \label{eq:filter_trunc}
\end{equation}
where $\tau = \varO(1 / a)$ is the time step. To resolve the structure of $\hat{f}(\omega)$ in the crossover window $[-b-k\Delta_b, -b+k\Delta_b]$, the maximal evolution time should be chosen inversely proportional to $k\Delta_b$:
\begin{equation}
    s_{\max} = \gamma /( k \Delta_b).    
\end{equation}
\Cref{fig:filter} compares the exact filter in \cref{eq:filter_exact} with the truncated filter in \cref{eq:filter_trunc} for several values of the truncation parameter $\gamma$. To suppress oscillations caused by truncation, we take $\gamma$ moderately large. For the tensor-network simulations in \cref{sec:XXZ}, we choose $\gamma = 4$.

\section{Theoretical results for dissipative dynamics with realistic filters}\label{appendix:theory}

In this appendix, we present theoretical results on dissipative dynamics with realistic filters for free fermionic and bosonic systems. Our proof relies on a drift inequality that bounds high-energy populations and the convergence rate of the system, as stated in the following lemma.
\begin{lem}\label{lem:drift}
  Let $H$ be a finite-dimensional Hamiltonian with spectral decomposition $H = \sum_k \lambda_k \dyad{\psi_k}$. 
  Let $\mathsf G :=  \sum_k \mathfrak g_k \dyad{\psi_k}$ be diagonal in the Hamiltonian eigenbasis with non-negative eigenvalues $\mathfrak g_k \ge 0$ for all $k$.
  Assume that there exist constants $a>0$ and $b\ge 0$ such that $\mathsf G$ satisfies the drift inequality
  \begin{equation}\label{eq:drift-G}
    \mathcal{L}^\dagger[\mathsf G] \le -a \mathsf G + b I.
  \end{equation}
  For any energy threshold $E$, define the projector onto the high-energy subspace as $\mc P_{> E} := \mathbf{1}_{(E,\infty)}(H) = \sum_{k:\lambda_k > E} \dyad{\psi_k}$, and $\mathfrak g_E := \inf_{k:\lambda_k > E} \mathfrak g_k$.
  If $\mathfrak g_E > 0$, then for any initial state $\rho(0)$, we have the following bound on the high-energy population at time $t$:
  \begin{equation}\label{eq:drift_markov_bound}
    \Tr[\mc P_{> E} \rho(t)] \le \frac{1}{\mathfrak g_E} \left( e^{-a t} \Tr[\mathsf G \rho(0)] + (1-e^{-a t})\frac{b}{a} \right).
  \end{equation}
  In particular, for the stationary state $\rho_{\rm ss}$, we have by taking the limit $t \to \infty$ from \cref{eq:drift_markov_bound} that
    \begin{equation}\label{eq:drift_markov_bound_ss}
        \Tr[\mc P_{> E} \rho_{\rm ss}] \le \frac{b}{a \mathfrak g_E}.
    \end{equation}
\end{lem}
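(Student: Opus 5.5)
The plan is to combine a Grönwall-type argument for the scalar function $g(t) := \Tr[\mathsf G \rho(t)]$ with an operator Markov inequality. All steps are elementary because $H$ is finite-dimensional: every operator is bounded and the Heisenberg-picture identity $\frac{\dInt}{\dInt t}\Tr[X\rho(t)] = \Tr[\mc L^\dagger[X]\rho(t)]$ holds for any fixed operator $X$.

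\emph{Step 1: differential inequality for $g$.} Differentiating and inserting the drift inequality \cref{eq:drift-G} gives
\begin{equation*}
\frac{\dInt g}{\dInt t} = \Tr\bigl[\mc L^\dagger[\mathsf G]\rho(t)\bigr] \le -a\, g(t) + b .
\end{equation*}
The inequality step uses that $\rho(t)$ is positive semidefinite with unit trace. Indeed, for self-adjoint $X \le Y$ we have $\Tr[X\rho] \le \Tr[Y\rho]$, and $\Tr[I\rho(t)] = 1$ because the Lindblad evolution preserves positivity and trace.

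\emph{Step 2: integrate.} Multiplying by $e^{at}$ gives $\frac{\dInt}{\dInt t}\bigl(e^{at}g(t)\bigr) \le b e^{at}$. Integrating from $0$ to $t$ yields
\begin{equation*}
g(t) \le e^{-at} g(0) + (1-e^{-at})\frac{b}{a}.
\end{equation*}

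\emph{Step 3: Markov step.} Since $\mathsf G$ and $\mc P_{>E}$ are simultaneously diagonal in $\{\ket{\psi_k}\}$ and $\mathfrak g_k \ge 0$, we have the operator inequality
\begin{equation*}
\mathsf G \ \ge\ \sum_{k:\lambda_k>E} \mathfrak g_k \dyad{\psi_k} \ \ge\ \mathfrak g_E\, \mc P_{>E}.
\end{equation*}
Taking the expectation in $\rho(t)$ gives $\Tr[\mc P_{>E}\rho(t)] \le g(t)/\mathfrak g_E$, and combining this with Step 2 yields \cref{eq:drift_markov_bound}.

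\emph{Step 4: the stationary bound.} If $\rho_{\rm ss}$ is stationary, we apply \cref{eq:drift_markov_bound} with $\rho(0) = \rho_{\rm ss}$. Then $\rho(t) = \rho_{\rm ss}$ for all $t$, and letting $t \to \infty$ kills the $e^{-at}$ term, giving \cref{eq:drift_markov_bound_ss}. This requires no convergence assumption on the dynamics. Alternatively, if $\rho(t) \to \rho_{\rm ss}$, one can pass to the limit directly.

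There is no real obstacle here. The only points needing care are the monotonicity of the trace pairing under the operator order, which needs positivity of $\rho(t)$, and the observation that choosing $\rho(0) = \rho_{\rm ss}$ makes the limit argument rigorous.
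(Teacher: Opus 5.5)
Your proposal is correct and follows essentially the same route as the paper: Gr\"onwall applied to $\Tr[\mathsf G\rho(t)]$ through the Heisenberg-picture derivative and the drift inequality, followed by the operator Markov inequality $\mathfrak g_E\,\mc P_{>E} \le \mathsf G$. Setting $\rho(0)=\rho_{\rm ss}$ in the last step is a small improvement on the paper's ``take $t\to\infty$'' remark, because it removes any implicit assumption that $\rho(t)$ converges to $\rho_{\rm ss}$.
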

\begin{proof}
 Set $
    \mathsf m(t):=  \Tr(\mathsf G \rho(t)).$ Since $\rho(t)$ solves the Lindblad equation, the Hilbert--Schmidt adjoint gives $
        \dv{\mathsf m(t)}{t} = \Tr(\mathcal{L}^\dagger[\mathsf G] \rho(t)). $   Since $\rho(t)$ is positive semidefinite, \cref{eq:drift-G} gives
    \begin{equation}
        \dv{\mathsf m(t)}{t} \le -a \mathsf m(t) + b.
    \end{equation}
    By Gr\"onwall's inequality, we have
    \begin{equation}
      \mathsf m(t) \le e^{-a t} \mathsf m(0) + (1-e^{-a t})\frac{b}{a}.
    \end{equation}
    By the definition of $\mathfrak g_E$, we have $\mathfrak g_E \mc P_{> E} \le \mathsf G$. This can be viewed as the operator form of Markov's inequality, which converts the moment bound to a tail bound by taking the trace against $\rho(t)$:
    \begin{equation}
        \Tr[\mc P_{> E} \rho(t)] \le \frac{1}{\mathfrak g_E} \Tr[\mathsf G \rho(t)] \le \frac{1}{\mathfrak g_E} \left( e^{-a t} \Tr[\mathsf G \rho(0)] + (1-e^{-a t})\frac{b}{a} \right).
    \end{equation}
    Taking the limit $t \to \infty$ gives the bound for the steady state \cref{eq:drift_markov_bound_ss}.
\end{proof}

\subsection{Free fermionic systems}
\label{appendix:free_fermion}

Let us consider a general quadratic fermionic Hamiltonian 
\begin{equation}
    H = \sum_{i,j=1}^L F_{ij} c_i^\dag c_j,
\end{equation}
and bulk coupling operators $\{c_i^\dag,c_i\}_{i=1}^{L}$ to construct the Lindbladian jump operators for dissipative cooling dynamics. 
The term ``bulk'' means that we apply the coupling on all sites of the system. 
We first simplify the free-fermionic Hamiltonian using the following two steps:
\begin{enumerate}
    \item Diagonalize the Hermitian matrix $F = U \diag(\varepsilon_k)_{k=1}^L U^\dag$, and choose $\wt{c}_k^\dag = \sum_{j=1}^L U_{jk} c_j^\dag $ such that
    \begin{equation}
        H = \sum_{k=1}^L \varepsilon_k \wt{c}_k^\dag \wt{c}_k.
    \end{equation}
      {We call $\wt{c}_k$ the canonical modes and assume  $\varepsilon_k \neq 0$ for all $k$. }
    \item Apply the particle--hole transformation
    \begin{equation}
     b_k := \begin{cases}
        \wt{c}_k, & \text{if } \varepsilon_k > 0,\\
        \wt{c}_k^\dag, & \text{if } \varepsilon_k < 0,
     \end{cases}
    \end{equation}
    so that each mode has positive excitation energy $\lambda_k$:
    \begin{equation}
        H  = E_0 + \sum_{k=1}^L |\varepsilon_k| b_k^\dag b_k =: E_0 + \sum_{k=1}^L \lambda_k b_k^\dag b_k = E_0 + \sum_{k=1}^L \lambda_k n_k,
    \end{equation}
    A ground state is the quasiparticle vacuum state $\ket{\psi_0} = \ket{0}_b$ satisfying $b_k \ket{\psi_0} = 0$ for all $k$, and the ground state energy is $E_0 = \sum\limits_{k:\varepsilon_k < 0} \varepsilon_k$. Here we use $n_k$ to denote the quasiparticle number operator on the $k$-th canonical mode, i.e.
    \begin{equation}
        n_k:= b_k^\dag b_k = \begin{cases}
            \wt{c}_k^\dag \wt{c}_k, & \varepsilon_k > 0,\\
            1 - \wt{c}_k^\dag \wt{c}_k, & \varepsilon_k < 0.
        \end{cases}
    \end{equation}
\end{enumerate}
   Let us denote the dissipation generator 
   with respect to the jump operator $K$ as 
  \begin{equation}
     \mc D_{K}[\rho] := K \rho K^{\dagger} - \frac12\left\{ K^{\dagger} K , \rho \right\}.
  \end{equation}

\begin{prop}\label{prop:free_fermionic_lindblad}
 For the free-fermionic system defined above, with the coupling operators set $\{c_i^\dag,c_i\}_{i=1}^L$ and a filter satisfying \cref{eq:finite_resolution_filter}, the generator can be decomposed into a sum of independent generators acting on each canonical mode
    \begin{equation}
        \mc L[\rho] = \sum_{k=1}^L \mc L_k[\rho], \quad \mc L_k[\rho] = -\I \lambda_k [n_k, \rho] + \gamma_{c,k}^2   \mathcal{D}_{b_k}[\rho] +  \gamma_{h,k}^2  \mathcal{D}_{b_k^\dag}[\rho],
    \end{equation}
    where \begin{equation}
        \gamma_{c,k}:=\hat{f}(-\lambda_k),\quad \gamma_{h, k} : = \hat{f}(\lambda_k)
    \end{equation}
represent the cooling and heating strengths for the $k$-th mode. The steady state of the dynamics has the following tensor product structure:
\begin{equation}
    \rho_{\rm ss} = \bigotimes_{k: \lambda_k > \Delta} \ketbra{0}_k \bigotimes_{k: \lambda_k \le \Delta} \left( p_k \ketbra{0}_k + (1-p_k) \ketbra{1}_k \right),\quad p_k = \frac{\gamma_{c,k}^2 }{\gamma_{c,k}^2 +\gamma_{h,k}^2 }.
\end{equation} 
Moreover, the total energy converges to a value below $E_0 + \sum_{\lambda_k \le \Delta} \lambda_k$, and $\braket{n_k}$ converges to $0$ for all modes with $\lambda_k > \Delta$, both at the rate $e^{-t}$.
\end{prop}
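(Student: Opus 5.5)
The plan is to compute the jump operators explicitly in the canonical basis, verify that the dissipator splits mode by mode, and then read off single-mode dynamics for the occupations.

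\emph{Step 1: jump operators.} Since $H$ is quadratic, the Heisenberg evolution of a single creation or annihilation operator is linear: $e^{\I Hs}\wt c_k e^{-\I Hs}=e^{-\I\varepsilon_k s}\wt c_k$. Writing $c_i=\sum_k \overline{U_{ik}}\,\wt c_k$ and applying the definition \cref{eq:jump_op} with the convention \cref{eq:ft_filter}, we obtain
\[
K_{c_i}=\sum_k \overline{U_{ik}}\,\hat f(-\varepsilon_k)\,\wt c_k,
\qquad
K_{c_i^\dag}=\sum_k U_{ik}\,\hat f(\varepsilon_k)\,\wt c_k^\dag .
\]
After the particle--hole relabeling, each term is either $\hat f(-\lambda_k)\,b_k$ (cooling) or $\hat f(\lambda_k)\,b_k^\dag$ (heating), with unimodular coefficients.

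\emph{Step 2: summing over sites.} The dissipator is quadratic in the $K$'s, so summing over $i$ produces cross terms $\sum_i \overline{U_{ik}}U_{ik'}=\delta_{kk'}$ by unitarity of $U$. Hence
\[
\sum_i \mc D_{K_{c_i}}+\mc D_{K_{c_i^\dag}}=\sum_k\Big(\gamma_{c,k}^2\,\mc D_{b_k}+\gamma_{h,k}^2\,\mc D_{b_k^\dag}\Big).
\]
The phase factors cancel within each $\mc D$ term because the operator appears together with its adjoint. The coherent part is $-\I[E_0+\sum_k\lambda_k n_k,\rho]$, and the constant $E_0$ drops out of the commutator. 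This proves the decomposition $\mc L=\sum_k\mc L_k$. Because different $\mc L_k$ involve different modes, whose quadratic combinations commute even with fermionic signs, $e^{t\mc L}$ factorizes into the per-mode semigroups.

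\emph{Step 3: occupations and steady state.} The Heisenberg generator acts on $n_k$ as
\[
\mc L^\dag[n_k]=-(\gamma_{c,k}^2+\gamma_{h,k}^2)\,n_k+\gamma_{h,k}^2 .
\]
This follows from the single-mode identities $\mc D^\dag_{b}[b^\dag b]=-b^\dag b$ and $\mc D^\dag_{b^\dag}[b^\dag b]=1-b^\dag b$, and the terms from the other modes $k'\neq k$ vanish. Therefore $\langle n_k\rangle$ relaxes exponentially to $\gamma_{h,k}^2/(\gamma_{c,k}^2+\gamma_{h,k}^2)=1-p_k$. The coherences $b_k$ likewise decay under $\mc L^\dag$, so the single-mode steady state is the diagonal state with weight $p_k$ on $\ket0$. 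For $\lambda_k>\Delta$ the filter \cref{eq:finite_resolution_filter} gives $\gamma_{c,k}=1$ and $\gamma_{h,k}=0$, so that mode relaxes to $\ket{0}$. The full steady state is then the tensor product of the single-mode steady states.

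\emph{Step 4: rates and energy.} The relaxation rate is $\gamma_{c,k}^2+\gamma_{h,k}^2$. For $\lambda_k>\Delta$ this equals $1$, which gives the stated $e^{-t}$ decay of $\langle n_k\rangle$. The energy is $\langle H\rangle=E_0+\sum_k\lambda_k\langle n_k\rangle$. The high-energy contributions decay as $e^{-t}$, and each low-energy mode satisfies $\langle n_k\rangle_{\rm ss}\le1$, so the limit of the energy is at most $E_0+\sum_{\lambda_k\le\Delta}\lambda_k$. This is the bound in the form of \cref{lem:drift}, with $\mathsf G=\sum_{\lambda_k>\Delta}n_k$, $a=1$ and $b=0$.

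The main obstacle I expect is the claim that the energy converges at rate $e^{-t}$. Low-energy modes relax at rate $\gamma_{c,k}^2+\gamma_{h,k}^2$, which can be smaller than $1$ inside the window $[-\Delta,\Delta]$. So I would state the energy result as convergence to a value below the threshold, with an excess above $E_\Delta$ bounded by $e^{-t}$ times the initial high-mode energy. For the literal convergence rate of the total energy, I would need the lower bound $\hat f(-\lambda)^2+\hat f(\lambda)^2\ge \hat f(-\lambda)^2\ge\hat f(0)^2$ from monotonicity, and I expect to phrase that part carefully. Fermionic sign issues in the factorization are the secondary technical point; I will handle them through the parity-even structure of each $\mc L_k$.
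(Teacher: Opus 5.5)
Your proposal is correct and follows essentially the same route as the paper's proof. You rewrite $c_i,c_i^\dag$ in the particle--hole-transformed canonical modes, use the Thouless relation to obtain the filtered jumps, and cancel cross terms by orthonormality of the columns of $U$; then $\mc L^\dag[n_k]=-(\gamma_{c,k}^2+\gamma_{h,k}^2)n_k+\gamma_{h,k}^2$ (which is $-n_k$ for $\lambda_k>\Delta$), together with the one-mode decay of coherences, gives the product steady state.

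Your reading of the energy claim is also exactly what the paper proves, namely $\Tr(H\rho(t))\le E_0+\sum_{\lambda_k\le\Delta}\lambda_k+e^{-t}\sum_{\lambda_k>\Delta}\lambda_k\langle n_k\rangle_0$. That is an $e^{-t}$ bound on the excess above the threshold, not a literal $e^{-t}$ rate for the energy itself.
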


\begin{proof}
In this canonical basis, the coupling operators $\{c_i^\dag,c_i\}_{i=1}^{L}$ can be expressed as
\begin{equation}
    c_i = \sum_{k=1}^L U_{ik} \wt{c}_k = \sum_{\varepsilon_k > 0} U_{ik} b_k + \sum_{\varepsilon_k < 0} U_{ik} b_k^\dag, \quad c_i^\dag = \sum_{k=1}^L U_{ik}^* \wt{c}_k^\dag = \sum_{\varepsilon_k > 0} U_{ik}^* b_k^\dag + \sum_{\varepsilon_k < 0} U_{ik}^* b_k.
\end{equation}
According to the Thouless theorem, the time evolution of the fermionic operators is given by~\cite{Thouless1960} 
\begin{equation}\label{eq:thouless_fermion}
    e^{\I Hs} b_k e^{-\I Hs} = e^{-\I \lambda_k s} b_k,\quad e^{\I Hs} b_k^\dag e^{-\I Hs} = e^{\I \lambda_k s} b_k^\dag,
\end{equation}
according to the construction in \cref{eq:jump_op}, the jump operators corresponding to the coupling operators are
\begin{equation}\label{eq:Ki+-}
    \begin{aligned}
        K_{i,+}&= \int_{-\infty}^{\infty} f(s) e^{\I Hs} c_i^\dag e^{-\I Hs} \dd s = \sum_{\varepsilon_k > 0}U_{ik}^* {\gamma_{h,k}} b_k^\dag + \sum_{\varepsilon_k < 0} U_{ik}^* {\gamma_{c,k}} b_k;\\
        K_{i,-}& = \int_{-\infty}^{\infty} f(s) e^{\I Hs} c_i e^{-\I Hs} \dd s = \sum_{\varepsilon_k > 0} U_{ik} {\gamma_{c,k}} b_k + \sum_{\varepsilon_k < 0} U_{ik} {\gamma_{h,k}} b_k^\dag,
    \end{aligned}
\end{equation}
where we denote the cooling and heating strength by
\begin{equation}
    \gamma_{c,k}:=\hat{f}(-\lambda_k),\quad \gamma_{h, k} : = \hat{f}(\lambda_k).
\end{equation}
 The {coherent} part is diagonal in the canonical basis, and the dissipative part simplifies by the orthonormality of the columns of $U$, which cancels all off-diagonal mode terms:
\begin{equation}
    \begin{aligned}
        \mc L_H[\rho] = & -\I [H, \rho] = -\I \sum_{k=1}^L \lambda_k [n_k, \rho],\\
      \sum_i  \mathcal{D}_{K_{i,+}}[\rho] 
      = & \sum_{\varepsilon_k<0} \gamma_{c,k}^2   \left(b_k \rho b_k^\dag -\frac12\{b_k^\dag b_k,\rho\}\right) + \sum_{\varepsilon_k>0} \gamma_{h,k}^2  \left( b_k^\dag \rho b_k -\frac12\{b_k b_k^\dag, \rho\} \right)\\
     = & \sum_{\varepsilon_k<0} \gamma_{c,k}^2  \mathcal{D}_{b_k}[\rho] + \sum_{\varepsilon_k>0} \gamma_{h,k}^2  \mathcal{D}_{b_k^\dag}[\rho],\\
    \sum_i \mathcal{D}_{K_{i,-}}[\rho] 
    = & \sum_{\varepsilon_k<0} \gamma_{h,k}^2  \mathcal{D}_{b_k^\dag}[\rho] + \sum_{\varepsilon_k>0} \gamma_{c,k}^2  \mathcal{D}_{b_k}[\rho].
    \end{aligned}
\end{equation}
Putting these together, the total Lindbladian can be expressed as
\begin{equation}\label{eq:lindblad_generator_free_fermion}
\mc L[\rho] = -\I \sum_{k=1}^L \lambda_k [n_k, \rho] + \sum_{k=1}^L   \gamma_{c,k}^2   \mathcal{D}_{b_k}[\rho] + \sum_{k=1}^L  \gamma_{h,k}^2    \mathcal{D}_{b_k^\dag}[\rho] =  \sum_{k=1}^L \mc L_k{[\rho]},
\end{equation}
where
\begin{equation}
    \mc L_k[\rho] := -\I \lambda_k [n_k, \rho] + \gamma_{c,k}^2   \mathcal{D}_{b_k}[\rho] +  \gamma_{h,k}^2  \mathcal{D}_{b_k^\dag}[\rho].
\end{equation}
Since each $\mc L_k$ acts only on mode $k$, the dynamics of any observable $O_k$ supported on mode $k$ is closed in the Heisenberg picture:
\begin{equation}
    \frac{\dInt O_k(t)}{\dInt t} =\mc L^\dag [O_k] = \mc L_k^{\dagger}[O_k] = \I \lambda_k [n_k, O_k] + \gamma_{c,k}^2   \mc D_{b_k}^{\dagger}[O_k] + \gamma_{h,k}^2  \mc D_{b_k^{\dagger}}^{\dagger}[O_k],
\end{equation}
where $\mc L_k^\dag$ is the Hilbert--Schmidt adjoint of the superoperator $\mc L_k$, and $\mc D_{K}^{\dagger}[O] $ is given by $K^{\dag} O K - \frac{1}{2}\left\{ K^{\dagger}K,O\right\}$.
Choosing $O_k = n_k$, we obtain
\begin{equation}\label{eq:Ddag_nk}
  \mc D^\dag_{b_k}[n_k] =  b_k^\dag n_k b_k -\frac12\{n_k,n_k\} \note{$n_k^2=n_k$}= -n_k,\quad \mc D^\dag_{b_k^\dag}[n_k] = b_k n_k b_k^\dag -\frac12\{b_k b_k^\dag, n_k\} = 1-n_k.
\end{equation}
Therefore 
\begin{equation}\label{eq:Ldag_nk}
   \mc L^\dag [n_k] =   \mc L_k^{\dag} [n_k] = \begin{cases}
        -n_k, & \lambda_k > \Delta;\\
        -n_k \left(\gamma_{c,k}^2   +  \gamma_{h,k}^2  \right) + \gamma_{h,k}^2 , &\lambda_k \le \Delta. \\
    \end{cases}
\end{equation}
If {$\gamma_{c,k}^2+\gamma_{h,k}^2>0$}, the stationary population for modes with excitation energies below $\Delta$ is
\begin{equation}
    \braket{n_k} = \frac{ \gamma_{h,k}^2  }{\gamma_{c,k}^2  + \gamma_{h,k}^2 }.
\end{equation}
For the same one-mode generator, each off-diagonal matrix element in the occupation basis decays at rate $(\gamma_{c,k}^2+\gamma_{h,k}^2)/2$ up to the Hamiltonian phase. Thus each $\mc L_k$ has a unique one-mode stationary state under the non-dark-mode condition above. Since the one-mode semigroups commute, the full steady state is the tensor product of these one-mode stationary states:
\begin{equation}\label{eq:steady_state_free_fermion}
    \rho_{\rm ss} = \bigotimes_{k: \lambda_k > \Delta} \ketbra{0}_k \bigotimes_{k: \lambda_k \le \Delta} \left( p_k \ketbra{0}_k + (1-p_k) \ketbra{1}_k \right),\quad p_k = \frac{\gamma_{c,k}^2 }{\gamma_{c,k}^2 +\gamma_{h,k}^2 }.
\end{equation}

    Finally, for modes with $\lambda_k > \Delta$, we have by \cref{eq:Ldag_nk} that 
    \begin{equation}
        \braket{n_k}_t = \Tr(n_k \rho(t)) \le e^{-t} \Tr(n_k \rho(0)),
    \end{equation}
   which converges to $0$ with an exponential rate $e^{-t}$. For the total energy
    \begin{equation}
        \Tr(H \rho(t)) = E_0 + \sum_{k}\lambda_k 
        \langle n_k \rangle_t
        \le E_0 + \sum_{k:\lambda_k \le \Delta}\lambda_k + \sum_{k:\lambda_k > \Delta}\lambda_k 
        \langle n_k \rangle_0
        e^{-t} , 
    \end{equation}
    which will converge to a value not exceeding $E_0 + \sum_{k:\lambda_k \le \Delta} \lambda_k$ with an exponential rate $e^{-t}$.

\end{proof}

With Proposition \ref{prop:free_fermionic_lindblad}, we
 are now ready to prove \cref{thm:free_fermion}. Indeed,  \cref{eq:Ldag_nk} is a drift inequality for the number operator $n_k$ on each canonical mode. Since the energy eigenstates in free fermionic systems are Fock states in the canonical basis, and since the eigenvalues are the occupation-weighted sums of the excitation energies, it is natural to construct the drift inequality for the resolved-mode
number operator $N_{\Delta} := \sum_{k: \lambda_k > \Delta} n_k$ that counts the total occupation on modes with excitation energies above $\Delta$.

  \begin{thm}[Rigorous 
    version of \cref{thm:free_fermion}] \label{thm:free_fermion_convergence} In the presence of finite-resolution imperfection \cref{eq:finite_resolution_filter} in the filter, for any initial state $\rho$, its high-energy population $\tr(\mc P_{>E_\Delta} \rho(t))$ decays with an exponential rate $e^{-t}$, where $E_\Delta : = E_0+\sum_{\abs{\varepsilon_k}\le\Delta} \abs{\varepsilon_k} =  E_0 +\sum_{k: \lambda_k \le \Delta} \lambda_k$. In particular, every stationary state $\rho_{\rm ss}$ satisfies $\Tr(\mc P_{>E_\Delta} \rho_{\rm ss}) = 0$.

    Moreover, whenever the normalized projection $\rho_{\le E_\Delta}(t)$ is defined, the state converges to its projection onto the low-energy subspace in trace distance exponentially fast.  Specifically,
$\norm{\rho(t)-\rho_{\le E_\Delta}(t)}_1
    \le
    2\sqrt{L}e^{-t/2}.$ 
\end{thm}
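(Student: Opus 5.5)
We will run \cref{lem:drift} with the drift operator $\mathsf G = N_\Delta := \sum_{k:\lambda_k>\Delta} n_k$, which is diagonal in the Fock eigenbasis of $H$ and has nonnegative integer eigenvalues. By \cref{eq:Ldag_nk} in Proposition~\ref{prop:free_fermionic_lindblad}, every mode with $\lambda_k>\Delta$ satisfies $\mc L^\dag[n_k]=-n_k$. Summing over these modes gives the exact drift identity $\mc L^\dag[N_\Delta]=-N_\Delta$. This is \cref{eq:drift-G} with $a=1$ and $b=0$.

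The next step is to compute $\mathfrak g_{E_\Delta}$. Any Fock eigenstate with $N_\Delta=0$ has energy
\[
E_0+\sum_{k:\lambda_k\le\Delta}\lambda_k n_k\le E_\Delta .
\]
Hence every eigenstate with energy above $E_\Delta$ has $N_\Delta\ge 1$, and so $\mathfrak g_{E_\Delta}\ge 1$. Degeneracies cause no trouble: $N_\Delta$ commutes with $H$, so we work in the joint Fock eigenbasis. Lemma~\ref{lem:drift} then gives
\[
\Tr(\mc P_{>E_\Delta}\rho(t))\le e^{-t}\Tr(N_\Delta\rho(0))\le L e^{-t}.
\]
Letting $t\to\infty$, or applying the same bound with $\rho(0)=\rho_{\rm ss}$ so that $\rho(t)=\rho_{\rm ss}$ for all $t$, shows $\Tr(\mc P_{>E_\Delta}\rho_{\rm ss})=0$.

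For the trace-distance statement, write $P=\mc P_{\le E_\Delta}$, $\epsilon=\Tr((1-P)\rho(t))\le Le^{-t}$, and $\rho_{\le E_\Delta}=P\rho P/(1-\epsilon)$. We apply the gentle measurement lemma: $\norm{\rho-P\rho P}_1\le 2\sqrt{\epsilon}$. One standard route is to bound the off-diagonal blocks $P\rho(1-P)$ by Cauchy--Schwarz, $\norm{P\rho(1-P)}_1\le\sqrt{\Tr(P\rho)\Tr((1-P)\rho)}$, and to bound the block $(1-P)\rho(1-P)$ by $\epsilon$.

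The normalization error is $\norm{P\rho P-\rho_{\le E_\Delta}}_1=\epsilon$. Adding the two pieces gives a bound of order $2\sqrt{\epsilon}+\epsilon$, which falls short of the stated form. The main obstacle is obtaining the clean constant $2\sqrt{L}e^{-t/2}$. The fix is the sharper gentle-measurement bound via fidelity:
\[
F\bigl(\rho,\rho_{\le E_\Delta}\bigr)\ge 1-\epsilon ,
\]
which holds for normalized projections because $\Tr(P\rho)=1-\epsilon$ (for pure states it is exact, and by purification in general). Fuchs--van de Graaf then gives
\[
\norm{\rho-\rho_{\le E_\Delta}}_1\le 2\sqrt{1-F}\le 2\sqrt{\epsilon}\le 2\sqrt L\, e^{-t/2}.
\]
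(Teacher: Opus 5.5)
Your proposal is correct and is essentially the paper's proof. Like the paper, you feed the exact drift identity $\mc L^\dag[N_\Delta]=-N_\Delta$, together with $\mc P_{>E_\Delta}\le N_\Delta$, into Lemma~\ref{lem:drift}, and then apply Fuchs--van de Graaf with squared fidelity $\Tr(\mc P_{\le E_\Delta}\rho(t))$. The only cosmetic difference is how the fidelity is obtained: you lower-bound it via Uhlmann using the purification $(\mc P_{\le E_\Delta}\otimes I)\ket{\psi}/\sqrt{p_\Delta}$, whereas the paper computes it exactly from $\sqrt{\rho}\,\rho_{\le E_\Delta}\sqrt{\rho}=(\sqrt{\rho}\,\mc P_{\le E_\Delta}\sqrt{\rho})^2/p_\Delta$. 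That computation shows equality holds for every $\rho$, not only pure states, so the gentle-measurement detour is unnecessary.
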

\begin{proof}
We first show the convergence of high-energy population.
    By \cref{eq:Ldag_nk}, we have 
    \begin{equation}
        \mc L^\dag[N_\Delta] = \sum_{k: \lambda_k > \Delta}   \mc L^\dag[n_k] = -\sum_{k: \lambda_k > \Delta}   n_k = -N_\Delta.
    \end{equation}
The canonical Fock states $\ket{\bv n} =  \prod_{k: n_k=1} b^\dagger_{k} \ket{0}_b$, with configurations $\bv n = (n_1, n_2, \cdots, n_L) \in \{0,1\}^L$, form an eigenbasis of $H$, and the corresponding eigenvalue is $E_{\bv n} = E_0+\sum_k \lambda_k n_k$. For any canonical Fock basis vector $\ket{\bv n}$ in the range of $\mc P_{>E}$ where $E\ge E_{\Delta}$, we have
 $E_{\bv n} > E_{\Delta}$ and thus there must be at least one mode $k$ such that $\lambda_k > \Delta$ and $n_k = 1$. Since both $\mc P_{>E}$ and $N_\Delta$ are diagonal in the canonical Fock basis, this gives the operator inequality
\begin{equation}\label{eq:markov_inequality_N}
     \mc P_{>E}\le N_\Delta,
    \quad \forall E\ge E_\Delta.
\end{equation}
By 
Lemma~\ref{lem:drift}, we have
\begin{equation}\label{eq:high_energy_convergence}
    \Tr(\mc P_{>E}\rho(t))
    \le
    \Tr(N_\Delta\rho(t))
    \le 
    e^{-t}\Tr(N_\Delta\rho(0)) \le Le^{-t},\quad \forall E\ge E_\Delta.
\end{equation}
Hence $\Tr[\mc P_{>E_\Delta}\rho_{\rm ss}] = 0$ for every stationary state $\rho_{\rm ss}$, which means that stationary states have no support in the high-energy subspace above $E_\Delta$.

Next we show convergence of the state to its low-energy projection in trace distance. We denote
$\mc P_{\le E_\Delta} = \mathbf 1_{(-\infty, E_\Delta]}(H) = 1-\mc P_{>E_\Delta}$ as the projection onto the low-energy subspace. 
Whenever $p_\Delta(t):=\Tr(\mc P_{\le E_\Delta}\rho(t))>0$, we define the projection of $\rho(t)$ onto the low-energy subspace as
\begin{equation}
    \rho_{\le E_\Delta}(t)
    :=
    \frac{\mc P_{\le E_\Delta}\rho(t)\mc P_{\le E_\Delta}}
    {p_\Delta(t)}.
\end{equation}
Since $
    \sqrt{\rho(t)} \rho_{\le E_\Delta}(t) \sqrt{\rho(t)}
    = \left(\sqrt{\rho(t)} \mc P_{\le E_\Delta} \sqrt{\rho(t)}\right)^2 / p_{\Delta}(t)$,  
we get that the fidelity $
    F(\rho(t),\rho_{\le E_\Delta}(t)) =
    \Tr\sqrt{
        \sqrt{\rho(t)} \rho_{\le E_\Delta}(t) \sqrt{\rho(t)} }
    = \sqrt{p_{\Delta}(t)}. $ 
By the Fuchs--van de Graaf inequality~\cite{FuchsVanDeGraaf2002} and \cref{eq:high_energy_convergence}, we have
\begin{equation}
    \begin{aligned}
        \norm{\rho(t)- \rho_{\le E_\Delta}(t) }_1 & \le 2\sqrt{1-F(\rho(t),\rho_{\le E_\Delta}(t))^2}
    = 2\sqrt{1-p_{\Delta}(t)} = 2 \sqrt{ \tr \left[\mc P_{> E_\Delta}  \rho(t) \right]} \le 2 \sqrt{L} e^{-t/2} .
    \end{aligned}
\end{equation}

\end{proof}

\subsection{Free bosonic systems}
\label{appendix:boson}

We begin with the single bosonic mode.
The Hamiltonian and the coupling operator are
\begin{equation}
    H = \Omega a^\dag a =: \Omega N~({\Omega>0} ),\quad A = a + a^\dag.
    \label{eq:hamil_coup_boson}
\end{equation}
Here 
\begin{equation}
N:= a^\dag a
\end{equation}
is the number operator. Using free-bosonic version of the Thouless theorem \cref{eq:thouless_fermion}, we have
\begin{equation}
    e^{\I Hs} a e^{-\I Hs} = e^{-\I \Omega s} a, \quad e^{\I Hs} a^\dag e^{-\I Hs} = e^{\I \Omega s} a^\dag.
\end{equation}
Therefore, the jump operator is constructed as
\begin{equation}\label{eq:jump_boson}
    K = \int_{-\infty}^{\infty} f(s) e^{\I Hs} A e^{-\I Hs} ds = \hat{f}(-\Omega) a + \hat{f}(\Omega) a^\dag.
\end{equation}
We assume a realistic filter with finite resolution $\Delta > \Omega$, i.e., $0 \le \hat{f}(\Omega) < \hat{f}(-\Omega) \le 1 $. Below, we denote the cooling and heating strength by
\begin{equation}
    \gamma_c:=\hat{f}(-\Omega),\quad \gamma_h : = \hat{f}(\Omega).
\end{equation}

Before moving on to prove \cref{thm:boson-drift}, let us first analyze the steady state properties. Since the Hamiltonian is quadratic and the jump operator is linear, the Lindblad dynamics is quasi-free, so the steady state is a bosonic Gaussian state, which is completely characterized by the moments $\langle a\rangle$, $\langle a^2\rangle$, and $\langle N\rangle = \langle a^\dag a\rangle$. These expectation values will indeed converge exponentially fast to their steady values. The relevant calculations are collected in the following proposition.
\begin{prop}
The quasi-free dynamics associated with \cref{eq:hamil_coup_boson} has a bosonic Gaussian stationary state with $\langle N\rangle_{\rm ss} = \frac{\gamma_h^2}{\gamma_c^2-\gamma_h^2}$, $\langle a^2\rangle_{\rm ss} = \frac{\gamma_c \gamma_h}{\gamma_h^2 - \gamma_c^2 - 2\I \Omega}$, and $\langle a\rangle_{\rm ss} = 0$. The convergence of these moments is exponential.
\end{prop}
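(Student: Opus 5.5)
The plan is to work in the Heisenberg picture and show that the first and second moments satisfy closed linear ODEs. Write $\mc L^\dagger[X] = \I[H,X] + \tfrac12\big(K^\dagger[X,K] + [K^\dagger,X]K\big)$, with $H=\Omega N$ and $K=\gamma_c a+\gamma_h a^\dagger$ from \cref{eq:jump_boson}. Since $K$ is linear in $a,a^\dagger$, the commutators $[X,K]$ and $[K^\dagger,X]$ lower the polynomial degree of $X$ by one. Multiplying back by $K$ or $K^\dagger$ restores it, so $\mc L^\dagger$ preserves the span of monomials of degree at most $2$. I would compute $\mc L^\dagger$ on $a$, $a^2$ and $N$ using only $[a,a^\dagger]=1$, $[N,a]=-a$ and $[N,a^\dagger]=a^\dagger$.

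The expected results are as follows.

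For $a$: $[a,K]=\gamma_h$ and $[K^\dagger,a]=-\gamma_c$, so
\begin{equation}
\mc L^\dagger[a] = \Big(-\I\Omega - \tfrac{\gamma_c^2-\gamma_h^2}{2}\Big)a .
\end{equation}

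For $a^2$: $[a^2,K]=2\gamma_h a$ and $[K^\dagger,a^2]=-2\gamma_c a$. The cross terms combine through $a^\dagger a-aa^\dagger=-1$, giving
\begin{equation}
\mc L^\dagger[a^2] = (\gamma_h^2-\gamma_c^2-2\I\Omega)\,a^2 - \gamma_c\gamma_h .
\end{equation}

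For $N$: $[N,K]=-\gamma_c a+\gamma_h a^\dagger$. The $a^{\dagger2}$ and $a^2$ terms cancel between the two halves, leaving
\begin{equation}
\mc L^\dagger[N] = -(\gamma_c^2-\gamma_h^2)N + \gamma_h^2 .
\end{equation}

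Taking expectations gives three decoupled scalar linear ODEs. Because $\gamma_c^2-\gamma_h^2>0$ by assumption, each has a strictly negative real decay rate: $(\gamma_c^2-\gamma_h^2)/2$ for $\langle a\rangle$, and $\gamma_c^2-\gamma_h^2$ for $\langle a^2\rangle$ and $\langle N\rangle$. Solving them explicitly gives exponential convergence to the fixed points. These are $\langle a\rangle_{\rm ss}=0$, $\langle a^2\rangle_{\rm ss}=\gamma_c\gamma_h/(\gamma_h^2-\gamma_c^2-2\I\Omega)$ and $\langle N\rangle_{\rm ss}=\gamma_h^2/(\gamma_c^2-\gamma_h^2)$, which are the claimed values.

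It remains to show that a Gaussian state with these moments exists and is stationary. A Lindbladian with quadratic $H$ and linear jumps maps Gaussian states to Gaussian states. Its dynamics on Gaussian states is therefore fully described by the moment ODEs above. I would take the Gaussian state $\rho_\ast$ determined by the limiting moments. Its evolution is Gaussian with moments that stay fixed, so $\rho_\ast$ is stationary.

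The main obstacle is confirming that these limiting moments define a legitimate (positive) Gaussian state. This requires checking the uncertainty condition $\langle N\rangle(\langle N\rangle+1)\ge|\langle a^2\rangle|^2$. The approach is to avoid a direct computation. Start the dynamics from the vacuum, a valid Gaussian state. Positivity is preserved along the flow, and the set of valid covariance matrices is closed. Hence the limit of its covariance, which is the fixed point above, is also valid.
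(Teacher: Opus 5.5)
Your proposal is correct and follows the paper's proof. Both compute $\mc L^\dagger$ on $N$, $a^2$ and $a$ in the Heisenberg picture, obtain the same three decoupled linear ODEs, and solve them explicitly, which gives the stated fixed points with exponential rates $\gamma_c^2-\gamma_h^2$ for $\langle N\rangle$ and $\langle a^2\rangle$ and $(\gamma_c^2-\gamma_h^2)/2$ for $\langle a\rangle$. Your extra argument that the limiting moments define a valid Gaussian stationary state goes beyond the paper, which simply asserts Gaussianity from quasi-freeness, and it is sound; you could also verify the uncertainty condition directly, since $\langle N\rangle_{\rm ss}(\langle N\rangle_{\rm ss}+1)=\gamma_c^2\gamma_h^2/(\gamma_c^2-\gamma_h^2)^2\ge\gamma_c^2\gamma_h^2/\big((\gamma_c^2-\gamma_h^2)^2+4\Omega^2\big)=|\langle a^2\rangle_{\rm ss}|^2$.
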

These properties have been extensively studied in previous works~\cite{Prosen2010,BarthelZhang2022}. For completeness, we also include the derivation here.
\begin{proof}
Similar to the fermionic case, we compute the derivatives of these observables in the Heisenberg picture. Let us denote $\braket{O}_t := \braket{O}(t) {= \Tr(\rho(t ) O)}$ for an observable $O$.

 \paragraph{The number operator $N$.}
Using the commutation relations $[N,a]=-a$, $[N,a^\dag]=a^\dag$,
we obtain that
\begin{equation}
    \begin{aligned}
    \mc D_{K}^\dag[N]
    =\frac12 K^\dag[N,K]+\frac12[K^\dag,N]K%
    =(\gamma_h^2-\gamma_c^2)N+\gamma_h^2.
    \end{aligned}
\end{equation}
Take expectation in state $\rho$, and we obtain
\begin{equation}
 \dv{\langle N\rangle}{t} = \Tr\!\left(\rho {\mc L}^{\dag}[N]\right)
 =(\gamma_h^2-\gamma_c^2)\langle N\rangle+\gamma_h^2.
\end{equation}
Its solution is
\begin{equation}\label{eq:N_boson}
   \langle N\rangle_t = \frac{\gamma_h^2}{\gamma_c^2-\gamma_h^2} + \left(\langle N\rangle_0 - \frac{\gamma_h^2}{\gamma_c^2-\gamma_h^2}\right)e^{-(\gamma_c^2-\gamma_h^2)t},\quad \langle N\rangle_{\rm ss} = \frac{\gamma_h^2}{\gamma_c^2-\gamma_h^2}.
\end{equation}

   \paragraph{The two-photon coherence $a^2$.} The coherent contribution to the Heisenberg picture derivative is
    \begin{equation}
        \I[H, a^2] = \I \Omega[a^\dag a, a^2] = \I \Omega (-2a^2) = -2\I \Omega a^2.
    \end{equation}
    For the dissipation, using the commutators $[a^2, K] = 2\gamma_h a$ and $[K^\dag, a^2] = -2\gamma_c a$, we find
    \begin{equation}
        \begin{aligned}
        \mc D_K^\dag[a^2] &= \frac12 K^\dag[a^2, K] + \frac12 [K^\dag, a^2] K 
        = (\gamma_h^2 - \gamma_c^2)a^2 - \gamma_c \gamma_h.
        \end{aligned}
    \end{equation}
    Therefore
    \begin{equation}\label{eq:b2_boson}
        \dv{\langle a^2\rangle}{t} = \Tr(\rho \mc L^\dag[a^2]) = (-2\I \Omega + \gamma_h^2 - \gamma_c^2)\langle a^2 \rangle - \gamma_c \gamma_h.
    \end{equation}
    The solution is 
    \begin{equation}\label{eq:a2_boson_ss}
        \langle a^2\rangle_t = \langle a^2\rangle_{\rm ss} + \left(\braket{a^2}_0 - \langle a^2\rangle_{\rm ss}\right) e^{-\left(\gamma_c^2-\gamma_h^2 +2\I \Omega\right)t}, \quad \langle a^2\rangle_{\rm ss} = \frac{\gamma_c \gamma_h}{\gamma_h^2 - \gamma_c^2 - 2\I \Omega}.
    \end{equation}

    \paragraph{The lowering operator $a$.}
    The coherent contribution is
    \begin{equation}
        \I [H, a] = \I [\Omega a^\dag a,a]   = \I \Omega ( a^\dag a a - aa^\dag a ) = -\I \Omega a,
    \end{equation}
    and for the dissipative part
    \begin{equation} 
        \begin{aligned}
        \mc D^\dag_K [a] 
        & =  \frac12 K^\dag [a, K] + \frac12 [K^{\dag}, a] K 
        = \frac{\gamma_h^2-\gamma_c^2}{2}a.
        \end{aligned}
    \end{equation}
    Therefore
    \begin{equation}\label{eq:b_boson}
        \begin{aligned}
        & \dv{\langle a\rangle}{t} = \Tr(\rho \mc L^\dag[a]) =  \left(-\I \Omega + \frac{\gamma_h^2-\gamma_c^2}{2}\right) \langle a\rangle 
        \end{aligned}
\end{equation}
which gives
\begin{equation}\label{eq:a_boson_ss}
        \langle a\rangle_t = \langle a\rangle_0 e^{- \left(\I  \Omega + \frac{\gamma_c^2 - \gamma_h^2}{2}\right)t},\quad \langle a\rangle_{\rm ss} = 0.
    \end{equation}
    
\end{proof}

We discuss the steady state properties of the bosonic system with realistic filters. Similar to the free fermionic case, this can also be done by constructing a suitable drift operator
and applying the argument of Lemma \ref{lem:drift} to obtain the exponential tail bound of the steady state population distribution.

\begin{thm}[Rigorous version of \cref{thm:boson_population}]\label{thm:boson-drift} Let $s: = {\gamma_h} / {\gamma_c}  = {\hat{f}(\Omega)} / {\hat{f}(-\Omega)} < 1$, and choose $\beta$ such that
\begin{equation}\label{eq:boson-beta-range}
    0<\beta< \frac{1}{\Omega} \log\frac1s,
    \qquad\text{equivalently}\qquad  s< e^{-\beta \Omega}.
\end{equation}
Then there exists a constant $C_{\beta} > 0$, such that for any $E > 0$,
\begin{equation}
    \Tr(\mc P_{>E} \rho_{\rm ss}) \le C_{\beta} e^{-{\beta E}}.
\end{equation}
\end{thm}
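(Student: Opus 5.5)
The plan is to apply the drift argument of Lemma~\ref{lem:drift}, adapted to the unbounded single-mode setting, with the exponential moment $\mathsf G_\beta := e^{\beta H} = x^{N}$, where $x := e^{\beta\Omega} > 1$. The target is an inequality
\begin{equation}
\mc L^\dagger[\mathsf G_\beta] \le -a_\beta \mathsf G_\beta + b_\beta I
\end{equation}
with $a_\beta>0$ and $b_\beta<\infty$. Since $\mathfrak g_E = \inf_{n\Omega > E} x^n \ge e^{\beta E}$, we have $\mc P_{>E} \le e^{-\beta E}\mathsf G_\beta$, and the Markov step then gives $\Tr(\mc P_{>E}\rho_{\rm ss}) \le \frac{b_\beta}{a_\beta} e^{-\beta E}$. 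So the main work is computing $\mc L^\dagger[\mathsf G_\beta]$ for $K = \gamma_c a + \gamma_h a^\dagger$ and controlling its off-diagonal part.

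\emph{Step 1 (computation).} The coherent part drops out because $\mathsf G_\beta$ commutes with $H$. I would split $\mc D_K^\dagger[\mathsf G_\beta]$ into a part diagonal in the Fock basis and a part shifting $n \to n\pm 2$.

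Using $a^\dagger x^N a = N x^{N-1}$ and $a x^N a^\dagger = (N+1)x^{N+1}$, the diagonal part is
\begin{equation}
x^N\bigl[\gamma_c^2 N(x^{-1}-1) + \gamma_h^2 (N+1)(x-1)\bigr] = x^N(1-x^{-1})\bigl[-(\gamma_c^2 - x\gamma_h^2)N + x\gamma_h^2\bigr].
\end{equation}
The cross terms $\gamma_c\gamma_h\bigl(a x^N a - \tfrac12\{a^2, x^N\}\bigr) + \mathrm{h.c.}$ have matrix elements
\begin{equation}
\langle n|\cdot|n+2\rangle = -\tfrac12\gamma_c\gamma_h\sqrt{(n+1)(n+2)}\,x^n (x-1)^2 .
\end{equation}
These are second order in $(x-1)$, whereas the diagonal drift is first order.

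\emph{Step 2 (absorbing off-diagonal terms).} For a real symmetric tridiagonal-in-steps-of-two operator $Y + Y^\dagger$ with entries $w_n$ between $|n\rangle$ and $|n+2\rangle$, the weighted AM--GM bound gives
\begin{equation}
\pm(Y+Y^\dagger) \le \sum_n |w_n|\bigl(\theta_n|n\rangle\langle n| + \theta_n^{-1}|n+2\rangle\langle n+2|\bigr).
\end{equation}
I would choose $\theta_n = x$ to balance $x^n$ against $x^{n+2}$, so the cross terms are bounded by $\gamma_c\gamma_h\,\tfrac{(x-1)^2}{2}\,c\,(N+1)x^{N+1}$ with $c$ an absolute constant. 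Adding this to the diagonal part gives a coefficient of $N x^N$ of the form
\begin{equation}
-(1-x^{-1})(\gamma_c^2 - x\gamma_h^2) + \mc O\bigl(\gamma_c\gamma_h (x-1)^2 x\bigr).
\end{equation}
This is the main obstacle. For small $\beta$ it is clearly negative, but reaching the full range $s < x^{-1}$ stated in \cref{eq:boson-beta-range} with the crude AM--GM step is doubtful.

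My first attempt would be to group $K^\dagger x^N K$ exactly: write $K = \gamma_c(a + s a^\dagger)$ and treat $\mc D_K^\dagger[x^N]$ as a quadratic form in $(a, a^\dagger)$ sandwiching $x^N$. I would then look for a sharper completion of squares in which the condition $s x < 1$ appears as positivity of a $2\times 2$ weight matrix. If that fails, I would replace $x^N$ by a slightly tilted weight $x^N p(N)$ with polynomial $p$, or accept any $\beta$ strictly below a threshold and then use monotonicity: a tail bound at a smaller $\beta'$ only weakens the exponent, so the threshold must be matched with care. Either way, the outcome should be $\mc L^\dagger[\mathsf G_\beta] \le -a_\beta \mathsf G_\beta + b'_\beta (N+1)x^N \mathbf 1_{N\le N_0}$, and the last term is bounded by a constant $b_\beta$.

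\emph{Step 3 (unboundedness).} Lemma~\ref{lem:drift} is finite dimensional, so I would justify $\frac{\dInt}{\dInt t}\Tr(\mathsf G_\beta\rho(t)) = \Tr(\mc L^\dagger[\mathsf G_\beta]\rho(t))$ by a standard truncation: use $\mathsf G_\beta \wedge M$, or initial states of finite Fock support together with the established exponential convergence of the Gaussian moments. Alternatively, apply the inequality directly to the Gaussian stationary state identified in the preceding proposition, whose squeezed-thermal Fock distribution has finite $\Tr(\mathsf G_\beta\rho_{\rm ss})$ for $\beta$ in the stated range. In that case stationarity gives $0 = \Tr(\mc L^\dagger[\mathsf G_\beta]\rho_{\rm ss}) \le -a_\beta\Tr(\mathsf G_\beta\rho_{\rm ss}) + b_\beta$, and the Markov step yields the claim with $C_\beta = b_\beta/a_\beta$.
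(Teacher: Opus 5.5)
Your Step 1 is correct; your diagonal and off-diagonal matrix elements agree with the paper's. The gap is in Step 2, and it is exactly the step that fixes the admissible range of $\beta$. Your weight $\theta_n = x$ balances in the wrong direction. The cross entry $w_n$ scales like $x^n$ and must be absorbed into the diagonal weights $x^n$ and $x^{n+2}$, so the balanced choice is $\theta_n = x^{-1}$, i.e.\ weighting by the geometric mean $x^{n+1}$.

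With $\theta_n = x$, the cross terms contribute at level $n$, to leading order, $\tfrac12\gamma_c\gamma_h(x-1)^2(x+x^{-3})\,n\,x^n$. The balanced choice gives $\gamma_c\gamma_h(x-1)^2x^{-1}\,n\,x^n$. Since $x+x^{-3}-2x^{-1}=x^{-3}(x^2-1)^2>0$, the slope of your bound at $sx=1$ is $\gamma_c^2(x-1)^2(x-x^{-1})^2/(2x^2)>0$. So your Step 2 gives the drift inequality only for $\beta$ below some $\beta^*(s)<\frac1\Omega\log\frac1s$.

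Your monotonicity fallback cannot close this. A tail bound $e^{-\beta' E}$ with $\beta'<\beta$ is strictly weaker than the claim. The other fallbacks (tilted weights, a $2\times 2$ positivity condition) are not carried out. As written, the proof does not reach the stated range.

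The fix is short, and it is what the paper does: conjugate by $\mathsf G_\beta^{-1/2}$ before applying plain AM--GM. Write $K=\gamma_c(a+sa^\dagger)$ and use $x^{N/2}a=x^{-1/2}a\,x^{N/2}$. A direct computation gives
\[
\mathsf G_\beta^{-1/2}\,\mc L^\dagger[\mathsf G_\beta]\,\mathsf G_\beta^{-1/2}=\gamma_c^2\,\frac{x-1}{x}\Bigl[-a^\dagger a+s^2x\,aa^\dagger-\frac{s(x-1)}{2}\bigl(a^2+a^{\dagger 2}\bigr)\Bigr].
\]
From $-(a-a^\dagger)^2\ge 0$ you get $-(a^2+a^{\dagger2})\le 2N+1$. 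Hence the right-hand side is at most
\[
\gamma_c^2\frac{x-1}{x}\Bigl[(1+s)(sx-1)N+s^2x+\tfrac{s(x-1)}{2}\Bigr],
\]
whose slope is negative exactly when $sx<1$, i.e.\ on the whole stated range. This is the sharper grouping you were looking for.

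From there your outline works. Bound the normalized operator by $-\kappa I+C\,\mathbf 1_{N<M}$, and conjugate back using $\mathsf G_\beta^{1/2}\mathbf 1_{N<M}\mathsf G_\beta^{1/2}\le x^{M-1}I$. Then apply the Markov step with $\mathfrak g_E\ge e^{\beta E}$. Your Step 3 treatment of unboundedness is, if anything, more careful than the paper, which simply invokes the finite-dimensional Lemma~\ref{lem:drift}.
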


\begin{proof} We choose the drift operator to be $\mathsf F_\beta := e^{\beta \Omega N}$, and denote $r_{\beta}:=e^{\beta \Omega}$. 
The diagonal matrix elements of $\mc L^\dagger[\mathsf F_\beta]$ in the Fock basis are
\begin{equation}\label{eq:boson-diag-LF}
    \langle n|\mc L^\dagger[\mathsf F_\beta]|n\rangle
    =
    r_\beta^n(r_\beta-1)\left(\gamma_h^2(n+1)-\frac{\gamma_c^2}{r_\beta}n\right).
\end{equation}
The only off-diagonal matrix elements are
\begin{equation}\label{eq:boson-offdiag-LF}
    \langle n+2|\mc L^\dagger[\mathsf F_\beta]|n\rangle
    =
    \langle n|\mc L^\dagger[\mathsf F_\beta]|n+2\rangle
    =
    -\frac{\gamma_c\gamma_h}{2}(r_\beta-1)^2r_\beta^n\sqrt{(n+1)(n+2)}.
\end{equation}
\cref{eq:boson-diag-LF,eq:boson-offdiag-LF} follow from inserting
the action of the jump operator $K$ on the Fock states
\begin{equation}
    K\ket n=\gamma_c\sqrt n\ket{n-1}+\gamma_h\sqrt{n+1}\ket{n+1}.
\end{equation}
Define the normalized drift
\begin{equation}
    \wt{\mathsf F_\beta} :=\mathsf F_\beta^{-1/2}\mc L^\dagger[\mathsf F_\beta]\mathsf F_\beta^{-1/2}.
\end{equation}
By $\mathsf F_\beta\ket{n} = {r_\beta^n}\ket{n}$, the diagonal and off-diagonal elements of $\wt{\mathsf F_\beta}$ are given by \cref{eq:boson-B-diag,eq:boson-B-offdiag}:
\begin{equation}\label{eq:boson-B-diag}
    (\wt{\mathsf F_\beta})_{n,n}
    =
    (r_\beta-1)\left(\gamma_h^2(n+1)-\frac{\gamma_c^2}{r_\beta}n\right),
\end{equation}
\begin{equation}\label{eq:boson-B-offdiag}
    (\wt{\mathsf F_\beta})_{n+2,n}=(\wt{\mathsf F_\beta})_{n,n+2}
    =
    -\frac{\gamma_c\gamma_h}{2r_\beta}(r_\beta-1)^2\sqrt{(n+1)(n+2)}.
\end{equation}
Set
\begin{equation}
    c_n:=\frac{\gamma_c\gamma_h}{2r_\beta}(r_\beta-1)^2\sqrt{(n+1)(n+2)} \ge 0, \quad 
    c_{-2}=c_{-1}=0.
\end{equation}
For any $\ket{\varphi}=\sum_n\varphi_n\ket n\in \ell^2(\mathbb{N})$, the inequality $2|uv|\le |u|^2+|v|^2$ gives
\begin{equation}\label{eq:boson-quadratic-form}
    \langle \varphi|\wt{\mathsf F_\beta}|\varphi\rangle
    \le
    \sum_{n\ge0}A_n|\varphi_n|^2,
\end{equation}
\begin{equation}\label{eq:boson-An}
    \begin{aligned}
    A_n&=(\wt{\mathsf F_\beta})_{n,n}+c_n + c_{n-2} \le \left[(r_\beta-1)\left(\gamma_h^2 -\frac{\gamma_c^2}{r_\beta}\right) + \frac{\gamma_c\gamma_h}{r_\beta}(r_\beta-1)^2  \right] n + (r_\beta-1) \gamma_h^2 + \frac{\gamma_c\gamma_h}{r_\beta}(r_\beta-1)^2\\
    & = \frac{\gamma_c^2(r_\beta-1)}{r_\beta} (s+1) (r_\beta s-1) n + (r_\beta-1) \gamma_h^2 + \frac{\gamma_c\gamma_h}{r_\beta}(r_\beta-1)^2 =: q_\beta n + d_\beta.
    \end{aligned}
\end{equation}
By \cref{eq:boson-beta-range}, $q_\beta<0$ and $d_\beta >0$. For any fixed $\kappa >0$, there exists $M = \lceil -(d_\beta+\kappa    )/ q_\beta \rceil$ such that $A_n\le -\kappa$ for all $n\ge M$. Therefore, if we define  
\begin{equation}
    C:= \max\{0, \max_{0\le n<M} (A_n+\kappa)\} \le d_\beta + \kappa<\infty,
\end{equation}
then for every $n$, we have by definition that $A_n\le -\kappa + C \mathbf{1}_{n<M}$. Therefore, by the quadratic-form inequality \cref{eq:boson-quadratic-form}, 
\begin{equation}
    \langle \varphi|\wt{\mathsf F_\beta}|\varphi\rangle\le \sum_{n\ge0}(-\kappa + C \mathbf{1}_{n<M})|\varphi_n|^2 = -\kappa \norm{\ket{\varphi}}^2  + C\norm{\mc P_{n<M}\ket{\varphi}}^2 = \langle \varphi|(-\kappa I + C \mc P_{n<M})|\varphi\rangle.
\end{equation}
where $\norm{\cdot}$ denotes the $\ell^2$ norm and $\mc P_{n<M}$ is the projection onto the span of $\ket{0},\ket{1},\ldots,\ket{M-1}$. Therefore we have the operator inequality on the space of $\ell^2$-integrable vectors:
\begin{equation}\label{eq:Bbeta-source}
    \wt{\mathsf F_\beta}\le -\kappa I + C \mc P_{n<M}.
\end{equation}
Note that $\mc P_{n<M} = \mathbf{1}_{n<M}(N)$ and $\mathsf F_\beta = r_\beta^N$, thus we have 
\begin{equation}
    \mathsf F_\beta^{\frac12}\mc P_{n<M}\mathsf F_\beta^{\frac12} = r_\beta^{N} \mathbf{1}_{n<M}(N) = r_\beta^{N} \mc P_{n<M} \le r_\beta^{M-1} \mc P_{n<M} \le r_\beta^{M-1} I.
\end{equation}
Thus conjugating \cref{eq:Bbeta-source} by $\mathsf F_\beta^{1/2}$ gives
\begin{equation}
    \mc L^\dagger[\mathsf F_\beta]
    \le
    -\kappa \mathsf F_\beta+C r_\beta^{M-1}I.
\end{equation}
Thus the drift inequality holds with $\theta:=C r_\beta^{M-1}$.  The stationary bound follows from Lemma \ref{lem:drift}, which gives
\begin{equation}
    \Tr(\mathsf F_\beta \rho_{ss}) \le \frac{\theta }{\kappa } \le \frac{d_\beta + \kappa}{\kappa} r_\beta^{-\frac{d_\beta+\kappa}{ q_\beta}}.
\end{equation}
By choosing e.g. $\kappa = -\frac{q_\beta}{2}$, we get $\Tr(\mathsf F_\beta \rho_{ss}) \le \left(1-\frac{2d_\beta}{q_\beta}\right) r_\beta^{\frac12 -\frac{d_\beta}{q_\beta}}$. Note that $\lim_{\beta\downarrow 0 } \left(-\frac{d_\beta}{q_\beta}\right) = \frac{s^2}{1-s^2}$, we have that the upper bound is finite for any $0\le \beta < \frac{1}{\Omega}\log \frac1s$.

We can then obtain the tail bound for the steady state using Markov's inequality. For any $E\ge 0$, we have by setting $\mathfrak g_k = e^{\beta\lambda_k}$ and noting that $\mathfrak g_E \ge \inf_{\lambda>E} e^{\beta\lambda} = e^{\beta E}$ in Lemma~\ref{lem:drift} that
\begin{equation}
    \Tr(\mc P_{>E}\rho_{ss}) \le e^{-{\beta E}}\Tr(e^{\beta \Omega N}\rho_{ss}) \le e^{- {\beta E}} \frac{\theta}{\kappa}.
\end{equation}

\end{proof}

\begin{rem}\label{rem:squeezed-threshold}
In the purely dissipative case i.e. $\mathcal L = \mathcal D$ and $\mathcal L_H = 0$, the 
steady state is actually analytically solvable and is given by 
a squeezed vacuum state $
    \ket{\Psi_s}
    =(1-s^2)^{1/4}\sum_{k\ge0}(-s)^k\frac{\sqrt{(2k)!}}{2^k k!}\ket{2k}.$ 
The physical occupation is $m=2k$, and the populations behave as
\begin{equation}
    |\langle 2k|\Psi_s\rangle|^2
    =(1-s^2)^{1/2}s^{2k}\frac{(2k)!}{4^k(k!)^2}
    \sim s^m k^{-1/2}.
\end{equation}
Therefore the exact squeezed-state moment $\langle e^{\eta N}\rangle$ is finite precisely for $e^\eta s<1$, i.e.
$
    \eta<\log\frac1s.
$ 
Taking $\eta=\beta\Omega$ gives the same admissible range as \cref{eq:boson-beta-range}. The Lyapunov proof above could be viewed as the dynamical approach of the same exponential threshold, which is obtained without any prior knowledge of the exact steady state.
\end{rem}

\subsection{Additional results on free bosonic systems}\label{app:additional_boson}
Since the steady state is a Gaussian state, we could also understand the free bosonic systems by directly solving the steady state or the covariance matrix. In this section, we present some additional results about the steady state of the free bosonic system such as the trace-distance convergence and the generalization to the multimode case, which may be of independent interest.
\subsubsection{Purely dissipative case}
We first consider the purely dissipative case
i.e., $\mc L_H =0$ and $\mc L =\mc D$. In this case, the steady state is a unique pure state in $\ker K$ and can be characterized by a squeezed vacuum state~\cite{MarianMarian1993,GrynbergAspectFabre2010}
\begin{equation}
\ket{\Psi_\zeta} :=  S(\zeta)\ket{0}_a=\exp\!\left(\frac12 (\zeta^\ast aa  - \zeta a^\dag a^\dag)\right) \ket{0}_a,
\end{equation}
where the squeeze parameter $\zeta$ depends only on the leakage ratio:
\begin{equation}
    \zeta = \arctanh s, \quad s := \gamma_h / \gamma_c {<1}.
\end{equation} 
The Fock basis expansion of $\ket{\Psi_\zeta}$ is
\begin{equation}\label{eq:squeezed_vacuum_state_alpha_zero}
   \ket{\Psi_\zeta}
   = (1-s^2)^{1/4} \sum_{n=0}^{\infty} (-s)^n \frac{\sqrt{(2n)!}}{2^n n!} \ket{2n}.
\end{equation}
By Stirling's formula
$
\frac{\sqrt{(2n)!}}{2^n n!} \sim n^{-1/4}$, the coefficients satisfy
\begin{equation}
|\langle 2n|\Psi_\zeta\rangle | \sim |s|^n n^{-1/4},\quad \langle 2n+1|\Psi_\zeta\rangle = 0.
\end{equation}
Since $0\le s<1$, the Fock-state populations decay exponentially in $n$, up to a subleading algebraic prefactor.

It is useful for numerically simulating the profile of the steady-state photon number distribution. In particular, when $s = 0$, $\ket{\Psi_{\zeta=0}} = \ket{0}_a$ is the vacuum state. We derive \cref{eq:squeezed_vacuum_state_alpha_zero} and verify that $K  \ket{\Psi_\zeta} = 0$ in Appendix \ref{appendix:BosonPopulation}.

Since the steady state is a pure state, we can utilize the Fuchs--van de Graaf argument to further show the exponential convergence of the state $\rho(t)$ to the steady state $\ket{\Psi }$ in trace distance. Specifically, we have   
\begin{prop}[Rigorous version of Proposition~\ref{prop:free_boson}, 
with $\mc L_H = 0$]
\label{prop:boson_convergence}
If the initial state $\rho(0)$ satisfies that $\Tr(K^\dag K\rho(0))$ is well-defined and finite, then the trace distance between $\rho(t)$ and the steady state $\rho_{\rm ss}$ decays exponentially with a rate $e^{-(\gamma_c^2-\gamma_h^2)t/2}$ and an initial-state-dependent prefactor. Here $\rho_{\rm ss}$ is the squeezed vacuum state \cref{eq:squeezed_vacuum_state_alpha_zero}.
\end{prop}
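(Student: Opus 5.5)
The plan is to diagonalize the single jump operator by a Bogoliubov (squeezing) transformation, which turns the purely dissipative dynamics into ordinary amplitude damping of a new bosonic mode. With $s=\gamma_h/\gamma_c<1$, define
\begin{equation*}
  b := \frac{a + s\,a^\dag}{\sqrt{1-s^2}},\qquad [b,b^\dag]=\frac{1-s^2}{1-s^2}=1 ,
\end{equation*}
so that $K=\gamma_c a+\gamma_h a^\dag=\gamma_c\sqrt{1-s^2}\,b$. Then
\begin{equation*}
  \mc D_K=\gamma\,\mc D_b,\qquad \gamma:=\gamma_c^2-\gamma_h^2>0 .
\end{equation*}
This canonical transformation is implemented by the squeezing unitary, $b=S(\zeta)\,a\,S(\zeta)^\dag$ with $\zeta=\arctanh s$. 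Consequently the vacuum of $b$ is $\ket{\Psi_\zeta}=S(\zeta)\ket{0}_a$, the squeezed vacuum state \cref{eq:squeezed_vacuum_state_alpha_zero}, and it is the unique state annihilated by $K$. It is therefore a stationary state, and the steps below show it is the unique one.

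Next I would set up the drift inequality in the $b$-frame. Let $N_b:=b^\dag b=K^\dag K/\gamma$. Exactly as in \cref{eq:Ddag_nk}, but now using the bosonic commutator, one computes $\mc D_b^\dag[N_b]=-N_b$. Hence $\mc L^\dag[N_b]=-\gamma N_b$, which is an instance of the drift inequality of Lemma~\ref{lem:drift} with $a=\gamma$ and $b=0$. This gives
\begin{equation*}
  \Tr(N_b\rho(t))=e^{-\gamma t}\,\Tr(N_b\rho(0))=\frac{e^{-\gamma t}}{\gamma}\Tr(K^\dag K\rho(0)) .
\end{equation*}
The right-hand side is finite by the hypothesis on $\rho(0)$. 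Since $N_b$ has integer spectrum and its kernel is spanned by $\ket{\Psi_\zeta}$, we have the operator Markov inequality $I-\ketbra{\Psi_\zeta}\le N_b$, the analogue of \cref{eq:markov_inequality_N}. Therefore
\begin{equation*}
  1-\bra{\Psi_\zeta}\rho(t)\ket{\Psi_\zeta}\le \frac{e^{-\gamma t}}{\gamma}\Tr(K^\dag K\rho(0)).
\end{equation*}

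To conclude, I would apply the Fuchs--van de Graaf inequality with the pure target state, whose fidelity satisfies $F^2=\bra{\Psi_\zeta}\rho(t)\ket{\Psi_\zeta}$. This yields
\begin{equation*}
  \norm{\rho(t)-\ketbra{\Psi_\zeta}}_1\le 2\sqrt{1-\bra{\Psi_\zeta}\rho(t)\ket{\Psi_\zeta}}\le \frac{2}{\sqrt{\gamma}}\sqrt{\Tr(K^\dag K\rho(0))}\;e^{-\gamma t/2}.
\end{equation*}
This is the claimed rate $e^{-(\gamma_c^2-\gamma_h^2)t/2}$ with an initial-state-dependent prefactor. Because every initial state with finite $\Tr(K^\dag K\rho(0))$ converges to $\ket{\Psi_\zeta}$, this also shows uniqueness of the steady state within that class.

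The main obstacle is rigor with unbounded operators: Lemma~\ref{lem:drift} is stated in finite dimensions, and the Heisenberg identity $\frac{\dInt}{\dInt t}\Tr(N_b\rho(t))=-\gamma\Tr(N_b\rho(t))$ needs a domain justification. I would sidestep this by conjugating with $S(\zeta)$. The semigroup $e^{t\mc L}$ is then unitarily equivalent to the standard amplitude-damping semigroup of mode $a$, whose solution is explicit: it is the pure-loss channel with transmissivity $e^{-\gamma t}$, given by explicit Kraus operators. For this channel, $\Tr(N_a\rho(t))=e^{-\gamma t}\Tr(N_a\rho(0))$ can be checked directly on Fock-diagonal entries. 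The identity then holds for every trace-class $\rho(0)$ with finite mean $b$-number, with no generator-domain issues.
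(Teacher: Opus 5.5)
Your proposal is correct and follows essentially the same route as the paper. The paper also introduces the Bogoliubov mode $\wt a=(a+s a^\dag)/\sqrt{1-s^2}$ with $\wt N=K^\dag K/(\gamma_c^2-\gamma_h^2)$, uses $\mc D_K^\dag[\wt N]=-(\gamma_c^2-\gamma_h^2)\wt N$ to get exact exponential decay of $\Tr(\wt N\rho(t))$, bounds $I-\rho_{\rm ss}\le\wt N$, and concludes with Fuchs--van de Graaf; your identification $\wt a=S(\zeta)\,a\,S(\zeta)^\dag$ and the reduction to the pure-loss channel are a useful addition, since they justify the completeness of the squeezed Fock basis and the unbounded-generator step that the paper only asserts.
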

\begin{proof}
    We define the ``squeezed number operator'' as follows
\begin{equation}
    \wt a = \frac1{\sqrt{1-s^2}}(a + sa^\dag),\quad \wt N = \wt a^\dag \wt a = \frac1{\gamma_c^2-\gamma_h^2}K^\dag K.
\end{equation}
Thus we have $\wt N \ket{\Psi_\zeta} = 0$ by $\ket{\Psi_\zeta} \in \ker K$
. In fact, we can view the operator $\wt a$ as a Bogoliubov-transformed bosonic annihilation operator, and we can verify that $\wt a$ and $\wt a^\dag$ satisfy the canonical commutation relation (CCR) $[\wt a, \wt a^\dag] = 1$ and $[\wt a, (\wt a^\dag)^n] = n (\wt a^\dag)^{n-1}$. 
Therefore, we have 
\begin{equation}
    \begin{aligned}
    \wt N (\wt a^\dag)^{n} \ket{\Psi_\zeta} & = \wt a^\dag \wt a (\wt a^\dag)^{n} \ket{\Psi_\zeta} %
    = n (\wt a^\dag)^{n} \ket{\Psi_\zeta}.
    \end{aligned}
\end{equation}
Therefore we can define the orthonormal eigenbasis of $\wt N$ as $\ket{n}_{\wt a} = \frac1{\sqrt{n!}} (\wt a^\dag)^n \ket{\Psi }$, which is also a Fock basis for the bosonic mode. 
The steady state $\rho_{\rm ss} = \dyad{\Psi } = \dyad{0}_{\wt a}$ is the vacuum in this basis.
Thus, we can compute the dynamics of $\wt N$ as follows:
\begin{equation}
    \mc D_K^\dag [\wt N] = (\gamma_c^2-\gamma_h^2) \mc D^\dag _{\wt a}[\wt a^\dag \wt a] = -(\gamma_c^2-\gamma_h^2)\wt N\implies \Tr(\rho(t) \wt N) = e^{-(\gamma_c^2-\gamma_h^2)t} \Tr(\rho(0) \wt N).
\end{equation}
Let us write both $\rho_{\rm ss}$ and $\wt N$ in the squeezed Fock basis $\{|n\rangle_{\wt a}\}$:
\begin{equation}
    I - \rho_{\rm ss} = I  -\dyad{0}_{\wt a}  = \sum_{n=1}^\infty |n\rangle_{\wt a} \langle n|_{\wt a},\quad \wt{N} = \sum_{n=1}^\infty n |n\rangle_{\wt a} \langle n|_{\wt a},
\end{equation}
which implies that $I - \rho_{\rm ss} \le \wt{N}$. Therefore, by the Fuchs--van de Graaf inequality \cite{FuchsVanDeGraaf2002}, we have
\begin{equation}
    \begin{aligned}
    d(\rho(t), \rho_{\rm ss})%
    \le \sqrt{1 - \Tr(\dyad{0}_{\wt a} \rho(t))} %
    \note{$1-\rho_{\rm ss}\le \wt N$}\le \sqrt{\Tr(\wt{N} \rho(t))} = e^{-(\gamma_c^2-\gamma_h^2)t/2} \sqrt{\Tr(\wt{N} \rho(0))}.
    \end{aligned}
\end{equation}
Because $\Tr(\wt  N\rho (0)) = \frac1{\gamma_c^2-\gamma_h^2} \Tr(K^\dag K \rho(0))$ is well-defined and finite by the assumption, we get the required exponential convergence in the trace distance.
\end{proof}

The analysis above can be easily extended to multi-mode free bosonic systems. The Hamiltonian and the coupling operators are given by
\begin{equation}
    H = \sum_{i,j=1}^L h_{ij} a_i^\dag a_j,\quad A_i = a_i + a_i^\dag, \quad i=1,2,\cdots, L,
\end{equation}
where $h$ is a positive definite real symmetric matrix and we choose a real orthogonal diagonalization $h = U \text{diag}(\varepsilon_1, \cdots, \varepsilon_L)U^\dag$. We can define the canonical modes $\wt a_i = \sum_{j=1}^L U_{ji} a_j$ such that $H = \sum_{i=1}^L \varepsilon_i \wt a_i^\dag \wt a_i$. The bosonic operators in the canonical mode basis still satisfy the CCR $[\wt a_i, \wt a_j^\dag] = \delta_{ij}$ and $[\wt a_i, \wt a_j] = 0$. We have a similar decomposition of the Lindbladian as in the free fermion case Proposition \ref{prop:free_fermionic_lindblad}.
\begin{prop}\label{prop:free_bosonic_lindblad}
    For the multi-mode free bosonic system defined above, with the coupling operator set $\{a_i + a_i^\dag\}_{i=1}^L$, assume $\gamma_{c,k}>\gamma_{h,k}\ge 0$ for all $k$. For the purely dissipative dynamics generated by the filtered jumps, the Lindbladian can be decomposed as
    \begin{equation}
    \mc L = \mc D = \sum_{k=1}^L \mc D_{J_k },\quad J_k = \hat{f}(-\varepsilon_k) \wt a_k + \hat{f}(\varepsilon_k) \wt a_k^\dag.
    \end{equation}
    The steady state is a tensor product of single-mode squeezed vacuum states $\rho_{\rm ss} = \bigotimes_{i=1}^L \dyad{\Psi_i}$, where each $\ket{\Psi_i}$ is determined by the corresponding filter leakage $s_i = \hat{f}(\varepsilon_i) / \hat{f}(-\varepsilon_i)$ by \cref{eq:squeezed_vacuum_state_alpha_zero}.
\end{prop}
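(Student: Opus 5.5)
The proof follows the template of Proposition~\ref{prop:free_fermionic_lindblad}, with fermionic modes replaced by bosonic ones. First, rewrite the coupling operators in the canonical basis. Since $h$ is real symmetric and $U$ is real orthogonal, $a_i = \sum_k U_{ik}\wt a_k$ and $a_i^\dag = \sum_k U_{ik}\wt a_k^\dag$, so
\begin{equation}
A_i = \sum_{k} U_{ik}\,(\wt a_k + \wt a_k^\dag).
\end{equation}
The bosonic Thouless relation gives $e^{\I Hs}\wt a_k e^{-\I Hs} = e^{-\I\varepsilon_k s}\wt a_k$ and the conjugate relation for $\wt a_k^\dag$. Inserting these into \cref{eq:jump_op} yields
\begin{equation}
K_i = \sum_k U_{ik} J_k, \qquad J_k = \hat f(-\varepsilon_k)\wt a_k + \hat f(\varepsilon_k)\wt a_k^\dag .
\end{equation}

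Next, sum the dissipators over $i$. Each $\mc D_{K_i}$ is bilinear in $K_i$ and $K_i^\dag$, so expanding gives terms weighted by $\sum_i U_{ik}U_{il}$. Because the entries of $U$ are real, orthogonality gives $\sum_i U_{ik}U_{il}=\delta_{kl}$, and all cross-mode terms cancel. Hence $\sum_i \mc D_{K_i} = \sum_k \mc D_{J_k}$. The realness of $U$ is essential here. With complex $U$, the cross terms $J_k\rho J_l^\dag$ would involve $\sum_i U_{ik}\overline{U_{il}}$ only for the $\wt a\,\wt a^\dag$-type pieces, while the mixed pieces (e.g.\ $\wt a_k\rho\,\wt a_l$) would carry $\sum_i U_{ik}U_{il}$ instead. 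So I would check all four combinations of $\wt a$ and $\wt a^\dag$ explicitly.

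For the steady state, the generators $\mc D_{J_k}$ act on distinct modes and therefore commute. Each single-mode problem is exactly the purely dissipative one-mode case already treated. By Proposition~\ref{prop:boson_convergence} and \cref{eq:squeezed_vacuum_state_alpha_zero}, $J_k\ket{\Psi_k}=0$ with squeeze parameter determined by $s_k<1$. The product state $\bigotimes_k\dyad{\Psi_k}$ is then annihilated by every $J_k$, so it is stationary.

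Uniqueness is the main obstacle, since the Hilbert space is infinite-dimensional. I would handle it with the squeezed number operators $\wt N_k$ of the proof of Proposition~\ref{prop:boson_convergence}, each proportional to $J_k^\dag J_k$. These satisfy
\begin{equation}
\mc L^\dag[\wt N_k] = -(\gamma_{c,k}^2-\gamma_{h,k}^2)\,\wt N_k ,
\end{equation}
because $\mc D_{J_l}^\dag$ acts trivially on $\wt N_k$ for $l\ne k$. For any stationary state with finite $\Tr(\wt N_k\rho)$, this forces $\Tr(\wt N_k\rho)=0$ for every $k$. Then $\rho$ is supported on the joint vacuum of all $\wt a_k$, which is exactly $\bigotimes_k\ket{\Psi_k}$. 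I would state uniqueness within this finite-moment class, or alternatively derive it from the convergence of every such initial state via the multi-mode Fuchs--van de Graaf bound $I-\rho_{\rm ss}\le\sum_k\wt N_k$.
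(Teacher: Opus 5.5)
Your proposal is correct and follows the paper's proof: write $K_i=\sum_k U_{ik}J_k$ via the Thouless relation, use the real orthogonality of $U$ to collapse $\sum_i\mc D_{K_i}$ to $\sum_k\mc D_{J_k}$, and note that $\bigotimes_k\ket{\Psi_k}$ is annihilated by every $J_k$ and is therefore stationary. Your uniqueness argument via $\wt N_k\propto J_k^\dag J_k$ goes beyond the paper's proof, which uses that operator only in \cref{thm:boson_convergence_multisite}; one slip there is that the relevant joint vacuum is the joint kernel of the $J_k$ (the squeezed annihilators), not that of the canonical $\wt a_k$, whose joint vacuum is the ordinary Fock vacuum.
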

\begin{proof}

Using the Thouless theorem $e^{\I Hs} \wt a_i e^{-\I Hs} = e^{-\I \varepsilon_i s} \wt a_i$ and expanding the coupling operators in the canonical mode basis, we can construct the jump operators as
\begin{equation}
    K_i = \int_{-\infty}^\infty f(s) e^{\I Hs} A_i e^{-\I Hs} ds = \sum_{k=1}^L U_{ik} \hat{f}(-\varepsilon_k) \wt a_k + \sum_{k=1}^L U_{ik} \hat{f}(\varepsilon_k) \wt a_k^\dag =: \sum_{k=1}^L U_{ik} J_k,
\end{equation}
where we denote 
\begin{equation}
    J_k =  \gamma_{c,k} \wt a_k + \gamma_{h,k} \wt a_k^\dag,\quad \gamma_{c,k} := \hat{f}(-\varepsilon_k),\quad \gamma_{h,k} := \hat{f}(\varepsilon_k).
\end{equation}
The dissipator is given by $\mc D = \sum_{i=1}^L \mc D_{K_i}$, which can be simplified using the orthogonality of $U$ as
\begin{equation}
    \begin{aligned}
          \mc D [\rho] = \sum_{k=1}^L J_k \rho J_k^\dag - \frac12 \{J_k^\dag J_k , \rho\} =\sum_{k=1}^L \mc D_{J_k}[\rho] .
    \end{aligned}
\end{equation}
We then look at the dynamics of the operators $\wt a_i$, $\wt a_i^\dag \wt a_i$ and $\wt a_i^2$ on each canonical mode. This is equivalent to the single-mode case, and we have the following identities for the dissipator using the CCR (see also \cref{eq:N_boson} \cref{eq:b2_boson} and  \cref{eq:b_boson}):
\begin{equation}
    \mc D_{J_i}^\dag [\wt a_i] =\frac{\gamma_{h,i}^2-\gamma_{c,i}^2}{2} \wt a_i, \quad     \mc D_{J_i}^\dag [\wt a_i^\dag \wt a_i] = (\gamma_{h,i}^2-\gamma_{c,i}^2) \wt a_i^\dag \wt a_i + \gamma_{h,i}^2, \quad  \mc D_{J_i}^\dag [\wt a_i^2] = (\gamma_{h,i}^2-\gamma_{c,i}^2) \wt a_i^2 - \gamma_{c,i} \gamma_{h,i},
\end{equation}
Also, similar to \cref{eq:steady_state_free_fermion} in the multi-site free fermion case, the steady state of this purely dissipative generator is a tensor product of single-mode squeezed vacuum states, each of which is determined by the corresponding filter leakage $s_i:=\gamma_{h,i}/\gamma_{c,i}$. Specifically,
we define $\ket{\Psi_i} $ as in \cref{eq:squeezed_vacuum_state_alpha_zero} for the $i$-th mode corresponding to the $i$-th leakage ratio $s_i : = \gamma_{h,i} / \gamma_{c,i} = \wh f(\varepsilon_i)/\wh f(-\varepsilon_i)$. Then {by the single-mode construction above} in  Proposition~\ref{prop:boson_convergence}, 
   \begin{equation}
   J_i \ket{\Psi_i} = (\gamma_{c,i}\wt a_i +\gamma_{h,i}\wt a_i^\dag)\ket{\Psi_i}= 0.
   \end{equation}  Let {$\rho_i := \ketbra{\Psi_i}$}, and {$\rho_{\rm ss} = \bigotimes_{i=1}^L \rho_i$} is the steady state of the dynamics.
   
\end{proof}

Using the Fuchs--van de Graaf inequality, we can further show the exponential convergence of $\rho(t)$ to $\rho_{\rm ss}$ in trace distance.
\begin{thm}\label{thm:boson_convergence_multisite}
    Assume $\gamma:=\min_i \left[ \wh f(-\varepsilon_i)^2 -  \wh f(\varepsilon_i)^2 \right]>0$ and finite initial squeezed occupation $\sum_i\Tr(\wt N_i\rho(0))<\infty$, where $\wt N_i$ is defined in the proof below. The trace distance between $\rho(t)$ and the steady state $\rho_{\rm ss}$ decays exponentially with a rate $e^{-\gamma t/2}$, where $\gamma = \min_i \left[ \wh f(-\varepsilon_i)^2 -  \wh f(\varepsilon_i)^2 \right]$.
\end{thm}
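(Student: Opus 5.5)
Following the single-mode argument of Proposition~\ref{prop:boson_convergence} mode by mode, I introduce for each canonical mode the Bogoliubov-transformed annihilation operator
\begin{equation}
    \wt b_i := \frac{1}{\sqrt{1-s_i^2}}\left(\wt a_i + s_i \wt a_i^\dag\right),\qquad \wt N_i := \wt b_i^\dag \wt b_i = \frac{1}{\gamma_{c,i}^2-\gamma_{h,i}^2} J_i^\dag J_i ,
\end{equation}
with $s_i = \gamma_{h,i}/\gamma_{c,i}<1$. Operators belonging to different canonical modes commute, because the $\wt a_i$ satisfy the CCR. Hence the $\wt b_i$ also obey the CCR, both within a mode and across modes. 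Using Proposition~\ref{prop:free_bosonic_lindblad}, the dissipator splits as $\mc D = \sum_k \mc D_{J_k}$ with $J_k = \sqrt{\gamma_{c,k}^2-\gamma_{h,k}^2}\,\wt b_k$.

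The first step is to compute $\mc D^\dag[\wt N_i]$. For $k\neq i$, $J_k$ commutes with $\wt N_i$, so $\mc D_{J_k}^\dag[\wt N_i]=\frac12 J_k^\dag[\wt N_i,J_k]+\frac12[J_k^\dag,\wt N_i]J_k = 0$. For $k=i$, the single-mode identity gives $\mc D^\dag_{J_i}[\wt N_i] = -(\gamma_{c,i}^2-\gamma_{h,i}^2)\wt N_i$. Summing over modes, the total squeezed number operator $\wt N := \sum_i \wt N_i$ satisfies
\begin{equation}
    \mc L^\dag[\wt N] = -\sum_i (\gamma_{c,i}^2-\gamma_{h,i}^2)\wt N_i \le -\gamma \wt N .
\end{equation}
This is a drift inequality with $b=0$. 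A Gr\"onwall argument, as in Lemma~\ref{lem:drift}, then yields $\Tr(\wt N\rho(t)) \le e^{-\gamma t}\Tr(\wt N\rho(0))$. Since the operator is unbounded, I would first justify the Heisenberg-picture differentiation. Alternatively, I would solve each $\Tr(\wt N_i\rho(t))$ exactly as a scalar linear ODE, as in the single-mode case, which avoids the issue.

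The second step is the operator inequality $I - \rho_{\rm ss} \le \wt N$. The steady state $\rho_{\rm ss} = \bigotimes_i\ketbra{\Psi_i}$ is the joint vacuum of all $\wt b_i$, since $J_i\ket{\Psi_i}=0$. The multimode squeezed Fock states $\prod_i (\wt b_i^\dag)^{n_i}\ket{\Psi}/\sqrt{n_i!}$ form an orthonormal eigenbasis of $\wt N$ with eigenvalue $\sum_i n_i$. In this basis $I-\rho_{\rm ss}$ is the projector onto $\sum_i n_i\ge 1$, which is dominated by $\wt N$. Because $\rho_{\rm ss}$ is pure, the Fuchs--van de Graaf inequality then gives
\begin{equation}
    d(\rho(t),\rho_{\rm ss}) \le \sqrt{1-\Tr(\rho_{\rm ss}\rho(t))} \le \sqrt{\Tr(\wt N\rho(t))} \le e^{-\gamma t/2}\sqrt{\textstyle\sum_i\Tr(\wt N_i\rho(0))},
\end{equation}
which is finite by assumption.

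The main obstacle is the infinite-dimensional bookkeeping, not the algebra. Two points need checking. First, the squeezed Fock basis must be complete in the multimode Fock space. This follows because $\prod_i S(\zeta_i)$ acting on the canonical modes is a unitary Bogoliubov transformation mapping the standard Fock basis onto the squeezed one. Second, the moment evolution for unbounded $\wt N$ must be legitimate, i.e. $\Tr(\wt N\rho(t))$ must stay finite and obey the ODE. I would handle this by noting that the dynamics is quasi-free, so the second moments $\langle \wt a_i^\dag\wt a_i\rangle$ and $\langle \wt a_i^2\rangle$ evolve by the closed linear ODEs already derived. That gives an explicit formula for $\Tr(\wt N_i\rho(t))$ without invoking domain questions for general $\rho$.
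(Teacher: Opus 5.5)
Your proposal is correct and follows essentially the same route as the paper. Both arguments use the squeezed number operators $\wt N_i = J_i^\dag J_i/(\gamma_{c,i}^2-\gamma_{h,i}^2)$ with per-mode decay $\mc D^\dag[\wt N_i] = -(\gamma_{c,i}^2-\gamma_{h,i}^2)\wt N_i$, the bound $I-\rho_{\rm ss}\le\sum_i \wt N_i$, and Fuchs--van de Graaf for the pure steady state.

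The only difference is cosmetic. The paper derives $I-\rho_{\rm ss}\le\sum_i\wt N_i$ from the single-mode inequality $I-\rho_i\le\wt N_i$ together with $I-\bigotimes_i P_i\le\sum_i (I-P_i)$, whereas you use the joint squeezed Fock basis. The paper also does not address the unbounded-operator bookkeeping that you flag.
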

\begin{proof}
    We can define the ``squeezed number operator'' for each mode as $\wt N_i = \frac1{\gamma_{c,i}^2 - \gamma_{h,i}^2} J_i^\dag J_i$, which satisfies 
   \begin{equation}
    \mc D^\dag [\wt N_i]  = -(\gamma_{c,i}^2-\gamma_{h,i}^2) \wt N_i,\quad \wt N_i\ket{\Psi_i} = 0.
   \end{equation}
     Similar to Proposition~\ref{prop:boson_convergence}, we have {$I- \rho_i \le \wt N_i$} and thus 
     \begin{equation}
           I-\rho_{\rm ss} \le \sum_{i=1}^L  (I-\rho_i) \le \sum_{i=1}^L \wt N_i 
     \end{equation}
    by the operator inequality $1-\bigotimes_{i=1}^L P_i \le \sum_{i=1}^L (1-P_i)$ for $P_i$ being orthogonal projectors. Therefore, by the Fuchs--van de Graaf inequality \cite{FuchsVanDeGraaf2002}, we have
\begin{equation}
    \begin{aligned}
    d(\rho(t), \rho_{\rm ss}) & \le \sqrt{1 - \Tr(\rho_{\rm ss} \rho(t))} = \sqrt{\Tr((I-\rho_{\rm ss})\rho(t))} \le \sqrt{\sum_{i=1}^L \Tr(\wt N_i \rho(t))} \le    e^{-\gamma t/2} \sqrt{\sum_{i=1}^L \Tr(\wt N_i \, \rho(0))},
    \end{aligned}
\end{equation}
where $\gamma = \min_i (\gamma_{c,i}^2 - \gamma_{h,i}^2)$.  
\end{proof}

\subsubsection{Free bosonic systems with both Hamiltonian and dissipative contributions}

    The fact that the equations of motion of the first and second moments ($a$, $a^\dag a$, $a^2$) form a closed system is characteristic of quasi-free Lindblad dynamics in general. Under such dynamics, the linear jump operators preserve the Gaussianity of the state. If the initial state of the dissipative protocol is assumed to also be a Gaussian state, we can carry out a unified analysis for both the purely dissipative case and the case with Hamiltonian contribution by the recently developed techniques for analyzing the tight upper bound of the trace distance between two Gaussian states, for both bosonic and fermionic systems \cite{BittelMeleTironeLami2025,BittelMeleEisertLeone2025}.

\begin{thm}[\protect{\cite[Theorem 1]{BittelMeleTironeLami2025}}] \label{thm:trace_distance_gaussian} Let $\rho(V,\bv m)$ and $\rho(W,\bv t)$ be two bosonic Gaussian states with covariance matrices $V$ and $W$ and first-moment vectors $\bv m$ and $\bv t$. Then the trace distance between $\rho(V,\bv m)$ and $\rho(W,\bv t)$ is upper bounded by
    \begin{equation}
        d(\rho(V,\bv m), \rho(W,\bv t)) \le \frac{\sqrt 3+1}{16} \Tr(\abs{V-W} \Xi ^T (V+W) \Xi) + \sqrt{\frac{\min(\norm{V}  , \norm{W} )}{2}} \norm{\bv m - \bv t},
     \end{equation}
    where $\norm{\cdot}$ denotes the spectral norm, the matrix absolute value is defined as $\abs{A} = \sqrt{A^\dag A}$, and the first-moment vector $\bv m$ and the positive semidefinite covariance matrix $V$ are respectively defined as
    \begin{equation}
           \mathbf{m} = \frac{1}{\sqrt{2}} \begin{pmatrix} \langle a \rangle + \langle a^\dagger \rangle \\-\I(\langle a \rangle - \langle a^\dagger \rangle) \end{pmatrix} = \sqrt2 \mqty(\Re \langle a\rangle \\ \Im \langle a\rangle) ,\quad \Xi = \begin{pmatrix} 0 & 1 \\ -1 & 0 \end{pmatrix}
    \end{equation}
    and 
    \begin{equation}
            V = \begin{pmatrix} \mathrm{Var} (a) + \mathrm{Var} (a^\dagger) + 2\langle a^\dag a\rangle_{\rm c} + 1 & \I(\mathrm{Var} (a^\dagger) - \mathrm{Var} (a)) \\ \I(\mathrm{Var} (a^\dagger) - \mathrm{Var} (a)) & -(\mathrm{Var} (a) + \mathrm{Var} (a^\dagger)) + 2\langle a^\dag a\rangle_{\rm c}  + 1 \end{pmatrix} 
    \end{equation}
    where we introduce the short-hand notations
    \begin{equation}
       \langle AB\rangle_{\rm c} := \langle AB\rangle - \langle A\rangle \langle B\rangle  ,\quad  \mathrm{Var} (A) := \langle A^2\rangle_{\rm c} =  \langle A^2\rangle -\langle A\rangle^2.
    \end{equation}
\end{thm}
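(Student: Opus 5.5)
This result is quoted verbatim from \cite{BittelMeleTironeLami2025}, and the paper uses it as a black box. If I were to reprove it, I would split the two contributions by the triangle inequality:
\begin{equation*}
d(\rho(V,\bv m),\rho(W,\bv t))\le d(\rho(V,\bv m),\rho(W,\bv m))+d(\rho(W,\bv m),\rho(W,\bv t)).
\end{equation*}
The first term should produce the covariance contribution and the second the first-moment contribution. To get $\min(\norm V,\norm W)$ rather than $\norm W$, I would instead displace the state whose covariance has the smaller norm, by swapping the order of the intermediate state if needed.

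\emph{Displacement term.} Write $\rho(W,\bv t)=D(\bv\delta)\rho(W,\bv m)D(\bv\delta)^\dagger$ with $\bv\delta=\bv t-\bv m$ and $D(\theta\bv\delta)=e^{-\I\theta\, X}$, where $X$ is the quadrature $\bv\delta^T\Xi\bv R$. Differentiating along $\theta\in[0,1]$ gives
\begin{equation*}
d\le\tfrac12\int_0^1\norm{[X,\rho_\theta]}_1\,\dInt\theta .
\end{equation*}
I would then use the standard bound $\norm{[X,\rho]}_1\le 2\sqrt{\mathrm{Var}_\rho(X)}$. This holds for pure states and extends to mixed ones by purification or by a Cauchy--Schwarz argument with $\sqrt\rho$. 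The variance of a quadrature is invariant under displacement and equals $\tfrac12\bv\delta^T\Xi^TW\Xi\bv\delta\le\tfrac12\norm W\norm{\bv\delta}^2$. This yields exactly $\sqrt{\norm W/2}\,\norm{\bv m-\bv t}$.

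\emph{Covariance term.} Here I would keep the first moments fixed and diagonalize $V-W=\sum_j\lambda_j\bv u_j\bv u_j^T$. The idea is to pass from $W$ to $V$ one rank-one step at a time; intermediate matrices of the form $W+\sum_{j\le k}\lambda_j\bv u_j\bv u_j^T$ are sandwiched appropriately between $V$ and $W$, so they remain valid covariance matrices. For a positive step $\lambda\bv u\bv u^T$, the new Gaussian state is obtained from the old one by a classical Gaussian random displacement of variance $\lambda$ along the quadrature direction conjugate to $\bv u$. The trace-distance change of a random displacement can then be bounded to second order by the displacement estimate above, applied inside an average. Concretely, I would expand $\mathbb E[D\rho D^\dagger]-\rho$ to second order via $\tfrac12\lambda[X,[X,\rho]]$ and bound $\norm{[X,[X,\rho]]}_1$ by a variance-type quantity $\propto\bv u^T\Xi^T(\cdot)\Xi\bv u$. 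Negative steps are handled symmetrically by adding noise to the other state. Summing over $j$ then produces $\Tr(\abs{V-W}\,\Xi^T(V+W)\Xi)$, with the $V+W$ arising from averaging the endpoint covariances along each step.

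The main obstacle is this covariance step, and in particular obtaining the specific constant $(\sqrt3+1)/16$ uniformly in the size of $\lambda_j$. The second-order expansion is only exact infinitesimally, so I would first try a continuous path $V_\theta=W+\theta(V-W)$ and differentiate $\rho(V_\theta)$. The derivative is a quadratic polynomial in the quadratures sandwiched with $\rho_\theta$. I would bound its trace norm using a Gaussian fourth-moment computation, where Wick's theorem would supply the $\sqrt3$ through $\langle X^4\rangle=3\langle X^2\rangle^2$. Integrating over $\theta$ should then give the stated form, possibly after a convexity argument to replace $V_\theta$ by $V+W$.
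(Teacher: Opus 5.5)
The paper does not prove this statement; it imports it from the cited reference. So the question is only whether your sketch establishes it.

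\textbf{First-moment term.} This part of your argument is complete. The bound $\norm{[X,\rho]}_1\le 2\sqrt{\mathrm{Var}_\rho(X)}$ follows directly from H\"older with $\sqrt\rho$ after centering $X$. The variance of $X$ is constant along the displacement path and equals $\tfrac12\bv\delta^T\Xi^TW\Xi\bv\delta\le\tfrac12\norm{W}\norm{\bv\delta}^2$. Displacing the endpoint whose covariance has the smaller norm then gives the minimum.

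\textbf{Gap in your primary covariance route.} There are three problems, the last decisive.
\begin{enumerate}
\item The partial sums $W+\sum_{j\le k}\lambda_j\bv u_j\bv u_j^T$ are not sandwiched between $V$ and $W$, and they need not be physical. Take the vacuum $W=I$ and the squeezed vacuum $V=\mathrm{diag}(e^{2r},e^{-2r})$. Taking the negative step first gives $\mathrm{diag}(1,e^{-2r})$, which violates the uncertainty relation. The ordering that is always safe is $W\to W+P=V+N\leftarrow V$, where $V-W=P-N$ with $P,N\ge0$.
\item Averaging the first-order displacement bound over the random displacement scales like $\sqrt{\abs{\lambda}}$, not $\abs{\lambda}$. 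As you note yourself, the second-order expansion is not itself a bound.
\item Even if you make each step exact through the diffusion semigroup $\partial_\mu\rho=-\tfrac12[X,[X,\rho]]$, the variances are evaluated along the two legs, not along a path whose average covariance is $(V+W)/2$. Summing the legs, whether each is traversed linearly or one eigendirection at a time, gives the stated covariance term plus $\tfrac{1+\sqrt3}{8}\Tr(N\Xi^TP\Xi)\ge0$. For a single mode this extra term is strictly positive whenever $V-W$ is indefinite. In the example above it adds $\tfrac{1+\sqrt3}{2}\sinh^2 r$ to $\tfrac{1+\sqrt3}{4}\sinh 2r$.
\end{enumerate}
So this route proves only a weaker inequality.

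\textbf{Your fallback closes the gap.} Valid covariance matrices form a convex set, so $V_\theta=W+\theta(V-W)$ is physical for all $\theta\in[0,1]$ and no ordering is needed. Write $V-W=\sum_k\lambda_k\bv u_k\bv u_k^T$ and set $X_k:=(\Xi\bv u_k)^T\bv R$, so that $[X_k,\bv R]=\I\bv u_k$. The characteristic function $\Tr[\rho_\theta e^{\I\bv r^T\bv R}]=\exp(\I\bv r^T\bv m-\tfrac14\bv r^TV_\theta\bv r)$ then gives
\[
\partial_\theta\rho_\theta=-\tfrac14\sum_k\lambda_k\,[X_k,[X_k,\rho_\theta]].
\]
Center $X$; the double commutator is unchanged by $X\to X-\langle X\rangle$. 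For a Gaussian state, H\"older gives $\norm{X^2\rho}_1\le\sqrt{\langle X^4\rangle}=\sqrt3\,\langle X^2\rangle$, and also $\norm{X\rho X}_1=\langle X^2\rangle$. Hence
\[
\norm{[X,[X,\rho]]}_1\le 2(1+\sqrt3)\,\mathrm{Var}_\rho(X),
\qquad
\mathrm{Var}_{\rho_\theta}(X_k)=\tfrac12\bv u_k^T\Xi^TV_\theta\Xi\bv u_k .
\]
This variance is affine in $\theta$, so $\int_0^1$ gives exactly $\tfrac12\bv u_k^T\Xi^T\tfrac{V+W}{2}\Xi\bv u_k$. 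No convexity step is needed. Bounding $V_\theta\le V+W$ instead would lose precisely the factor $2$ that separates $(1+\sqrt3)/8$ from $(1+\sqrt3)/16$. Collecting $\tfrac12\cdot\tfrac14\cdot 2(1+\sqrt3)\cdot\tfrac12\cdot\tfrac12=\tfrac{1+\sqrt3}{16}$ and summing over $k$ gives $\tfrac{1+\sqrt3}{16}\Tr(\abs{V-W}\,\Xi^T(V+W)\Xi)$. Adding your first-moment term yields the statement exactly.
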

Using \cref{thm:trace_distance_gaussian}, we can analyze the convergence of the state $\rho(t)$ to the steady state $\rho_{\rm ss}$ for both the purely dissipative case and the case with Hamiltonian contribution, by analyzing the dynamics of the covariance matrix and the first-moment vector.
\begin{thm}[Rigorous version of Proposition \ref{prop:free_boson}, with both Hamiltonian and dissipative contributions]
    If the initial state $\rho(0)$ is a Gaussian state, then
    \begin{equation}
        d(\rho(t), \rho_{\rm ss}) \le  C e^{-(\gamma_c^2-\gamma_h^2)t/2},
    \end{equation}
    where $C$ is a constant that depends on the initial state $\rho(0)$ and the system parameters $\Omega,\gamma_c,\gamma_h$, but is independent of time $t$. Here $\rho_{\rm ss}$ is the squeezed thermal state with the stationary first and second moments computed in \cref{eq:N_boson,eq:a2_boson_ss,eq:a_boson_ss}.
\end{thm}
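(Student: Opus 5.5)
The plan is to apply \cref{thm:trace_distance_gaussian} with $\rho(t)=\rho(V(t),\bv m(t))$ and $\rho_{\rm ss}=\rho(V_{\rm ss},0)$. Because the dynamics is quasi-free with a linear jump operator $K=\gamma_c a+\gamma_h a^\dag$ and a quadratic Hamiltonian, Gaussianity is preserved, so $\rho(t)$ is Gaussian for all $t$. Its state is therefore fully determined by $\langle a\rangle_t$, $\langle a^2\rangle_t$ and $\langle N\rangle_t$. These moments are solved explicitly in \cref{eq:N_boson,eq:a2_boson_ss,eq:a_boson_ss}.

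First, I will handle the first moments. By \cref{eq:a_boson_ss}, $|\langle a\rangle_t|=|\langle a\rangle_0|e^{-(\gamma_c^2-\gamma_h^2)t/2}$, so $\norm{\bv m(t)-\bv m_{\rm ss}}=\sqrt2|\langle a\rangle_0|e^{-(\gamma_c^2-\gamma_h^2)t/2}$.

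Next come the connected second moments entering $V$. We have $\mathrm{Var}(a)_t=\langle a^2\rangle_t-\langle a\rangle_t^2$, $\mathrm{Var}(a^\dag)_t=\overline{\mathrm{Var}(a)_t}$ and $\langle a^\dag a\rangle_{{\rm c},t}=\langle N\rangle_t-|\langle a\rangle_t|^2$. By \cref{eq:N_boson,eq:a2_boson_ss}, $\langle N\rangle_t$ and $\langle a^2\rangle_t$ converge to their stationary values at rate $e^{-(\gamma_c^2-\gamma_h^2)t}$. The correction terms $\langle a\rangle_t^2$ and $|\langle a\rangle_t|^2$ decay at the same rate $e^{-(\gamma_c^2-\gamma_h^2)t}$. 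Hence every entry of $V(t)-V_{\rm ss}$ is bounded by $C_1e^{-(\gamma_c^2-\gamma_h^2)t}$, with $C_1$ depending on $\langle N\rangle_0$, $\langle a^2\rangle_0$, $\langle a\rangle_0$, $\Omega$, $\gamma_c$ and $\gamma_h$. The stationary $V_{\rm ss}$ is the covariance matrix built from $\langle N\rangle_{\rm ss}$ and $\langle a^2\rangle_{\rm ss}$, and the limiting Gaussian state is the stated squeezed thermal state.

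Then I will bound the two terms of \cref{thm:trace_distance_gaussian}. For the covariance term, $\Tr(|V-W|\,\Xi^T(V+W)\Xi)\le \norm{\Xi^T(V+W)\Xi}\,\Tr|V-W|\le 2\norm{V+W}\,\norm{V-W}_{\rm F}$. Since $V(t)$ is uniformly bounded in $t$ (its entries are convergent exponentials), this gives $\lesssim e^{-(\gamma_c^2-\gamma_h^2)t}$, which is dominated by $e^{-(\gamma_c^2-\gamma_h^2)t/2}$ for $t\ge0$. For the first-moment term, $\sqrt{\min(\norm{V(t)},\norm{V_{\rm ss}})/2}\le\sqrt{\norm{V_{\rm ss}}/2}$, which multiplies the first-moment decay found above. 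Summing the two terms yields $d(\rho(t),\rho_{\rm ss})\le Ce^{-(\gamma_c^2-\gamma_h^2)t/2}$, and the slowest rate comes from the first moment.

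The main obstacle is not the estimates but ensuring the hypotheses hold. I need the covariance matrix convention of \cref{thm:trace_distance_gaussian} to match the moments $\mathrm{Var}(a)$ and $\langle a^\dag a\rangle_{\rm c}$ used here. I also need $V_{\rm ss}$ to be a legitimate covariance matrix, i.e., to satisfy the uncertainty relation, so that $\rho_{\rm ss}$ exists as a Gaussian state. I would check the latter as follows. The Lindblad evolution maps the valid covariance matrices $V(t)$ to valid ones for all $t$, and the set of admissible covariance matrices is closed, so the limit $V_{\rm ss}$ is admissible. Stationarity of $\rho(V_{\rm ss},0)$ then follows because its moments are fixed points of the closed moment equations.
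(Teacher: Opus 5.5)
Your proposal is correct and follows the paper's proof essentially step by step. Like the paper, you use the explicit exponential decay of $\langle a\rangle_t$, $\langle a^2\rangle_t$ and $\langle N\rangle_t$ in \cref{thm:trace_distance_gaussian}, and the overall $e^{-(\gamma_c^2-\gamma_h^2)t/2}$ rate comes from the first-moment term. The differences are cosmetic: the paper applies trace H\"older as $\norm{V(t)-V_{\rm ss}}\,\Tr(V(t)+V_{\rm ss})$ rather than your $\norm{V+W}\,\Tr|V-W|$, and your check that $V_{\rm ss}$ is admissible and $\rho(V_{\rm ss},0)$ stationary is a point the paper leaves implicit.
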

\begin{proof}
    We calculate the difference of the covariance matrices $V(t)$ and $V_{\rm ss}$ by noticing that 
    \begin{equation}
        \begin{aligned}
      \abs{  \mathrm{Var} (a)_t - \mathrm{Var} (a)_{\rm ss}}
    &\note{$\langle a\rangle_{\rm ss} = 0$ by \cref{eq:b_boson}} =  \abs{\langle a^2\rangle_t  -\langle a^2\rangle_{\rm ss}  - \langle a\rangle_t^2}\\
    & \note{\cref{eq:b2_boson,eq:b_boson}}\le \abs{\langle a^2\rangle_{\rm ss} - \langle a^2\rangle_0 } e^{-(\gamma_c^2-\gamma_h^2)t} + \abs{\langle a\rangle_0}^2 e^{-(\gamma_c^2-\gamma_h^2)t}.
        \end{aligned}
    \end{equation}
    Note that $\langle a^2\rangle = \Tr(\rho a^2) = \Tr((\rho a^2)^\dag)^\ast = \langle a^{\dag 2}\rangle^\ast$, thus similar to \cref{eq:b2_boson,eq:b_boson}, we can also have 
    \begin{equation}
        \begin{aligned}
        \langle a^{\dag 2}\rangle_t& = \langle a^{\dag 2}\rangle_{\rm ss} + {\left(\braket{a^{\dag 2}}_0 - \langle a^{\dag 2}\rangle_{\rm ss}\right) e^{-(\gamma_c^2-\gamma_h^2 -2\I \Omega)t}}, \quad \langle a^{\dag 2}\rangle_{\rm ss} = \frac{\gamma_c \gamma_h}{\gamma_h^2 - \gamma_c^2 + 2\I \Omega}.
        \end{aligned}
    \end{equation}
    Moreover, for the connected correlation term, we have
    \begin{equation}
        \begin{aligned}
        \abs{ \braket{a^\dag a}_{\mathrm{c}, t}
        - \langle a^\dag a\rangle_{\rm c, ss}}& \le \abs{\langle a^\dag a\rangle_t - \langle a^\dag a\rangle_{\rm ss}} + \abs{\langle a^\dag\rangle_t \langle a\rangle_t - \langle a^\dag\rangle_{\rm ss} \langle a\rangle_{\rm ss}}\\
        & \note{\cref{eq:N_boson,eq:b_boson}} \le \abs{\langle a^\dag a\rangle_{\rm ss} - \langle a^\dag a\rangle_0} e^{-(\gamma_c^2-\gamma_h^2)t} + \abs{\braket{a^\dag}_0 \braket{a}_0} e^{-(\gamma_c^2-\gamma_h^2)t}.
        \end{aligned}
    \end{equation}
    Therefore, each of the four entries of the matrix $V(t) - V_{\rm ss}$ converges to zero with an exponential rate $e^{-(\gamma_c^2-\gamma_h^2)t}$, and we have
    \begin{equation}
        \norm{V(t) - V_{\rm ss}} \le C_1' e^{-(\gamma_c^2-\gamma_h^2)t},
    \end{equation}
    for some constant $C_1'$ that depends on the initial state and the system parameters $\Omega,\gamma_c,\gamma_h$, but is independent of time $t$.
    And finally, for the first-moment vector, we have
    \begin{equation}
        \begin{aligned}
        \norm{\bv m_t - \bv m_{\rm ss}} & = \sqrt{2} \sqrt{(\Re \langle a\rangle_t - \Re \langle a\rangle_{\rm ss})^2 + (\Im \langle a\rangle_t - \Im \langle a\rangle_{\rm ss})^2} = {\sqrt{2}}\abs{\langle a\rangle_t - \langle a\rangle_{\rm ss}}\\
        &\note{\cref{eq:b_boson}} = {\sqrt{2}}\abs{\langle a\rangle_0} e^{-(\gamma_c^2-\gamma_h^2)t/2}.
        \end{aligned}
    \end{equation}
    Putting all the pieces together, and by \cref{thm:trace_distance_gaussian} and the trace H\"older inequality, we get the convergence of $d(\rho(t), \rho_{\rm ss})$
    \begin{equation}
        \begin{aligned}
        d(\rho(t), \rho_{\rm ss}) &\le \frac{\sqrt 3+1}{16} \Tr(\abs{V(t)-V_{\rm ss}} \Xi ^T (V(t)+V_{\rm ss}) \Xi) + \sqrt{\frac{\min(\norm{V(t)} , \norm{V_{\rm ss}} )}{2}} \norm{\bv m_t - \bv m_{\rm ss}}\\
                & \le \frac{\sqrt 3+1}{16} \norm{V(t)-V_{\rm ss}}\Tr(V(t)+  V_{\rm ss}) + \sqrt{\frac{  \norm{V_{\rm ss}}}{2}} \norm{\bv m_t - \bv m_{\rm ss}}\\
                & \le  \frac{\sqrt 3+1}{8}C_1' e^{-(\gamma_c^2-\gamma_h^2)t}\max_{t\ge 0}(4\langle a^\dag a\rangle_t -4\abs{\langle a\rangle_t}^2+ 2) + C_2 e^{-(\gamma_c^2-\gamma_h^2)t/2} \\
                 & \note{\cref{eq:N_boson}} \le C_1 e^{-(\gamma_c^2-\gamma_h^2)t} + C_2 e^{-(\gamma_c^2-\gamma_h^2)t/2} \le (C_1+C_2)  e^{-(\gamma_c^2-\gamma_h^2 ) t/2}
        \end{aligned}
    \end{equation}
    where $C_1,C_2$ are constants that depend on the initial state and the system parameters $\Omega,\gamma_c,\gamma_h$, but are independent of time $t$.
\end{proof}

\subsubsection{Details on the bosonic squeezed state}\label{appendix:BosonPopulation}

We first derive the explicit form of the steady state in the purely dissipative case \cref{eq:squeezed_vacuum_state_alpha_zero}. For this, we need the following Proposition~\ref{prop:squeezed_vacuum_expectation} on the expectation values of $a$, $a^\dag a$ and $a^2$ for a squeezed vacuum state $\ket{\Psi_\zeta}$ which is a well-known result in quantum optics.

\begin{prop}[\cite{GrynbergAspectFabre2010}]\label{prop:squeezed_vacuum_expectation}
For the squeezed vacuum state $\ket{\Psi_\zeta} $ in \cref{eq:squeezed_vacuum_state_alpha_zero}, we write the parameter $\zeta \in \CC$ as $\zeta = re^{2\I\theta}$. Then
the expectation values of $a$, $a^\dag a$ and $a^2$ are given by
\begin{equation}\label{eq:squeezed_vacuum_expectation}
    \langle a^2 \rangle_{\zeta} = -e^{2\I \theta } \cosh r\sinh r,\quad \langle a^\dag a\rangle_{\zeta} =  \sinh^2 r,\quad \langle a \rangle_{\zeta} = 0.
\end{equation}
\end{prop}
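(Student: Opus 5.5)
The plan is to work directly with the definition
\begin{equation*}
\ket{\Psi_\zeta}=S(\zeta)\ket{0}_a,\qquad S(\zeta)=\exp\!\left(\tfrac12(\zeta^\ast aa-\zeta a^\dag a^\dag)\right),
\end{equation*}
and to compute the expectation values through the Heisenberg-type conjugation $S(\zeta)^\dag a S(\zeta)$. We do not expand in the Fock basis. Write $\zeta=re^{2\I\theta}$ and $X:=\tfrac12(\zeta^\ast a^2-\zeta a^{\dag 2})$, which is anti-Hermitian, so $S=e^{X}$ is unitary. The key identity is the Bogoliubov relation
\begin{equation*}
S(\zeta)^\dag a\, S(\zeta)=a\cosh r-e^{2\I\theta}a^\dag\sinh r .
\end{equation*}

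To prove it, use the Baker--Campbell--Hausdorff (Hadamard) expansion $e^{-X}ae^{X}=\sum_{n\ge0}\frac{1}{n!}[\,\cdot\,,X]^{n}$ applied to $a$, that is, the nested commutators $[a,X]$, $[[a,X],X]$, and so on. The CCR give
\begin{equation*}
[a,X]=-\zeta a^\dag,\qquad [a^\dag,X]=-\zeta^\ast a .
\end{equation*}
From these, $[[a,X],X]=|\zeta|^2a=r^2 a$, so the even-order terms produce $a\cosh r$ and the odd-order terms produce $-e^{2\I\theta}a^\dag\sinh r$. Equivalently, one can differentiate $a(\tau):=e^{-\tau X}ae^{\tau X}$ in $\tau$, obtain a closed linear $2\times2$ ODE system for $(a(\tau),a^\dag(\tau))$, and solve it by hyperbolic functions. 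This is the only step requiring care, since signs and phase conventions must match the stated formula. In particular, I will check that the sign of $\zeta$ in $S$ yields the minus sign in $\langle a^2\rangle$.

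With the Bogoliubov relation in hand, everything reduces to vacuum expectation values using $a\ket0=0$:
\begin{itemize}
\item $\langle a\rangle_\zeta=\bra0(a\cosh r-e^{2\I\theta}a^\dag\sinh r)\ket0=0$.
\item For the number operator, $S^\dag a^\dag aS=(S^\dag aS)^\dag(S^\dag aS)$. Only the $a\,a^\dag$ term survives in the vacuum, giving $\sinh^2 r\,\bra0 aa^\dag\ket0=\sinh^2 r$.
\item For $a^2$, $S^\dag a^2S=(S^\dag aS)^2$. Only the cross term $a\,a^\dag$ contributes, giving $-e^{2\I\theta}\cosh r\sinh r$.
\end{itemize}

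As a consistency check against \cref{eq:squeezed_vacuum_state_alpha_zero}, take $\theta=0$ and $r=\arctanh s$. Then $\tanh r=s$, and the relation $K\ket{\Psi}=0$, which is equivalent to $(a+sa^\dag)\ket\Psi=0$, is recovered from $S^\dag(a\cosh r+a^\dag\sinh r)S\propto a$. This also links the proposition to its later use in deriving the Fock expansion.
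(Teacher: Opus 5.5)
The paper gives no proof of this proposition. It quotes it from Grynberg--Aspect--Fabre as a standard quantum-optics fact, and your Bogoliubov-transformation argument is the standard derivation behind that citation. It is correct as written.

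With $X=\tfrac12(\zeta^\ast a^2-\zeta a^{\dag 2})$ one has $[a,X]=-\zeta a^\dag$ and $[a^\dag,X]=-\zeta^\ast a$. Hence $S^\dag aS=a\cosh r-e^{2\I\theta}a^\dag\sinh r$, and the three vacuum expectation values follow with the stated signs. In particular, the only surviving cross term gives $\langle a^2\rangle_\zeta=-e^{2\I\theta}\cosh r\sinh r$. Because $a$ and $X$ are unbounded, the ODE version you mention is the more robust justification than summing the Hadamard series: solve $\partial_\tau a(\tau)=-\zeta a^\dag(\tau)$, $\partial_\tau a^\dag(\tau)=-\zeta^\ast a(\tau)$ with $a(0)=a$. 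At the paper's level of rigor, either suffices.

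Your consistency check is also right, and it improves on how the paper uses the proposition. The paper matches the steady-state moments to obtain $r=\arctanh s$, $\theta=\pi$. This is the same $\zeta=r$ as your $\theta=0$, since only $e^{2\I\theta}$ enters. The paper then imports the textbook Fock expansion and verifies $K\ket{\Psi_\zeta}=0$ by a term-by-term cancellation of Fock coefficients. Your identity $S^\dag(a\cosh r+a^\dag\sinh r)S=a$ is an exact equality, not just a proportionality. It gives $(a+sa^\dag)\ket{\Psi_\zeta}=0$ in one line, so you could bypass both the moment matching and the Fock-basis verification.
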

\begin{proof}[Proof of \cref{eq:squeezed_vacuum_state_alpha_zero}]
    In the purely dissipative case, the steady state moments are given by \cref{eq:N_boson,eq:b2_boson,eq:b_boson} with the Hamiltonian contribution removed, which yields
    \begin{equation}\label{eq:steady_state_boson_no_Hamiltonian}
    \langle N\rangle_{\rm ss} = \frac{s^2}{1-s^2},\quad \langle a^2\rangle_{\rm ss} = \frac{s}{s^2-1},\quad \langle a\rangle_{\rm ss} = 0.
    \end{equation}
    We solve for the parameter $\zeta$ by matching \cref{eq:steady_state_boson_no_Hamiltonian} with \cref{eq:squeezed_vacuum_expectation} in Proposition~\ref{prop:squeezed_vacuum_expectation}. We find that
    \begin{equation}
        r = \text{arctanh}(s),\quad \theta = \pi.
    \end{equation}
    The Fock basis expansion of the squeezed vacuum state $\ket{\Psi_\zeta} = \exp\!\left(\frac12 (\zeta^\ast aa  - \zeta a^\dag a^\dag)\right) \ket{0}_a$ is given by \cite{GrynbergAspectFabre2010}
    \begin{equation}
       \ket{\Psi_\zeta} = \frac{1}{\sqrt{\cosh r}} \sum_{n=0}^{\infty} (-e^{2\I \theta} \tanh r)^n \frac{\sqrt{(2n)!}}{2^n n!} \ket{2n}.
    \end{equation}
    Plugging in the values of $r$ and $\theta$, we get the Fock basis expansion of the steady state as
    \begin{equation}
        \ket{\Psi_{\zeta}}  = (1-s^2)^{1/4} \sum_{n=0}^{\infty} (-s)^n \frac{\sqrt{(2n)!}}{2^n n!} \ket{2n}.
    \end{equation}
We verify that $\ket{\Psi_\zeta}$ is in $\ker K = \ker (a+sa^\dag)$ by directly applying $K$ to $\ket{\Psi_\zeta}$:
\begin{equation}
\begin{aligned}
    K \ket{\Psi_\zeta} &= {\gamma_c } (a + sa^\dag) \ket{\Psi_\zeta} 
    \\
    & \propto   \sum_{ n=0}^\infty  (-s)^{n+1} \sqrt{2n+2} \cdot \frac{\sqrt{(2n+2)!}}{2^{n+1} (n+1)!} \ket{2n+1} - \sum_{n=0}^\infty   (-s)^{n+1} \sqrt{2n+1} \cdot \frac{\sqrt{(2n)!}}{2^n n!} \ket{2n+1}.
\end{aligned}
\end{equation}
Note that $
    \sqrt{2n+2} \cdot \frac{\sqrt{(2n+2)!}}{2^{n+1} (n+1)!} 
    = \sqrt{2n+1} \cdot \frac{\sqrt{(2n)!}}{2^n n!},
$
therefore the two sums cancel each other and we get $K \ket{\Psi_\zeta} = 0$. This is consistent with the fact that $\ket{\Psi_\zeta}$ is a pure steady state of the dynamics, and thus must be annihilated by the jump operator $K$.
\end{proof}

In the numerical simulation in Sec.~\ref{sec:free_bosons}, although the steady state is a pure squeezed vacuum state, the states along the evolution are mixed states. In this section, we provide some details on how to compute the population $\rho_{nn}(t) = \bra{n}\rho(t)\ket{n}$ of the Fock state $\ket{n}$ at time $t$ using the covariance matrix of a mixed squeezed thermal state. For simplicity, we assume that $\wh f(\Omega ) = s$ and $\wh f(-\Omega) = 1$ with $0\le s <1$. We assume that the initial state has zero first moment, i.e., $\langle a \rangle_0 = 0$, which means that $\langle a\rangle_t = 0$ for all $t\ge 0$ and thus we do not need a displacement operator in the following discussion. For a systematic discussion on photon-number distribution in more general cases, we refer to \cite{MarianMarian1993}.

A squeezed thermal state can be written as
\begin{equation}
    \rho = S(\zeta) \rho_{\text{th}} S^\dag(\zeta),
\end{equation}
where
\begin{equation}
    \rho_{\rm th}  = \sum_{n=0}^\infty \frac{\overline{n}^n}{(\overline{n}+1)^{n+1}} \ket{n}\bra{n},\quad \overline{n} = \frac12 \sqrt{(2 \mathsf N+1)^2 - 4|\mathsf M|^2} - \frac12,
\end{equation}
with $\mathsf N = \langle a^\dag a\rangle$ and $\mathsf M = \langle a^2\rangle$. The squeezing operator is $S(\zeta) = \exp\!\left(\frac12 (\zeta^\ast aa  - \zeta a^\dag a^\dag)\right)$, and the squeezing parameter $\zeta$ is determined by 
\begin{equation}
    \cosh 2r = \frac{2\mathsf N+1}{2\overline{n} +1} ,\quad  {e^{2\I\theta} \sinh 2r = -\frac{2\mathsf M}{2\overline{n} +1}},\quad \zeta = r e^{2\I\theta}.
\end{equation}
Then the exact probability $\rho_{nn}(t)$ can be computed by the following formula \cite{MarianMarian1993}:
\begin{equation}\label{eq:photon_number_distribution}
    \rho_{n,n} = \bra{n}\rho\ket{n} = \frac{\mathsf{h}_2^{n/2}}{\mathsf{h}_0^{(n+1)/2}} \Theta_n\!\left(\frac{\mathsf{h}_1}{ \sqrt{\mathsf{h}_0 \mathsf{h}_2}}\right),
\end{equation}
where $\Theta_n(x)$ denotes the $n$-th Legendre polynomial and the variables $\mathsf{h}_0$, $\mathsf{h}_1$ and $\mathsf{h}_2$ are defined as
\begin{equation}
    \mathsf{h}_0    = (\mathsf{N}+1)^2 - |\mathsf{M}|^2 ,\quad
    \mathsf{h}_1 = \mathsf{N}(\mathsf{N}+1) - |\mathsf{M}|^2 , \quad
    \mathsf{h}_2 = \mathsf{N}^2 - |\mathsf{M}|^2.
\end{equation}
Using \cref{eq:photon_number_distribution} one could compute the photon number distribution at any time $t$ using the moment information $\mathsf{N}(t) = \langle a^\dag a\rangle_t$ and $\mathsf{M}(t) = \langle a^2\rangle_t$. This shows the exponentially decaying behavior of the population $\rho_{nn}(t)$ in the Fock basis for bosonic squeezed thermal states.

\section{The subsystem bipartite spin fluctuation and the Luttinger parameter}
\label{appendix:analysis}

For $J_1$--$J_2$ and XXZ spin chains in \cref{sec:models}, we make use of the fact that the subsystem bipartite spin fluctuation \cite{ SongRachelFlindtEtAl2012, RachelLaflorencieSongEtAl2012}
\begin{equation}
\mathcal{F}(\ell) = \sum_{i,j\in \mathcal{A}_\ell} \left( \langle S_i^z S_j^z \rangle - \langle S_i^z \rangle \langle S_j^z \rangle \right),\quad \mathcal{A}_\ell = \{1,2,\cdots,\ell\},
\end{equation}
of the ground state has a logarithmic scaling with subsystem size $\ell$
\begin{equation}\label{eq:fluctuation_scaling}
\mathcal{F}_{\text{OBC}}(\ell) = \frac{\mc K}{2
\pi^2} \log \wt \ell +  \mc O(\ell^{-\alpha_1}), \qquad \mathcal{F}_{\text{PBC}}(\ell) = \frac{\mc K}{\pi^2} \log \wt \ell + \mc O(\ell^{-\alpha_2})
\end{equation}
in the gapless phase, where $\widetilde{\ell} = \frac{L}{\pi} \sin \left( \frac{\pi \ell}{L} \right)$ is the conformal distance which accounts for finite-size effects, and
$\alpha_{1,2}>0$ are non-universal constants that correspond to the decaying rate of the subleading term. 
In the N\'eel or VBS phase, $\mathcal{F}(\ell)$ however saturates to a constant in the thermodynamic limit
\cite{SongRachelFlindtEtAl2012}. In fact, it is shown that in the 1D case, if the spin is globally conserved, then $\mathcal{F}(\ell)$ obeys the same scaling as the entanglement entropy \cite{SongRachelLeHur2010,RachelLaflorencieSongEtAl2012}.

In practice, we extract the Luttinger parameter $\mc K$ in the TLL regime by fitting the numerical data using the model 
up to the subleading term  \cite{SongRachelFlindtEtAl2012, RachelLaflorencieSongEtAl2012}
\begin{equation}\label{eq:fluctuation_fitting_J1J2}
    \mathcal{F}_{\text{PBC}}(\ell) = \frac{\mc K}{\pi^2} \log \wt \ell + a_0 + a_1 \frac{(-1)^\ell}{\wt \ell  }+ a_2(-1)^{\ell}  +\mc O(\ell^{-2}).
\end{equation}
For the XX Hamiltonian, it indeed holds that $a_1=a_2=0$~\cite{SongRachelFlindtEtAl2012}.

\section{Additional numerical results}
\label{appendix:numerical_1D}

\subsection{$J_1$--$J_2$ model}\label{appendix:J1J2_supplementary}

\paragraph{The Luttinger parameter.}
The Luttinger parameter $\mc K$ serves as a powerful probe of quantum phase transitions in one-dimensional systems with gapless phases described by Luttinger liquid theory. For
benchmarking purposes, we compute the ground state of 
the $J_1$--$J_2$ model with PBCs, using density matrix renormalization group (DMRG) with a maximal bond dimension $\chi = 800$, and extract the Luttinger parameter $\mc K$ by fitting the subsystem bipartite spin fluctuation $\mathcal{F}$. The results are shown in \cref{fig:DMRG_K_J1J2}. In the gapless phase, the fitted Luttinger parameter remains greater than $\frac12$, while {the same fitted logarithmic coefficient} decreases sharply upon entering the gapped phase. The $\mc K = \frac12$ transition point 
occurs approximately at the critical point $J_2/J_1 \approx 0.241$. Notably, even for relatively modest system sizes, the fitted $\mc K$ already exhibits a clear signature of the phase transition, making this approach an efficient probe of quantum criticality and phase boundaries, as discussed in Ref.~\cite{RachelLaflorencieSongEtAl2012}.

\begin{figure}[!h]
    \centering
    \includegraphics[width=0.4\linewidth]{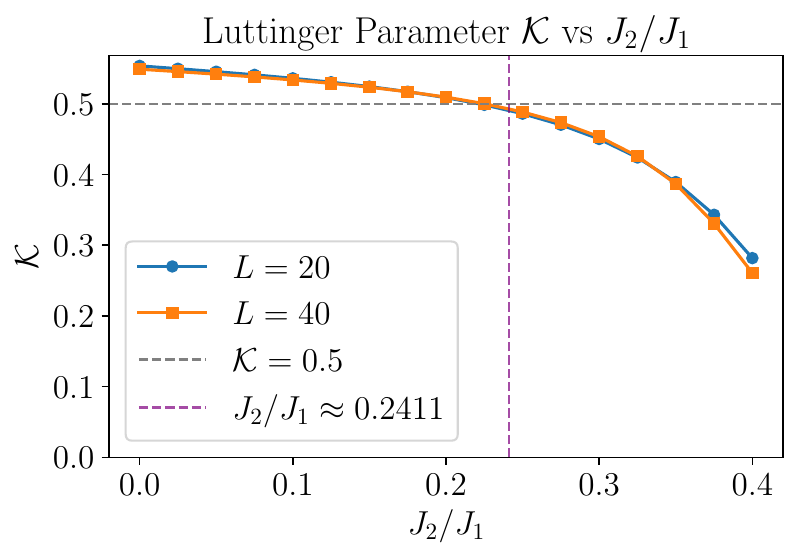}
    \caption{ {The fitted logarithmic coefficient, interpreted as the Luttinger parameter in the TLL regime,} for the ground state of $J_1$--$J_2$ model with PBCs. The critical point is around $J_2/J_1 \approx 0.241$.}
    \label{fig:DMRG_K_J1J2}
\end{figure}

 We provide detailed numerical results for fitting {the logarithmic coefficient $\mc K$} for the $J_1$--$J_2$ model along the cooling dynamics in \cref{sec:J1J2}. The least-squares fitting is performed using the Levenberg--Marquardt algorithm according to the model specified in \cref{eq:fluctuation_fitting_J1J2}. We use the filter resolution $\Delta = 0.5$ for various Hamiltonian parameters $J_2/J_1$ within the critical region $0\le J_2/J_1 \le 0.35$. The results are shown in \cref{fig:fit_J1J2_res0.5}.  

  \begin{figure}[!h]
    \centering
   \begin{subfigure}[b]{0.34\linewidth}
        \caption{$J_2/J_1 = 0.150$}
    \includegraphics[width=\linewidth]{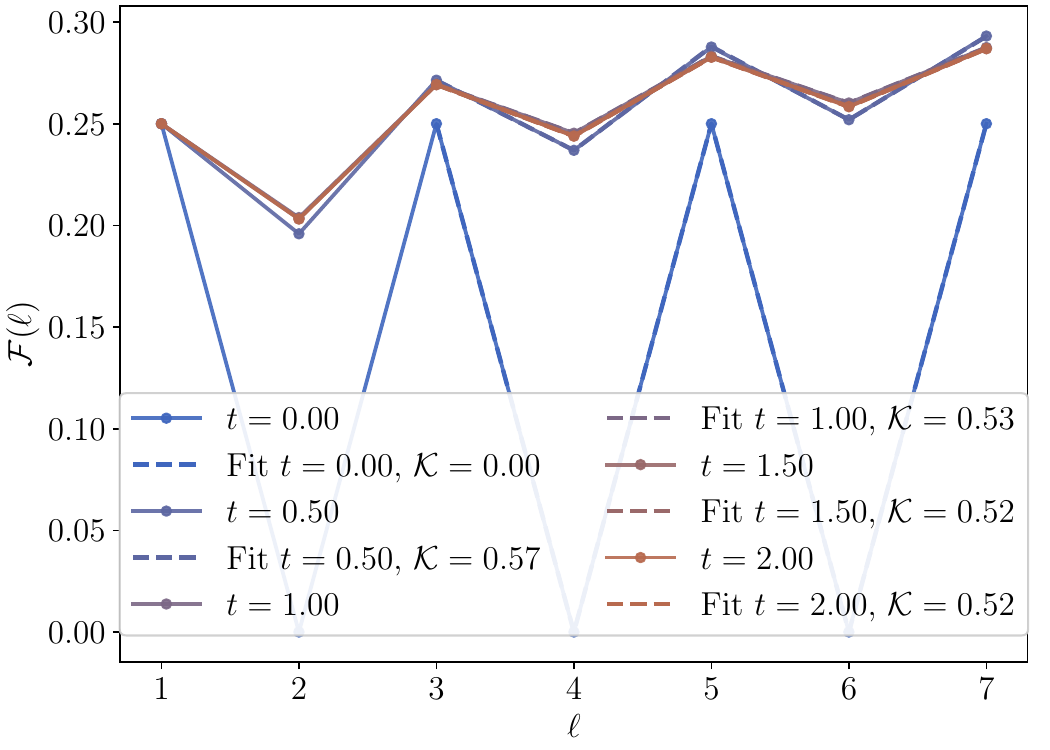}
    \end{subfigure}
    \begin{subfigure}[b]{0.34\linewidth}
           \caption{$J_2/J_1 = 0.180$}
    \includegraphics[width=\linewidth]{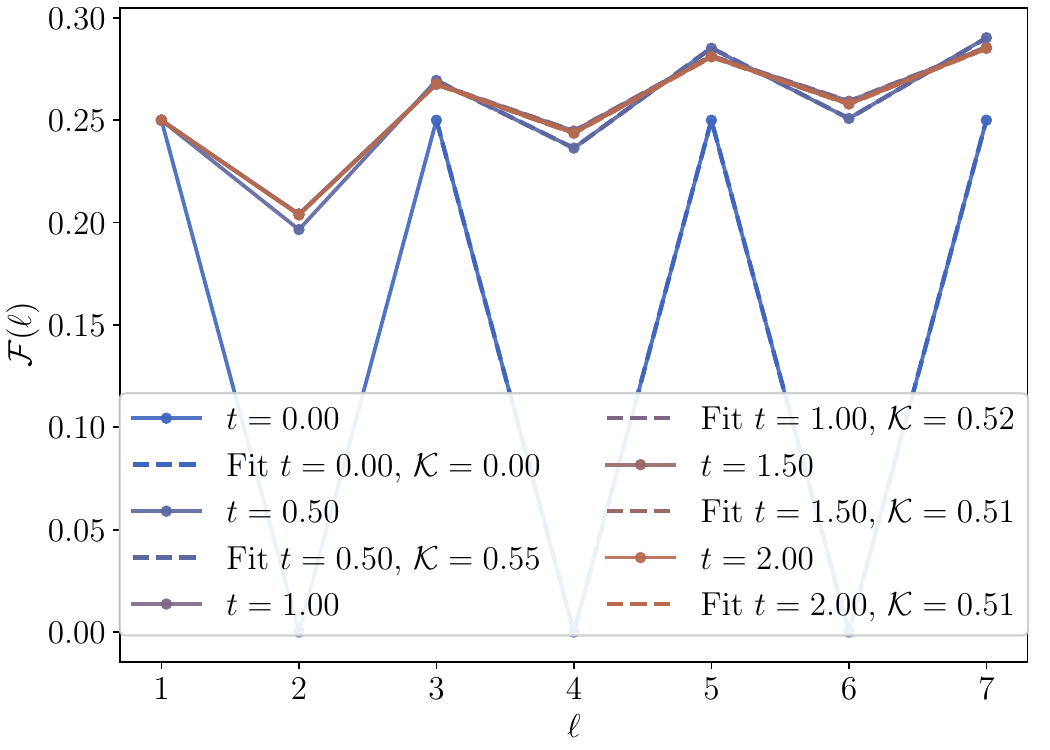}
    \end{subfigure}
    \\
    \begin{subfigure}[b]{0.34\linewidth}
              \caption{$J_2/J_1 = 0.280$}
    \includegraphics[width=\linewidth]{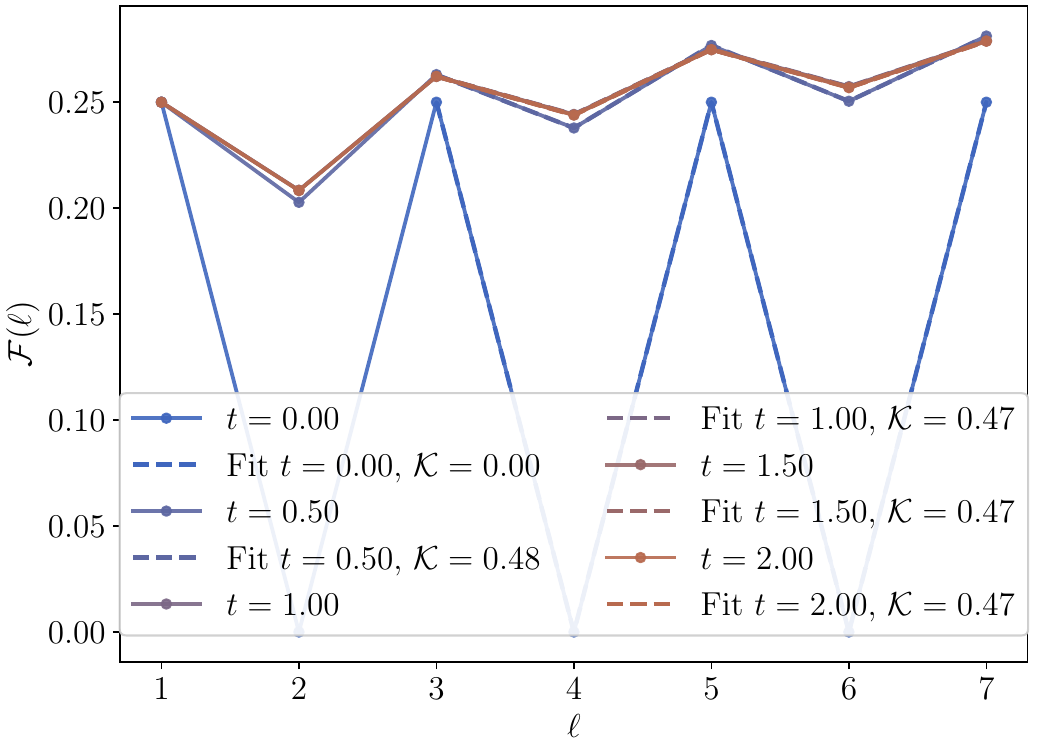}
    \end{subfigure}
    \begin{subfigure}[b]{0.34\linewidth}
                \caption{$J_2/J_1 = 0.350$}
    \includegraphics[width=\linewidth]{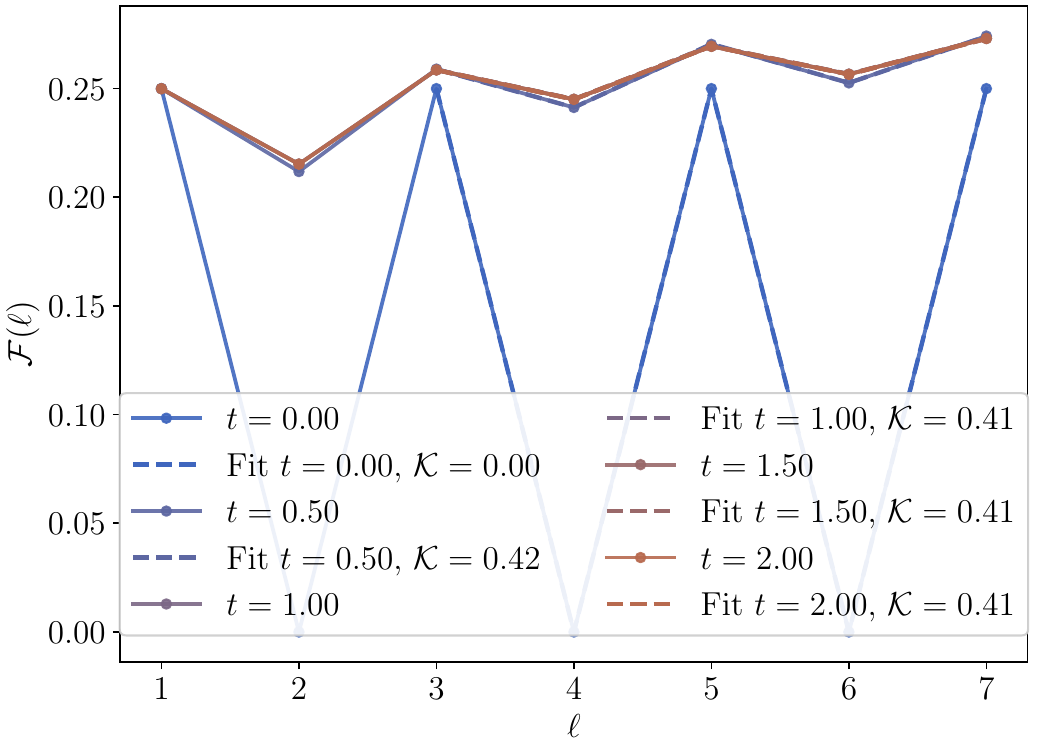}
    \end{subfigure}
\caption{Fitting the Luttinger parameter $\mc K$ along the cooling dynamics for the $J_1$--$J_2$ model with filter resolution $\Delta = 0.5$. The results are shown for $J_2/J_1 = 0.150,~0.180,~ 0.280,~ 0.350$.\label{fig:fit_J1J2_res0.5}}
  \end{figure}

{At the Majumdar--Ghosh initial state, which is the ground state at $J_2/J_1 = 0.5$, the fitting quality is poor because \cref{eq:fluctuation_fitting_J1J2} is derived from the CFT description of the Luttinger liquid and is not expected to apply deep inside the gapped phase. As the cooling dynamics proceeds, the fitting quality improves. For $J_2/J_1 < 0.241$, the fitted Luttinger parameter approaches values larger than $\frac12$, consistent with the gapless phase. For $J_2/J_1 > 0.241$, the same fit should instead be read as an effective finite-size diagnostic, where the thermodynamic fluctuation is expected to saturate in the gapped phase, and the logarithmic coefficient drops below $\frac12$.}

 \begin{figure}
    \centering
    \begin{subfigure}{0.66\linewidth}
        \overlaption[fig:quench_S_pi]{\includegraphics[width=\linewidth]{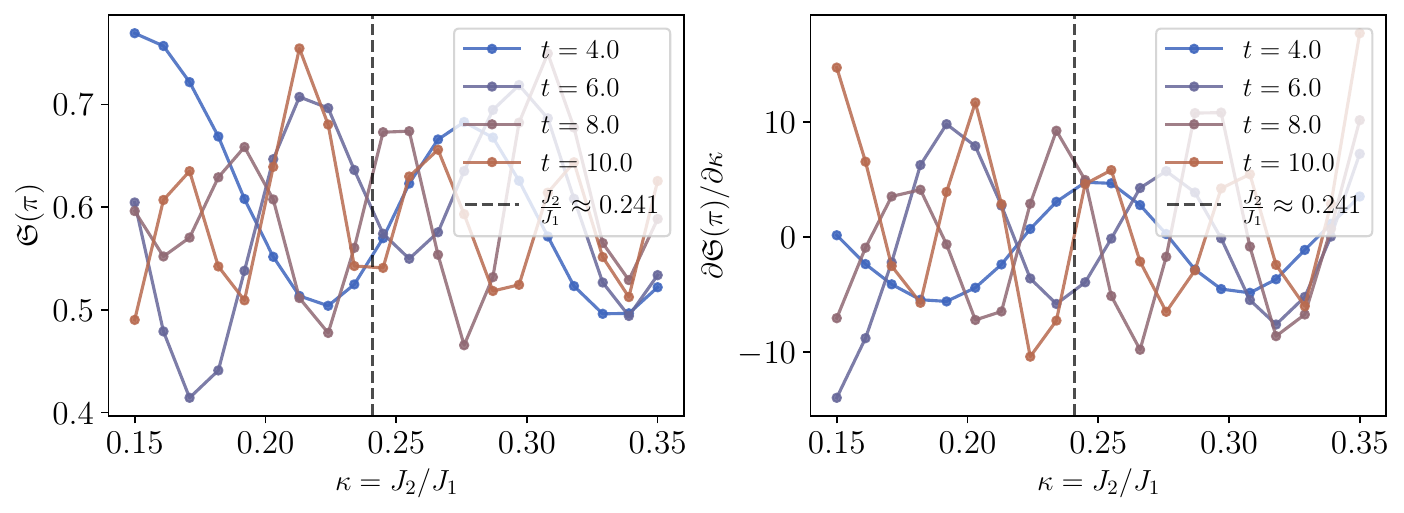}}
    \end{subfigure}\\
    \begin{subfigure}{0.66\linewidth}
       \overlaption[fig:quench_D]{\includegraphics[width=\linewidth]{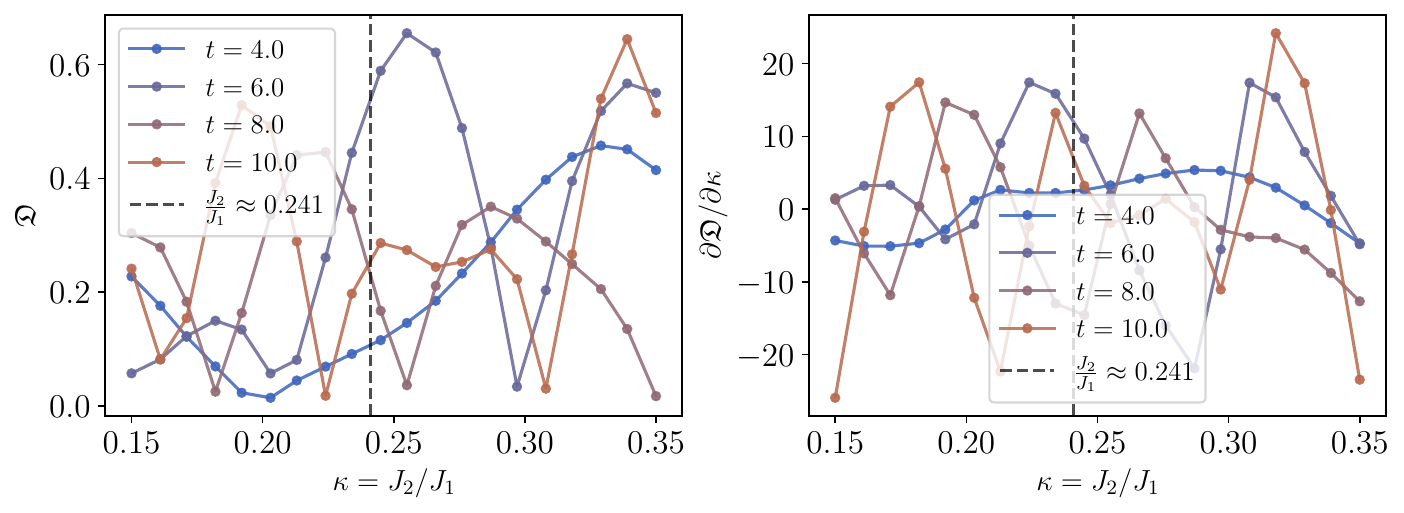}}
    \end{subfigure}\\
     \begin{subfigure}{0.33\linewidth}
        \overlaption[fig:quench_S_pi_trajectory]{\includegraphics[width=\linewidth]{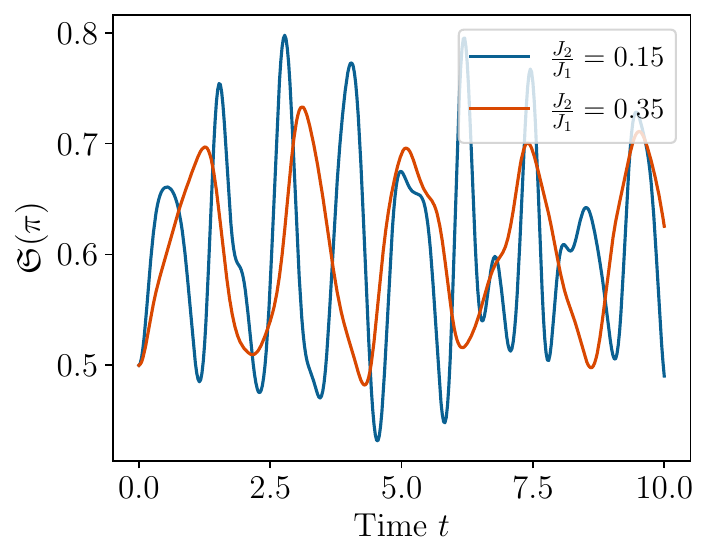}}
    \end{subfigure}
     \begin{subfigure}{0.33 \linewidth}
        \overlaption[fig:quench_D_trajectory]{\includegraphics[width=\linewidth]{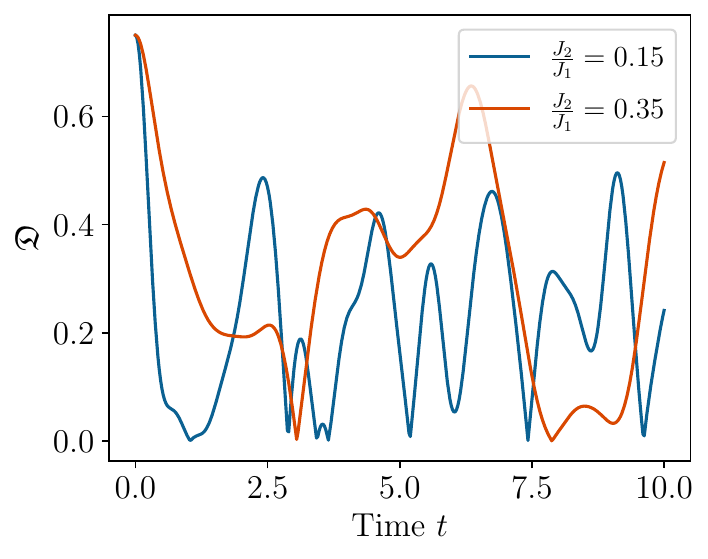}}
    \end{subfigure}
    \caption{The quench dynamics of the $J_1$--$J_2$ model with $L=14$ and PBC. The system is initialized in the Majumdar--Ghosh state. {\bf (a)} The spin structure factor $\mathfrak{S}(\pi)$ and its derivative with respect to $\kappa = J_2/J_1$ from the real time evolution at $t= 4,6,8,10$. {\bf (b)}  The dimer order parameter $\mathfrak D$ and its derivative with respect to $\kappa = J_2/J_1$ from the real time evolution at $t= 4,6,8,10$. {\bf (c, d)} The time evolution of $\mathfrak{S}(\pi)$ and $\mathfrak D$ for $J_2/J_1 = 0.150$ and $ 0.350$.}\label{fig:quench_J1J2}
 \end{figure}

\paragraph{Benchmark against the quench dynamics.}
 
As discussed in the main text, while the quench dynamics is a useful tool for probing the quantum phase diagrams, it is not easy to reveal informative signals in the quench dynamics for the BKT transition in the $J_1$--$J_2$ model. Here we provide the numerical results for the quench dynamics in the $J_1$--$J_2$ model in the vicinity of the critical point $J_2/J_1 \approx 0.241$. The system is initialized in the Majumdar--Ghosh state. We compute the spin structure factor $\mathfrak{S}(q)$ at $\pi$ as well as the dimer order parameter $\mathfrak D$, defined as
\begin{equation}
    \mathfrak S(q ) := \frac{1}{L} \sum_{i,j} e^{-\I q(i-j)} \langle S_i^z S_j^z \rangle, \quad \mathfrak D = \frac{1}{L} \sum_{i}  \abs{\langle \mathbf{S}_i \cdot \mathbf{S}_{i+1}   \rangle - \langle \mathbf{S}_{i-1} \cdot \mathbf{S}_{i}\rangle}.
\end{equation}
At time $t$, we compute how the observables change with respect to the parameter $\kappa = J_2/J_1$ in the vicinity of $\kappa_{\rm crit} \approx 0.241$, following \cite{HaldarMallayyaHeylEtAl2021}. The results are shown in \cref{fig:quench_J1J2}. We see that the spin structure factor $\mathfrak{S}(\pi)$ and the dimer order parameter $\mathfrak D$ do not show clear signatures of the phase transition along the quench dynamics. The derivatives of these observables with respect to $\kappa$ also do not show clear peaks around the critical point. {This finite-size and finite-time response is consistent with the difficulty of resolving a BKT transition from such quench observables, since the VBS correlation length grows exponentially near the transition.}

\subsection{Kitaev honeycomb model}\label{appendix:kitaev_supplementary}

{{Besides the Chern number, another phase-sensitive diagnostic} for this phase transition is the mutual information between the two sites in the unit cell~\cite{CuiCaoFan2010}, which can be computed as}
\begin{equation}   
    I(\bv sA,\bv sB) = 1 - H_b\left( {
        \frac12 + 2\langle S_{\bv sA}^z S_{\bv sB}^z \rangle
    }\right),
\end{equation} 
since {for the states considered here} the reduced density matrix of the two sites has an explicit form 
\begin{equation}
\rho_{\bv sA,\bv sB} = \diag\left(\frac14+\langle S_{\bv sA}^z S_{\bv sB}^z \rangle, \frac14-\langle S_{\bv sA}^z S_{\bv sB}^z \rangle, \frac14-\langle S_{\bv sA}^z S_{\bv sB}^z \rangle, \frac14+\langle S_{\bv sA}^z S_{\bv sB}^z \rangle\right).
\end{equation}
Here $H_b(p) = -p\log_2(p) -(1-p)\log_2(1-p)$ is the binary entropy. $I(\bv sA,\bv sB)$ is a function of $J_z$ and becomes non-analytic at the critical point $J_z = 0.5$. {The change in the derivative of $I(\bv sA,\bv sB)$ with respect to $J_z$ therefore gives another diagnostic of the transition.} We will discuss the details that are relevant for simulating the covariance matrix dynamics and computing the $z$-bond correlation function in Appendix~\ref{appendix:kitaev_zbond}.

As shown in \cref{fig:dissipative_2D}, {after a short evolution time the two regimes are already distinguishable from the $J_z$ dependence of $I(\bv sA,\bv sB)$ and its finite-difference derivative}, as shown in \cref{fig:dissipative_2Da,fig:dissipative_2Db}. At the same time, the {trace-norm discrepancy} between the evolved state and the ground state remains large for a wide range of the target $J_z$, as shown in \cref{fig:dissipative_2Dc}. This comparison indicates that phase identification  requires substantially less time than full ground-state preparation, even without exploiting the topological invariants.

\begin{figure*} 
    \centering
    \begin{subfigure}{0.32\linewidth}
    \overlaption[fig:dissipative_2Da]{\includegraphics[width=0.9\linewidth]{{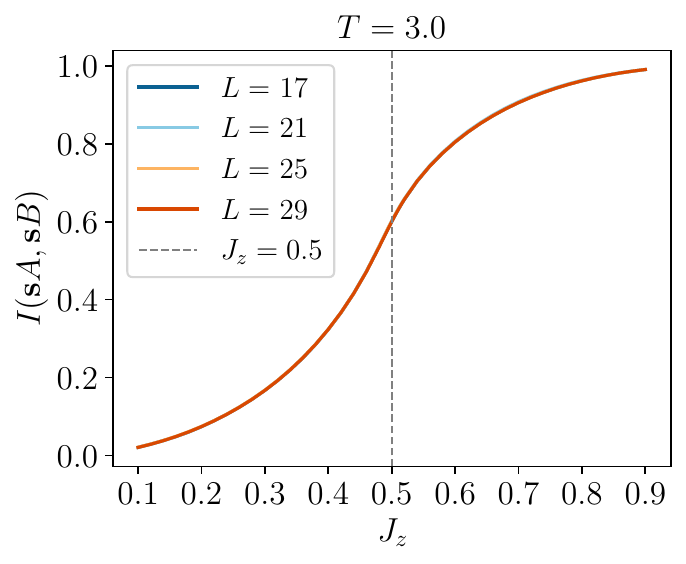}}}
    \end{subfigure}
    \hfill
    \begin{subfigure}{0.307\linewidth}
    \overlaption[fig:dissipative_2Db]{\includegraphics[width=.9\linewidth]{{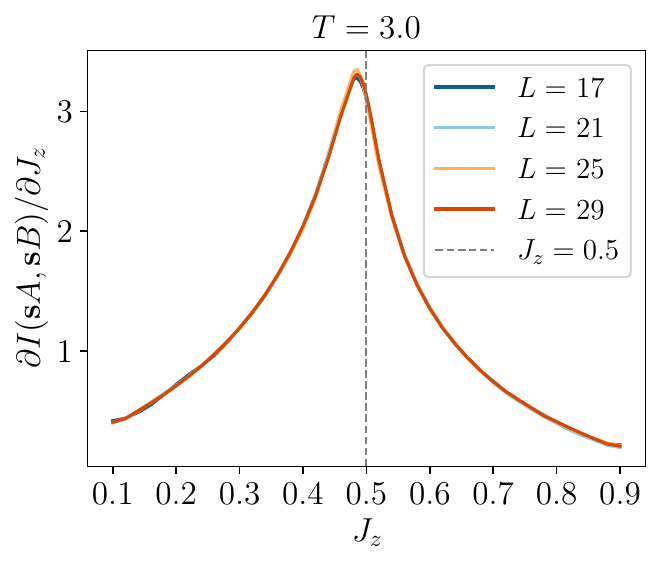}}}
    \end{subfigure}
    \hfill
    \begin{subfigure}{0.345\linewidth}
    \overlaption[fig:dissipative_2Dc]{\includegraphics[width=.9\linewidth]{{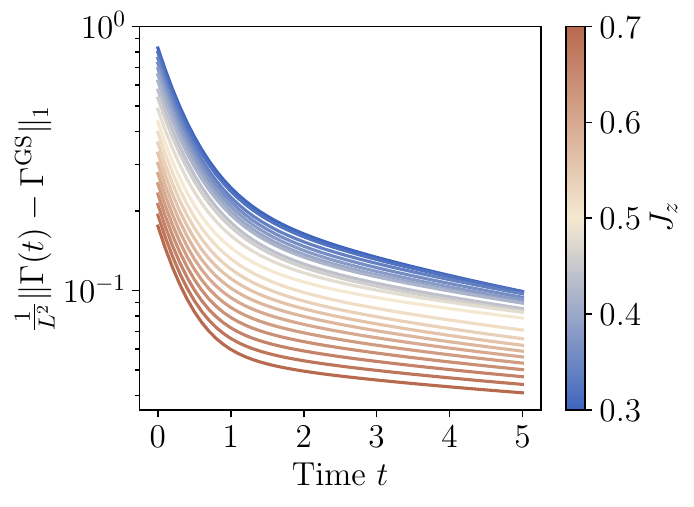}}}
    \end{subfigure}
    \caption{The dissipative cooling dynamics of the Kitaev honeycomb model with $L = 17,21,25,29$. The filter resolution $\Delta = 0.5$. We run the simulation for several values of $J_z$ from $0.3$ to $0.7$. {\bf (a)} The mutual information $I(\bv sA,\bv sB)$ between the two sites in the unit cell, computed at time $T=3$. {\bf (b)} The derivative of the mutual information with respect to $J_z$ evaluated using finite difference.  {\bf (c)} The discrepancy in the trace norm between the evolved state and the ground state along the evolution up to time $T=5$.  }
    \label{fig:dissipative_2D}
\end{figure*}

\subsection{Trends in the realistic cooling dynamics for free and interacting systems}\label{appendix:free_and_interacting_MoreNumerical}

\paragraph{Free fermions.}
We provide more results on the dissipative cooling for the tight-binding model that is equivalent to the XX model by Jordan--Wigner transformation
\begin{equation}
    H = \sum_{i=1}^{L-1}(c_{i+1}^\dag  c_i+ c_i^\dag c_{i+1}).
\end{equation}
Here we apply the same setting of the initial state and jump operators as in \cref{sec:free_fermion}, and we use the realistic filters with resolution $\Delta = 0.3,0.6$ and $1.2$. The results are shown in 
\cref{fig:imperfect_numerical_FF}. 
\begin{figure*}[!htbp]
    \centering
    \includegraphics[width=0.31\linewidth]{{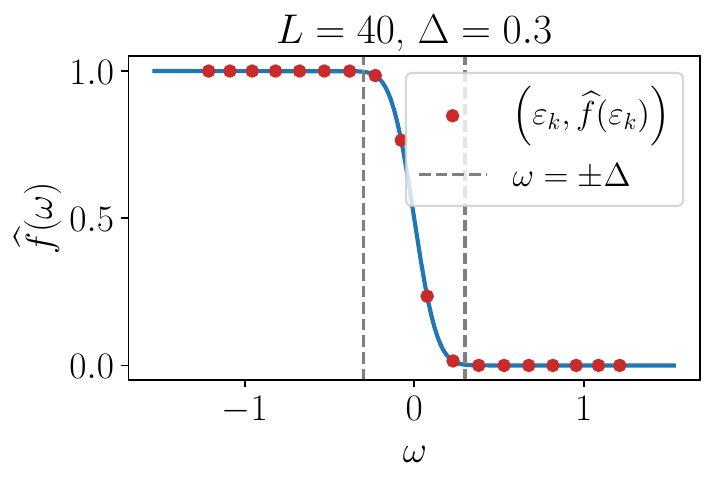}}
    \hfill
    \includegraphics[width=0.31\linewidth]{{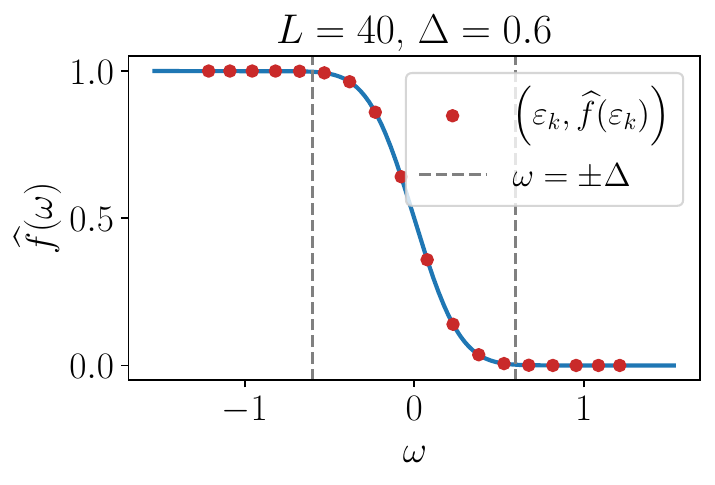}}
    \hfill
    \includegraphics[width=0.31\linewidth]{{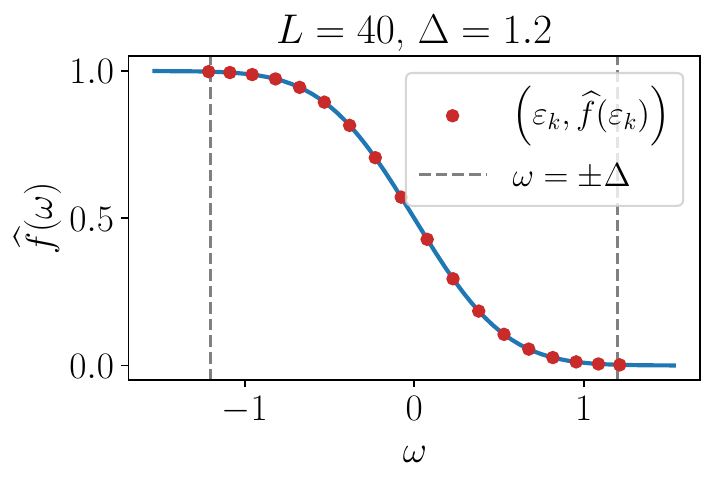}}
    \vspace{0.2cm}
    \includegraphics[width=0.325\linewidth]{{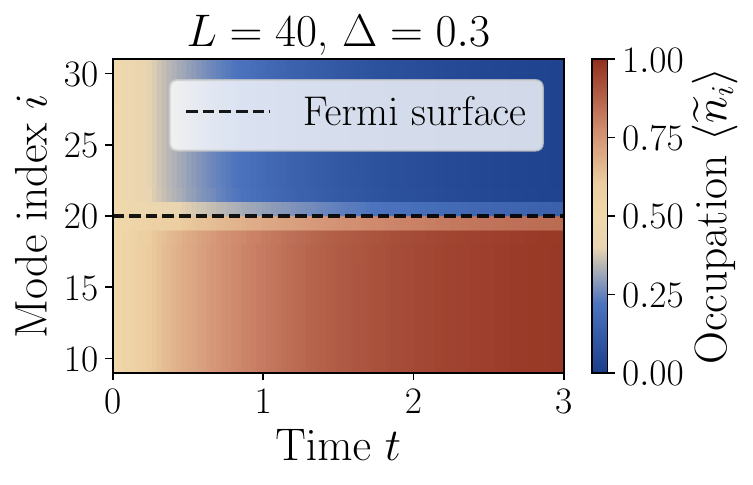}}
    \hfill
    \includegraphics[width=0.325\linewidth]{{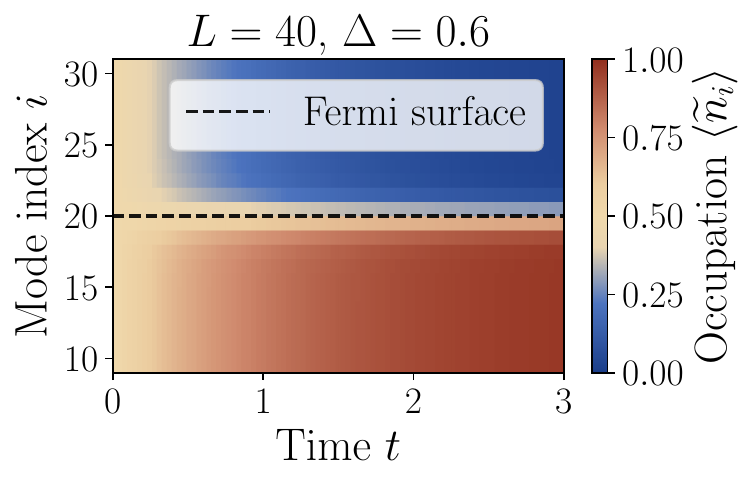}}
    \hfill
    \includegraphics[width=0.325\linewidth]{{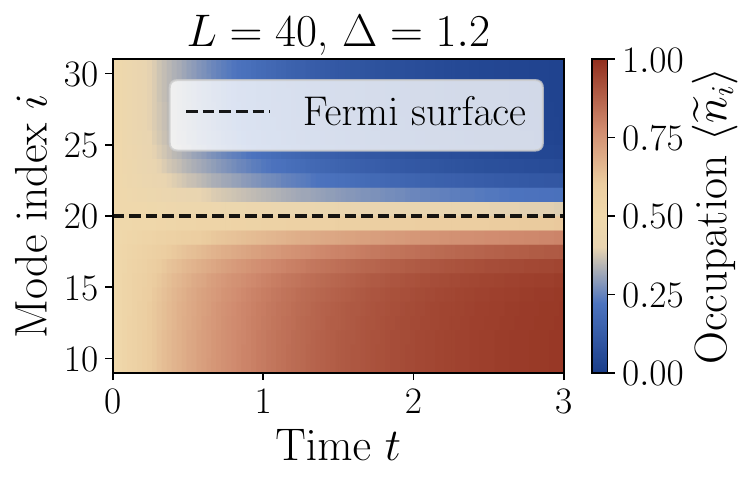}}
    \caption{\label{fig:imperfect_numerical_FF}Numerical validations of the realistic filter. Top row: filter profiles. Bottom row: the time evolution of canonical occupations. The results are shown for resolutions $\Delta = 0.3, 0.6, 1.2$ at a system size $L=40$.}
\end{figure*}

We observe that outside the filter-resolution window $\Delta$, the occupation exhibits a sharp cutoff, and the corresponding quasiparticle modes {relax to their ground-state occupations, yielding the exact half-filled profile away from the Fermi window}. In contrast, within $\Delta$, the quasiparticle modes become ``soft'' and may be partially populated, with occupation rates determined by the ratio between the cooling and heating strengths of the filter (see, for example, \cref{eq:steady_state_free_fermion}). The behavior shown in \cref{fig:imperfect_numerical_FF} is consistent with the theoretical predictions established in \cref{appendix:free_fermion}.

\begin{figure*}[!htbp]
    \centering
    \includegraphics[width=0.80\linewidth]{{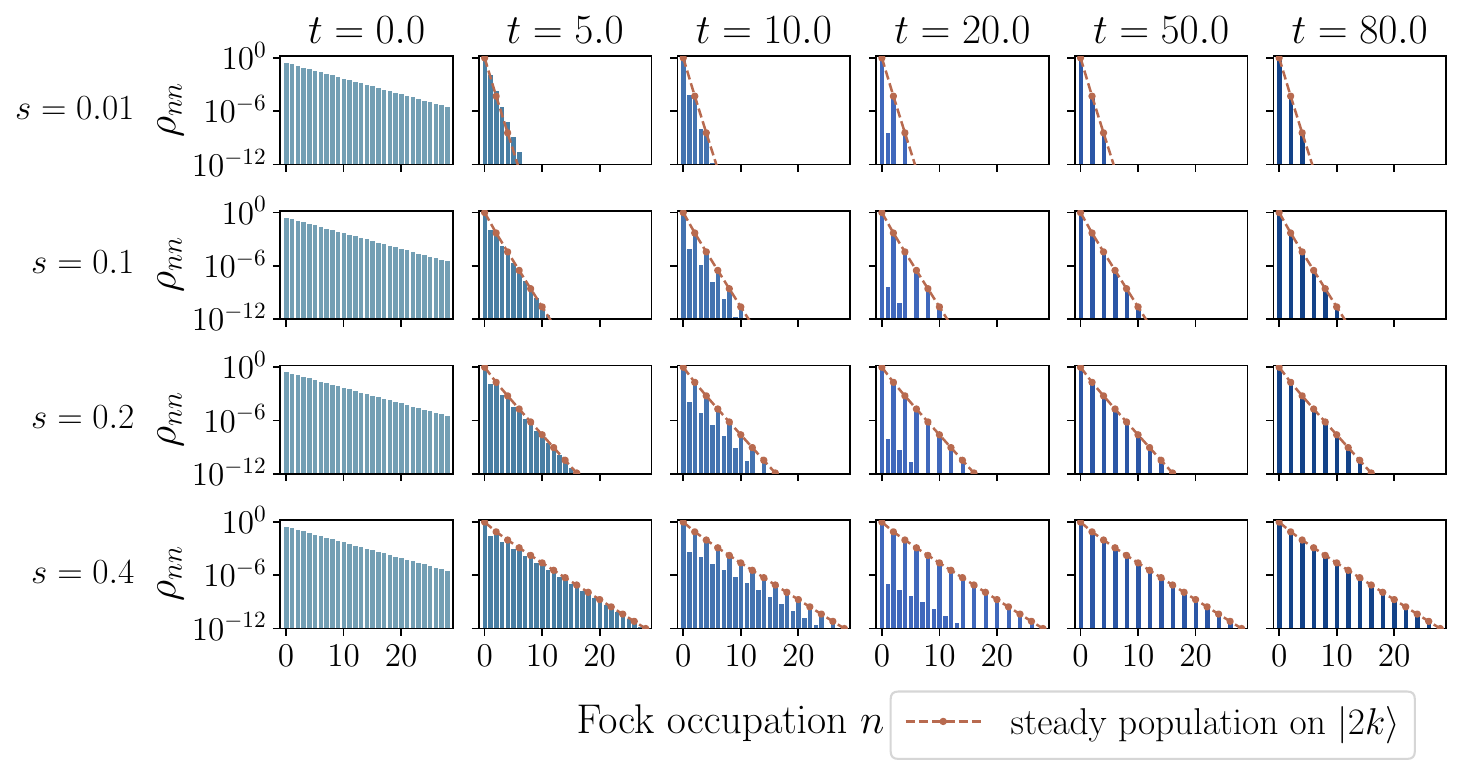}}
    \caption{\label{fig:imperfect_numerical_boson} Numerical validation of the realistic filter for a free bosonic mode. The dashed line represents the steady-state populations calculated using \cref{eq:squeezed_vacuum_state_alpha_zero}. Only populations of the even-numbered Fock states $\ket{2k}$ ($k=0,1,2,\ldots$) are shown; odd-state populations are omitted since they vanish in the steady state, i.e., $(\rho_{\rm ss})_{2k+1,2k+1} = 0$.}
    
\end{figure*}

\begin{figure*}[!htbp]
    \centering
    \includegraphics[width=0.24\linewidth]{{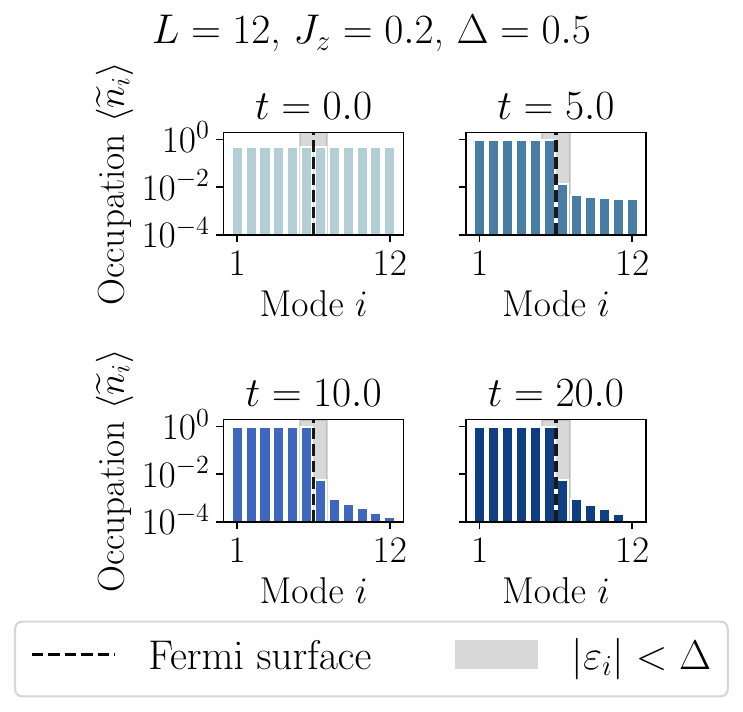}}
    \includegraphics[width=0.24\linewidth]{{figs/imperfect/freefermion_freeboson_interacting/snapshot_bars_N12_res0.5_Jz0.500.pdf}}
    \includegraphics[width=0.24\linewidth]{{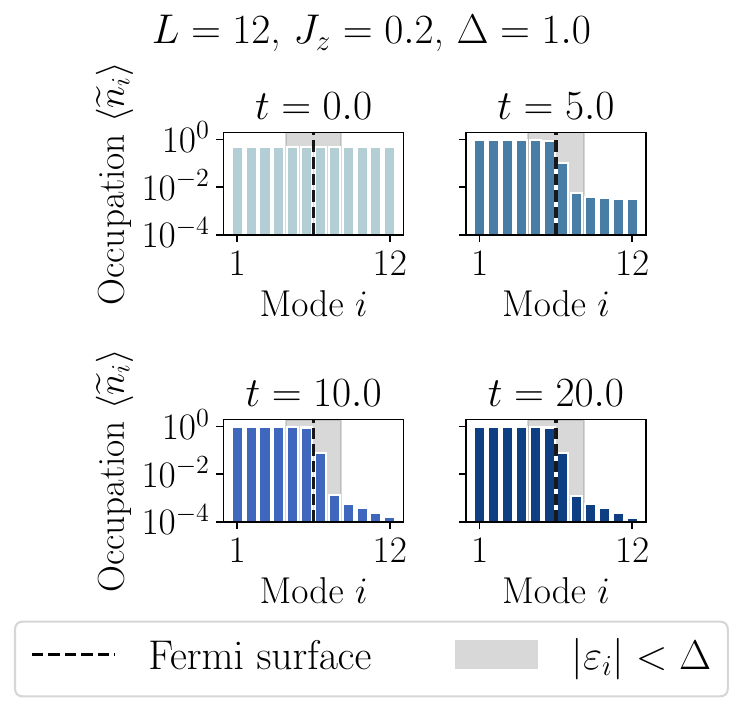}}
    \includegraphics[width=0.24\linewidth]{{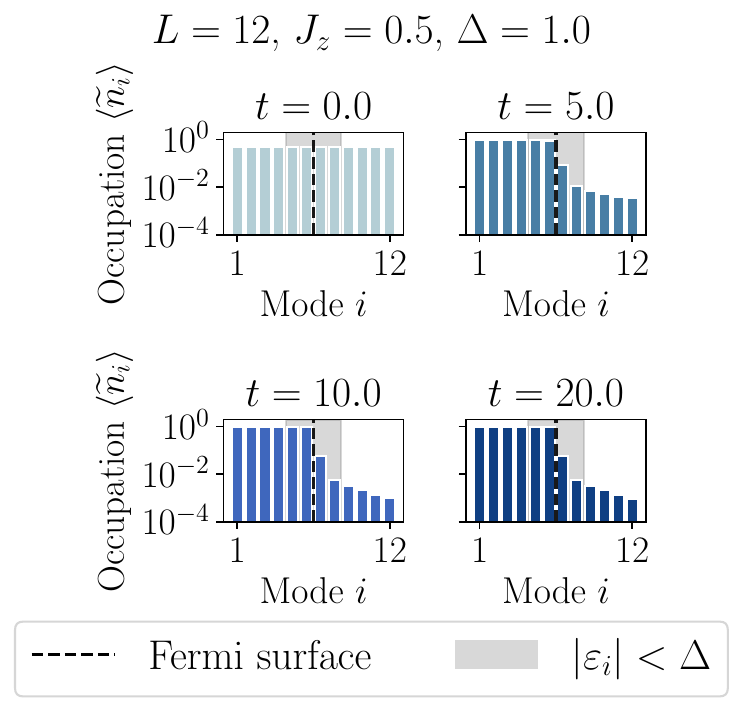}}
    \caption{\label{fig:imperfect_numerical_interacting} Numerical validation of the realistic filter for the interacting XXZ model. The results are shown for a system size $L=12$ with filter resolutions $\Delta = 0.5, 1.0$ and interaction strengths $J_z = 0.2, 0.5$.}
\end{figure*}

\paragraph{Free bosons.} 

We present additional numerical results for the single-mode bosonic system in \cref{fig:imperfect_numerical_boson}, using the same setup as in \cref{sec:free_bosons} while varying the leakage rate $s$. In this single-mode squeezed-vacuum setting, leakage {populates arbitrarily high even Fock levels while odd populations vanish in the steady state}; however, the resulting occupation exhibits an exponentially decaying profile across Fock space. Consequently, the leakage rate primarily controls the exponential tail of the steady state under realistic dissipative dynamics.

More specifically, as the leakage rate $s$ increases, the exponential decay becomes slower, leading to a broader distribution over Fock states. In contrast, in the limit of an ideal filter ($s=0$), the system converges to the pure vacuum state, with only $\ket{0}_a$ occupied. These numerical results are in excellent agreement with the theoretical prediction in \cref{eq:squeezed_vacuum_state_alpha_zero}, as discussed in Appendix~\ref{appendix:boson}.

\paragraph{Interacting XXZ model.}  As shown in \cref{fig:imperfect_numerical_interacting}, the realistic cooling dynamics of the interacting XXZ model qualitatively exhibits a combination of the free-fermion and free-boson behaviors. Recall that the canonical occupation numbers are defined with respect to the non-interacting part of the Hamiltonian. Within the resolution of the realistic filter, the energy modes are ``soft'' and not perfectly resolved, resembling the free-fermion case with a realistic filter. Beyond the filter resolution, however, the occupation profile does not display a sharp cutoff; instead, leakage effects produce an exponentially decaying distribution across energy modes, analogous to the free-boson case. We further observe that increasing the interaction strength $J_z$ enhances the leakage effect, leading to a heavier exponential tail in the canonical occupation-number distribution. In this sense, the trends observed in \cref{fig:imperfect_numerical_interacting} are consistent with the intuition developed from the free-fermion picture in \cref{fig:imperfect_numerical_FF} and the free-boson picture in \cref{fig:imperfect_numerical_boson}.

\section{Details in diagonalizing the Kitaev honeycomb model}\label{app:kitaev_diagonalization}

In this section, we provide details on how to diagonalize and solve the Kitaev honeycomb model. Most of the relevant calculations can be found in the literature, for instance, in \cite{Kitaev2006,CuiCaoFan2010}. We only include part of the calculations for completeness.

The Kitaev honeycomb model can be described by introducing four Majorana fermions on each site, {normalized by $\{\eta_i^a,\eta_j^b\}=\delta_{ij}\delta_{ab}$,}
\begin{equation}
    {S_i^\alpha = \I \eta_i^\alpha \eta_i,\quad \alpha = x,y,z,}
\end{equation}
together with the constraint $S_i^xS_i^yS_i^z = \frac\I2 \eta_i^x \eta_i^y \eta_i^z \eta_i= \frac\I8$. The gauge field is defined on the bonds as 
\begin{equation}\label{eq:gauge_field}
\mf u_{ij} :=  2\I \eta_i^\alpha \eta_j^\alpha
\end{equation}
if $i,j$ are connected by an $\alpha$-type bond. It is easy to check that for any bonds $\langle i,j \rangle_{\alpha(i,j)}$ and $\langle k,l \rangle_{\alpha(k,l)}$, the gauge field satisfies the constraint
\begin{equation}
    [\mf u_{ij}, \mf u_{kl}] = 0, \quad [\mf u_{ij}, H] = 0.
\end{equation}
This means that each operator $\mf u_{ij}$, together with the Hamiltonian can be simultaneously diagonalized in some choice of gauge. In particular, the eigenvalues of $\mf u_{ij}$ are $\pm 1 $ since $ \mf u_{ij} ^2=1$ (and which is where the name $\mathbb Z_2$ gauge field comes from). Therefore, we can treat $\mf u_{ij}$ as a classical variable and the Hamiltonian becomes quadratic in Majorana fermions which can be solved exactly in this background gauge field:
\begin{equation}
    H = 2\sum_{\langle i,j \rangle} \I J_{\alpha(i,j)} \mf u_{ij} \eta_i  \eta_j  = \sum_{i,j}\I J_{\alpha(i,j)} \mf u_{ij} \eta_i \eta_j .
\end{equation}
{In the second expression the sum over $i,j$ is over the two directed orientations of each nearest-neighbor bond.}
From \cite{Lieb1994,Kitaev2006,CuiCaoFan2010} we know that the ground state of the system is in the vortex-free sector which means we can work within the sector where $\mf u_{ij}=1=-\mf u_{ji}$ for all bonds, where $i$ is the site on the $A$ sublattice and $j$ is the site on the $B$ sublattice (this lattice as seen in \cref{fig:honeycomb} is bipartite and is actually bicolorable). The unit cell of the honeycomb lattice contains an $A$-color site and a $B$-color site. We denote the position of the unit cell by $\bv s$ and the two sites in the unit cell by $(\bv s,\mu)$ where $\mu = A,B$. The Hamiltonian can be written as
\begin{equation}\label{eq:kitaev_majorana}
    H = \I \sum_{\bv s, \mu, \bv t, \nu} J_{\alpha(\bv s, \mu; \bv t, \nu)}  \eta_{\bv s\mu} \eta_{\bv t\nu} .
\end{equation}

In particular, if $(\bv s, \mu)$ and $(\bv t, \nu)$ are not nearest neighbors, then $J_{\alpha(\bv s, \mu; \bv t, \nu)} = 0$. We see that because of the translation invariance, $\alpha(\bv s, \mu; \bv t, \nu)$ (and thus $J$) only depends on $\mu$, $\nu$ and $\bv t-\bv s$, therefore we can perform the Bloch--Floquet decomposition 
\begin{equation}
    J_{\bv s,\mu; \bv t, \nu} = J_{\mu\nu}(\bv t-\bv s) = \frac{1}{L^2} \sum_{\bv q} e^{-\I \bv q\cdot(\bv t-\bv s)} \wh J_{\mu\nu}(\bv q),\quad \eta_{\bv s, \mu}  = {\frac1{L}} \sum_{\bv q} e^{\I \bv q\cdot \bv s} a_{\bv q,\mu}.
\end{equation}
Here $\bv q = (q_x, q_y)$ goes through the discretized first Brillouin zone of the reciprocal lattice. Here $q_x$ and $q_y$ are the components of $\bv q$ in the basis of the reciprocal lattice vectors in the direction of $\bv n_1$ and $\bv n_2$ respectively. $q_x,q_y = \frac{2\pi n}{L}, n =-\frac{L-1}{2},\cdots, \frac{L-1}{2}$. We take $L$ to be odd to make it convenient for looking up the $-\bv q$ mode in the following calculations. 
The inverse transforms are given by
\begin{equation}
    \wh J_{\mu\nu}(\bv q) = \sum_{\bv t} e^{ \I \bv q\cdot \bv t} J_{\mu\nu}(\bv t),  \quad a_{\bv q, \mu} = {\frac1{L}  \sum_{\bv s} e^{-\I \bv q\cdot \bv s} \eta_{\bv s, \mu}}.
\end{equation}
The Fourier modes of the Majorana fermions satisfy the  anti-commutation relation, and the momentum-conserving $\bv q$-mode operators are related by a Hermitian conjugation:
\begin{equation}\label{eq:anticommutation}
    a_{\bv q, \mu}^\dagger = a_{-\bv q, \mu}, \quad \{a_{\bv q, \mu}, a_{\bv q', \nu}^\dagger\} = \delta_{\bv q, \bv q'}\delta_{\mu,\nu}, \quad \{a_{\bv q, \mu}, a_{\bv q', \nu}\} = \{a_{\bv q, \mu}^\dagger, a_{\bv q', \nu}^\dagger\} = \delta_{\mathbf q,-\mathbf q'}\delta_{\mu\nu},\text{~in particular,~} a_{\bv q, \mu}^2 = 0~\text{for}~\bv q\ne \bv 0.
\end{equation}
The transformed Hamiltonian can be written as
\begin{equation}
    H = \I \sum_{\bv q} \sum_{\mu,\nu \in \{A,B\}} \wh J_{\mu\nu}(\bv q) a_{\bv q, \mu}^\dagger a_{\bv q, \nu} = \I \sum_{\bv q} \sum_{\mu,\nu \in \{A,B\}} \wh J_{\mu\nu}(\bv q) a_{-\bv q, \mu} a_{\bv q, \nu}.
\end{equation}
Because $J_{0A;\bv tA}=J_{0B;\bv t B} =0$ (there is no bond there), we have $\wh J_{AA}(\bv q) = \wh J_{BB}(\bv q) = 0$. For $\wh J_{AB}(\bv q)$, only three values of $\bv t$ contribute: $\bv t = 0$, $\bv t = \bv n_1$ and $\bv t = \bv n_2$ where $\bv n_1$ and $\bv n_2$ are the two primitive lattice vectors of the honeycomb lattice. We have 
\begin{equation}\label{eq:J_AB_q}
    \wh J_{AB}(\bv q) = J_xe^{ \I \bv q\cdot \bv n_1}  + J_y e^{ \I \bv q\cdot \bv n_2}  +  J_z=: \varepsilon(\bv q) + \I \delta(\bv q)=: g(\bv q).
\end{equation}
\begin{equation}
    \varepsilon(\bv q) = J_x\cos q_x + J_y \cos q_y + J_z, \quad \delta(\bv q) = J_x\sin q_x + J_y \sin q_y,\quad 
    \wh J_{BA}(\bv q) = - \wh J_{AB}^*(\bv q) = -\varepsilon(\bv q) + \I \delta(\bv q).
\end{equation}
Correspondingly, we take $\bv s = (s_x,s_y) = s_x \bv n_1 + s_y \bv n_2$ where $s_x, s_y =-\frac{L-1}{2},\cdots, \frac{L-1}{2}$. 

We see that $\delta(\bv q)$ is odd and $\varepsilon(\bv q)$ is even in $\bv q$. Therefore $
    g(-\bv q) = g^\ast(\bv q).$ The Hamiltonian can be written as
\begin{equation}
    H = \I \sum_{\bv q} \left( g(\bv q) a_{ \bv q, A} ^\dag a_{\bv q, B} - g^*(\bv q) a_{ \bv q, B}^\dag a_{\bv q, A}\right).
\end{equation}
Applying the Bogoliubov transformation to decouple the $\bv q$ and $-\bv q$ modes in \cref{eq:anticommutation}, we obtain
\begin{equation}
    \mqty(c_{\bv q A} \\ c_{\bv q B}) = \mqty( u_{\bv q} & v_{\bv q} \\ v_{\bv q}^* & -u_{\bv q}^*) \mqty(a_{\bv q, A} \\ a_{\bv q, B}),\quad u_{\bv q} = \frac1{\sqrt{2}} ,\quad v_{\bv q} = \frac{\I}{\sqrt{2} } e^{\I \arg g(\bv q)},
\end{equation}
\begin{equation}
    H = \sum_{\bv q} \abs{g(\bv q)} (c_{\bv q A}^\dag c_{\bv q A} - c_{\bv q B}^\dag c_{\bv q B})  .
\end{equation}
The new operators also satisfy the canonical anti-commutation relations. The modes on two sublattices $A$ and $B$ are related by  
 $c_{-\bv q,A} =  2u_{\bv q}^\ast v_{\bv q}^\ast c_{\bv q,B}^\dag $ and {$c_{\bv qA}^\dag c_{\bv qA} = 1-c_{-\bv q,B}^\dag c_{-\bv q,B}$}. The first relation is because
\begin{equation}
    c_{-\bv q,A}  = u_{-\bv q} a_{\bv q,A}^\dag +v_{-\bv q} a_{\bv q,B}^\dag = u _{\bv q}^\ast a_{\bv q,A}^\dag - v_{\bv q}^\ast  a_{\bv q,B}^\dag
\end{equation}
and
\begin{equation}
    2u_{\bv q}^\ast v_{\bv q}^\ast c_{\bv q,B}^\dag = 2u_{\bv q}^\ast v_{\bv q}^\ast (v_{\bv q}  a_{\bv qA}^\dag - u_{\bv q} a_{\bv qB}^\dag) =2u_{\bv q}^\ast \left(\frac12 a_{\bv qA}^\dag - u_{\bv q} v_{\bv q}^\ast a_{\bv qB}^\dag \right) = u _{\bv q}^\ast a_{\bv q,A}^\dag - v_{\bv q}^\ast  a_{\bv q,B}^\dag.
\end{equation}
This implies
\begin{equation}\label{eq:CA}
    c_{\bv q, A} = 2u_{-\bv q}^\ast v_{-\bv q}^\ast c_{-\bv q,B}^\dag =  -2u_{\bv q} v_{\bv q}  c_{-\bv q,B}^\dag.
\end{equation}
{After relabeling $\bv q\mapsto -\bv q$ in the first occupation term,} we can express the Hamiltonian as
\begin{equation}\label{eq:kitaev_diagonalized}
    H = \sum_{\bv q}\abs{g(\bv q)}  (1-2c_{\bv qB}^\dag c_{\bv qB}) .
\end{equation}
Note that this Hamiltonian only depends on $c_{\bv q B}$ operators which should be understood as quasiparticle degrees of freedom. The ground state is given by the fully filled state of the $c_{\bv qB}$ modes, namely 
\begin{equation}
    \ket{\text{GS}} = \prod_{\bv q} c_{\bv qB}^\dag \ket{0}.
\end{equation}
{The system is gapless if and only if there exists some $\bv q \in \text{BZ}$ such that ${g}(\bv q) = 0$ in the continuum limit. 
This is equivalent to the inequalities $|J_x| \leq |J_y| + |J_z|$, $|J_y| \leq |J_x| + |J_z|$ and $|J_z| \leq |J_x| + |J_y|$. Outside this parameter regime, the system is gapped.}

\section{Details of  topological phase transitions in the  Kitaev honeycomb model}

\subsection{Chern number and purity gap}\label{appendix:topological}
We could probe the topological properties of the states along the dissipative dynamics by evaluating the Chern number of the covariance matrix $\Gamma$ in the real space, defined as 
\begin{equation}
  \Gamma_{\bv s \mu, \bv t \nu} = \frac{\mathrm{i}}{2}\Tr\left(\rho[\eta_{\bv s \mu},\eta_{\bv t \nu}]\right)  = \I \langle \eta_{\bv s \mu} \eta_{\bv t \nu} \rangle - \frac{\mathrm{i}}{2}\delta_{\bv s \mu, \bv t \nu},\quad \mu, \nu = A, B,
\end{equation}
which only involves the two-point correlation functions in the real space and can be efficiently measured in experiments such as quantum simulations on  neutral atom platform  \cite{Evered2025Honeycomb}. We will discuss the details of the simulation of the dissipative dynamics in \cref{appendix:kitaev_quasifree}.

 Because of translation invariance, we could write $\Gamma_{\bv s \mu, \bv t \nu} = \Gamma_{\bv 0\mu, \bv t-\bv s, \nu} =: \Gamma_{\mu\nu}(\bv t-\bv s)$. Thus we can define the $2\times 2$ Bloch covariance matrix $\Gamma(\bv q ) $ for each $\bv q$ in the first Brillouin zone, obtained from the Fourier transform
\begin{equation}\label{eq:bloch_covariance_matrix}
  \Gamma_{\mu\nu}(\bv q) = \sum_{\bv s} e^{\I \bv q\cdot \bv s} \Gamma_{\bv 0\mu,  \bv s\nu}.
\end{equation}
Here, with a slight abuse of notation, we use the same symbol $\Gamma$ to denote both the real-space covariance matrix and the Bloch representation in the momentum space. 

We define the parent Hamiltonian $H_{\rm p}$ associated with the covariance matrix $\Gamma$ as
\begin{equation}
  \begin{aligned}
  H_{\rm p}& = \sum_{\bv s, \bv t ; \mu,\nu \in\{A,B\}} \I \Gamma_{\mu\nu}(\bv t-\bv s) \eta_{\bv s \mu} \eta_{\bv t \nu} = \sum_{\bv q} \sum_{\mu,\nu\in\{A,B\}}2 \I \Gamma_{\mu\nu}(\bv q) a_{-\bv q\mu} a_{\bv q\nu} \\
  &=  \sum_{\bv q} \sum_{\mu,\nu\in\{A,B\}}2 \I \Gamma_{\mu\nu}(\bv q)  a_{\bv q\mu}^\dag a_{\bv q\nu} = \sum_{\bv q} \bv a_{\bv q}^\dag \mc H(\bv q) \bv a_{\bv q},
  \end{aligned}
\end{equation}
where the Bloch Hamiltonian $\mc H(\bv q)$ is a $2\times 2$ Hermitian matrix defined as
\begin{equation}
  \mc H (\bv q) := 2\I \Gamma(\bv q) = \mqty(2\I \Gamma_{AA}(\bv q) & 2\I \Gamma_{AB}(\bv q) \\ 2\I \Gamma_{BA}(\bv q) & 2\I \Gamma_{BB}(\bv q)),\quad \bv a_{\bv q} := \mqty(a_{\bv qA} \\ a_{\bv qB}).
\end{equation}
We consider the single occupied band of $\mc H(\bv q)$. Let $\ket{n(\bv q)}$ denote the normalized negative energy eigenstate of $\mc H(\bv q)$, then the Berry potential or connection $\mathscr{A}$ is defined as
\begin{equation}
  \mathscr A_\alpha(\bv q) =  \ip{n(\bv q)}{\partial_{q_\alpha}n(\bv q)},\quad \alpha = x, y,
\end{equation}
and the gauge-invariant Berry curvature, or the flux density is given by
\begin{equation}
  \mathscr F_{xy}(\bv q) = \partial_{q_x} \mathscr A_y(\bv q) - \partial_{q_y} \mathscr A_x(\bv q).
\end{equation}
The Chern number is then given by the integral of the Berry curvature over the first Brillouin zone:
\begin{equation}\label{eq:chern_berry}
  \mc C = \frac{1}{2\pi \I} \iint_{\rm BZ} \mathscr F_{xy}(\bv q) \dd q_x \dd q_y.
\end{equation}
The Chern number could also be formulated using the projection operator $\mc P(\bv q) = \ket{n(\bv q)}\bra{n(\bv q)}$ onto the single-occupied band 
\cite{Bardyn2013TopoDiss}. We note that for any pure state, the covariance matrix $\Gamma$ and thus the Bloch covariance matrix $\Gamma(\bv q)$ satisfies $\Gamma ^2 = -\frac14 I$. This implies that $\mc H(\bv q) =  2\I \Gamma(\bv q) $  satisfies $\mc H^2(\bv q) = I$. Therefore, $\mc H(\bv q) $ is a reflection and $\mc P(\bv q)$ could be directly defined as $\mc P(\bv q) = \frac{1}{2}(I - \mc H(\bv q))$ without explicitly computing the eigenstates. For general mixed states, we need to ``flatten'' the energy spectrum of $\mc H(\bv q)$ by applying the sign transform
\begin{equation}
  { \mc H}^\flat(\bv q) := \mathrm{sgn}(\mc H(\bv q))
\end{equation}
where $\mathrm{sgn}(\mc H(\bv q))$ is understood as a functional calculus with the $\mathrm{sgn}$ function applied to the eigenvalues of $\mc H(\bv q)$. The flattened $ { \mc H}^\flat(\bv q)$ is now a reflection and the projection operator $\mc P(\bv q)$ can be defined as $\mc P(\bv q) = \frac{1}{2}(I -  {\mc H}^\flat(\bv q))$.   
 The Chern number can then be computed as
\begin{equation}\label{eq:chern_projection}
  \mc C = \frac{1}{2\pi \I} \iint_{\rm BZ} \Tr\left(\mc P(\bv q) [\partial_{q_x} \mc P(\bv q), \partial_{q_y} \mc P(\bv q)]\right) \dd q_x \dd q_y.
\end{equation}
The formula \cref{eq:chern_projection} can be generalized to the multi-occupied band case, where $\mc P(\bv q)$ is the projection operator onto the occupied bands.

In practice, we are always working with a finite lattice, or equivalently, a discretized Brillouin zone. Due to the highly concentrated nature of the Berry curvature, a naive finite difference scheme for discretizing the Berry curvature may not be able to capture the spikes at the Dirac points, which could completely miss the topological properties. Instead, we adopt the more stable specialized Fukui--Hatsugai--Suzuki (FHS) 
scheme for computing the Chern number on a discretized Brillouin zone, as described in \cite{FukuiHatsugaiSuzuki2005}. For simplicity, we only consider the single-occupied case. The FHS scheme defines the link variable as
\begin{equation}
  \mathscr U_x(\bv q) = \frac{\ip{n(\bv q)}{n(\bv q + \bv x)}}{|\ip{n(\bv q)}{n(\bv q + \bv x)}|},\quad \mathscr U_y(\bv q) = \frac{\ip{n(\bv q)}{n(\bv q + \bv y)}}{|\ip{n(\bv q)}{n(\bv q + \bv y)}|},
\end{equation}  
where $\bv x$ and $\bv y$ denote {one mesh step in the $q_x$ and $q_y$ directions, respectively}. The discretized Berry curvature is then given by
\begin{equation}
  \mathscr F_{xy}(\bv q) = \log\left(\mathscr U_x(\bv q) \mathscr U_y(\bv q + \bv x) \mathscr U_x^{-1}(\bv q + \bv y) \mathscr U_y^{-1}(\bv q)\right),
\end{equation}
where the principal branch of the logarithm is taken to ensure that $\frac{1}{\I}\mathscr F_{xy}(\bv q)$ is in the range $(-\pi, \pi]$. The Chern number is then given by $\mc C=(2\pi\I)^{-1}\sum_{\bv q}\mathscr F_{xy}(\bv q)$. 
The FHS scheme is stable and gives an integer-valued lattice Chern number whenever the link variables are well defined. This allows us to reliably extract the topological properties even with a coarse discretization of the Brillouin zone.

It is also informative to analyze the purity spectrum of a state with covariance matrix $\Gamma$. We define the purity gap as the smallest eigenvalue of the positive semidefinite matrix $(2\I \Gamma)^2$
\begin{equation}
  \Delta_{P} (\Gamma)= \min \{\lambda \ge  0 : \lambda \in \mathrm{Spec}[(2\I \Gamma)^2]\}  = \min \{\lambda \ge  0 : \lambda \in \mathrm{Spec}(-4 \Gamma^2)\}.  
\end{equation}
In particular, for a pure state, $\Delta_P (\Gamma ) = 1$ and for a maximally mixed state, $\Delta_P(\Gamma) = 0$. 
As discussed in \cite{DiehlRicoBaranovEtAl2011,Bardyn2013TopoDiss}, two Gaussian states with correlation matrices $\Gamma$ and $\Gamma'$ are \emph{topologically equivalent} if they can be connected continuously without closing the bulk purity gap. Unlike the earlier work \cite{Bardyn2013TopoDiss}, we focus here on dynamical evolution rather than only the non-equilibrium steady states. A transient closing of the purity gap during the dynamics signals a change in topological class and thus marks a possible dynamical topological transition. Tracking the purity spectrum therefore provides complementary information to the Chern number and helps distinguish true topological changes from finite-size or numerical artifacts.

 \subsection{A simple mechanism for dissipative preparation of Chern insulators}\label{app:CI}

We explain how dissipative protocols can evade the unitary topological obstruction to preparing a Chern insulator. Consider a general two-band Chern insulator with Bloch Hamiltonian \cite{QiZhang2011}
\begin{equation}
   H(\bv q) = \mathbf{d}(\bv q)\cdot \boldsymbol{\sigma}
   = \mqty(d_z(\bv q) & d_x(\bv q) - \I d_y(\bv q) \\
   d_x(\bv q) + \I d_y(\bv q) & -d_z(\bv q)),\quad \boldsymbol{\sigma} = (\sigma_x, \sigma_y, \sigma_z),
\end{equation}
The Qi--Wu--Zhang model \cite{QiWuZhang2006}, the Haldane model \cite{Haldane1988}, and the Kitaev honeycomb model with explicitly broken time-reversal symmetry \cite{Kitaev2006} can all be written in this form. If the two bands are separated, then $\abs{\mathbf d(\bv q)}\neq 0$ throughout the Brillouin zone, and the topology is encoded by the normalized map
$\widehat{\mathbf d}(\bv q)=\mathbf d(\bv q)/\abs{\mathbf d(\bv q)}:\mathbb T^2\to\mathbb S^2$. {Here $\mathbb T^2: = (\RR/2\pi\ZZ)^2$ denotes the two-dimensional torus and $\mathbb S^2 := \{\abs{x}=1:x\in\RR^3\}$ denotes the two-dimensional sphere.} 
Equivalently, the occupied-band projection is
$\mc P_-(\bv q)=\frac12(I-\widehat{\mathbf d}(\bv q)\cdot\boldsymbol{\sigma})$, whose first Chern number is the degree of $\widehat{\mathbf d}$, up to the occupied-band sign convention.

For a free-fermionic Gaussian state with Bloch covariance matrix $\Gamma(\bv q)$, the associated parent Hamiltonian is $\mc H(\bv q)=2\I\Gamma(\bv q)$, which we write as
$\mc H(\bv q)=\mathbf h(\bv q)\cdot\boldsymbol{\sigma}$.
If the state is pure, then $\mc H(\bv q)^2=I$, equivalently $\abs{\mathbf h(\bv q)}=1$, and $
\mc P(\bv q)=\frac12(I-\mc H(\bv q))$ 
is a rank-one projection. Its Chern number is
\begin{equation}\label{eq:chern_projection_pure}
\mc C = \frac{1}{2\pi \I}\iint_{\rm BZ}
\Tr\left(\mc P(\bv q)[\partial_{q_x}\mc P(\bv q),\partial_{q_y}\mc P(\bv q)]\right)
\dd q_x\dd q_y .
\end{equation}
In the two-band case this equals the winding number, or degree, of
$\mathbf h:\mathbb T^2\to\mathbb S^2$. Thus the projection formula and the Bloch-sphere winding description give the same integer invariant for pure states. 
Let $\mc P_L(\bv q)$ denote the finite-size data on an $L\times L$ torus. By adding twists, $\mc P_L(\bv q)$ can be defined for continuous $\bv q\in\mathbb T^2$. When the covariance matrix is pure, $\mc P_L(\bv q)$ is a projection, and the finite-size Chern number $\mc C_L$ is defined by the same formula as in \cref{eq:chern_projection_pure}.

Under quasi-local unitary dynamics generated by a translation-invariant free-fermionic Hamiltonian, purity is preserved, so $\mc P_L(\bv q;t)$ remains a smooth family of projections. Quasi-locality gives uniform bounds, independent of $L$, on the relevant momentum derivatives over any finite time window $t\in[0,T]$ with $T$ fixed. Together with convergence of the finite-size covariance data and Chern integrands to the thermodynamic limit, this allows the use of dominated convergence and gives
$\mc C(t)=\lim_{L\to\infty}\mc C_L(t)$.
For each fixed $L$, the map $\bv q\mapsto\mc P_L(\bv q;t)$ is a smooth homotopy from $\mathbb T^2$ to the space of rank-one projections, naturally identified with $\mathbb S^2$. Since the Chern number is invariant under such homotopies,
$\mc C_L(t)=\mc C_L(0)$, and therefore $
\mc C(t)=\lim_{L\to\infty}\mc C_L(t)
=\lim_{L\to\infty}\mc C_L(0)
=\mc C(0).$ 
Thus quasi-local unitary evolution cannot transform a trivial pure state into a pure Chern insulator with a different Chern number in finite time.

Under quasi-local Lindblad dynamics, by contrast, the covariance matrix is generally mixed. Then $
\mc P_L(\bv q;t)=\frac12(I-\mc H_L(\bv q;t))$ 
need not be a projection. In the two-band parameterization, the Bloch vector
$\mathbf h_L(\bv q;t)$ is no longer confined to $\mathbb S^2$, but may move inside the Bloch ball
$\mathbb B^3=\{\abs{x}\le 1:x\in\RR^3\}$.
The eigenvalues of $\mc P_L(\bv q;t)$ are
$\frac12(1\pm\abs{\mathbf h_L(\bv q;t)})$, so $\mc P_L$ is a projection only when
$\abs{\mathbf h_L(\bv q;t)}=1$. 
Nevertheless, if $\mathbf h_L(\bv q;t)\neq 0$ for all $\bv q$, one can still define the topological invariant by normalizing
$\mathbf h_L\mapsto\mathbf h_L/\abs{\mathbf h_L}$, or equivalently by using the corresponding flattened projection, as discussed in Appendix \ref{appendix:topological}. This requires a nonzero purity gap $\Delta_P(t)>0$. As long as this gap remains open, the normalized projection gives a smooth homotopy
$\mathbb T^2\to\mathbb S^2$, with quasi-locality again ensuring bounded momentum derivatives, and its Chern number cannot change. A change is possible only when the purity gap closes, namely when
$\mathbf h_L(\bv q_*;t_*)=0$ at some momentum $\bv q_*$ and time $t_*$. At this point the normalized vector, and hence the flattened projection, is undefined. 
This is the topological mechanism by which quasi-local Lindblad dynamics can evade the unitary Chern-number obstruction \cite{Goldstein2019}: the mixed-state Chern number is not conserved because the state may leave the sphere of pure covariance matrices, pass through the interior of the Bloch ball, and become undefined at a purity-gap closing.

 \subsection{Probing the topological properties with realistic filters}\label{app:steady_state_topological}

 In \cref{sec:Kitaev}, we demonstrate that the ground-state topological phase of the Kitaev honeycomb model can be probed by computing the Chern number directly from finite-filter-resolution dissipative dynamics. Following the approach in \cite{Bardyn2013TopoDiss}, computing the Chern number only requires distinguishing the momentum-space fermionic modes, and the excited and ground states of each mode from the real-space covariance matrix, or the parent Hamiltonian. This allows one to construct the flattened single-particle Hamiltonian and subsequently evaluate the topological invariant.

Recall that we choose the set of coupling operators to be $\{\sqrt{2}\eta_{\bv s\lambda}\}_{\bv s,\lambda\in\{A,B\}}$, which implements fermionic bulk dissipation. Therefore, the mixing time of the dissipative dynamics toward the steady state is upper bounded by $\varO(\log L)$ (see Appendix \ref{appendix:free_fermion} and \cite[Theorem 1]{ZhanDingHuhnEtAl2025}).

For the steady-state characterization, the results of Appendix \ref{appendix:free_fermion} imply that the density matrix in the $c_{\bv qB}$-basis factorizes as
\begin{equation}\label{eq:steady_state_topo}
  \rho_{\rm ss} = \bigotimes_{ {2\abs{g(\bv q)}>\Delta}} \dyad{1}_{\bv qB}\;\bigotimes_{ {2\abs{g(\bv q)}\le\Delta}} \left((1-p_{\bv q})\dyad{0}_{\bv qB} + p_{\bv q}\dyad{1}_{\bv qB}\right).
\end{equation}
The occupation probability is
\begin{equation}
  p_{\bv q} = \frac{\wh f(-2\abs{g(\bv q)})^2}{\wh f(-2\abs{g(\bv q)})^2 + \wh f(2\abs{g(\bv q)})^2} = \langle c_{\bv qB}^\dag c_{\bv qB}\rangle_{\rm ss}.
\end{equation}
We introduce the Majorana basis via
\begin{equation}\label{eq:majorana_momentum_first}
  \mqty(c_{\bv qB}\\ c_{\bv qB}^\dag) = \frac{1}{\sqrt{2}}\mqty(1&-\I\\1&\I)\mqty(w_{\bv q,+}\\ w_{\bv q,-}),\qquad
  \mqty(w_{\bv q,+}\\ w_{\bv q,-}) = \frac{1}{\sqrt{2}}\mqty(1&1\\\I&-\I)\mqty(c_{\bv qB}\\ c_{\bv qB}^\dag).
\end{equation}
The covariance matrix in this Majorana basis is
\begin{equation}\label{eq:majorana_momentum_cov_first}
  \Gamma^w_{(\bv p,\bullet),(\bv q,\circ)} = \frac{\I}{2}\mathrm{Tr}\big(\rho\,[w_{\bv p,\bullet},w_{\bv q,\circ}]\big) = \I\langle w_{\bv p,\bullet}w_{\bv q,\circ}\rangle - \frac{\I}{2}\delta_{(\bv p,\bullet),(\bv q,\circ)},\quad \bullet,\circ=+,-.
\end{equation}
Hence the steady-state Majorana covariance matrix is block-diagonal in momentum:  
\begin{equation}
  \Gamma^w_{\rm ss} = \bigoplus_{ {2\abs{g(\bv q)}>\Delta}}\begin{pmatrix}0 & -1/2\\1/2 & 0\end{pmatrix}\;\bigoplus_{ {2\abs{g(\bv q)}\le\Delta}}\begin{pmatrix}0 & 1/2 - p_{\bv q}\\-1/2 + p_{\bv q} & 0\end{pmatrix} =: \bigoplus_{\bv q} \Gamma^w_{\rm ss}(\bv q).
\end{equation}
Because $\Gamma^w_{\rm ss}(\bv q)$ is unitarily related to the Bloch covariance matrix $\Gamma_{\rm ss}(\bv q)$ in \cref{eq:bloch_covariance_matrix}, the matrices $2\I\Gamma^w_{\rm ss}(\bv q)$ and $\mc H_{\rm ss}(\bv q)=2\I\Gamma_{\rm ss}(\bv q)$ share the same spectrum. Hence
\begin{equation}
  \lambda_\pm(\bv q) = \pm\left(2p_{\bv q}-1\right),
\end{equation}
with $p_{\bv q}=1$ for fully occupied modes. In the ideal-filter limit $\lambda_\pm(\bv q)=\pm1 $ for all $\bv q$, so the steady state is pure and coincides with the ground state. For a realistic filter we must ensure the bands are correctly resolved, i.e. {the positive and negative eigenvalues of $2\I\Gamma^w_{\rm ss}(\bv q)$ stay separated from zero}, equivalently $p_{\bv q}>1/2$ for all $\bv q$. 

For the gapped topologically trivial phase ($J_z>0.5, J_x = J_y = \frac12 (1-J_z)$), the minimal  {excitation energy $\min_{\bv q}2\abs{g(\bv q)}$} is lower bounded by a positive constant independent of the system size. Therefore, if we choose $\Delta$ to be smaller than this constant, the steady state {for an exact finite-resolution cutoff} is pure and coincides with the ground state. However, in the gapless topologically nontrivial phase ($J_z<0.5, J_x = J_y = \frac12 (1-J_z)$), the minimal {excitation energy $\min_{\bv q}2\abs{g(\bv q)}$} scales as $\varO(1/L)$ according to \cref{eq:J_AB_q}, therefore we cannot prepare the pure ground state even in the infinite-time limit of the Lindblad dynamics with any finite $\Delta$ for sufficiently large system size. This suggests that our 
 topological diagnosis essentially relies on the extended definition of the Chern number for mixed states.

Assuming the filter construction of Appendix \ref{appendix:filter} and accounting for measurement error in real-space correlations, we additionally require
\begin{equation}
  p_{\bv q}=\langle c_{\bv qB}^\dag c_{\bv qB}\rangle_{\rm ss} > \tfrac12 + \epsilon,
\end{equation}
where $\epsilon$ denotes the additive measurement error (including Fourier and Bogoliubov transformations). Equivalently,
\begin{equation}
  \frac{\gamma_{h,\bv q}^2}{\gamma_{c,\bv q}^2} := \frac{\wh f(2\abs{g(\bv q)})^2}{\wh f(-2\abs{g(\bv q)})^2} < \frac{1-2\epsilon}{1+2\epsilon}.
\end{equation}
It suffices to consider the smallest mode energy $\min_{\bv q}2\abs{g(\bv q)}=\varO(1/L)$ in the gapless topological regime. Then one can approximate
\begin{equation}
  \wh f(\pm2\abs{g(\bv q)}) \approx \wh f(0) \pm 2\wh f'(0)\abs{g(\bv q)} = \tfrac12 \mp \mc O\big(1/(L\Delta)\big).
\end{equation}
Hence the margin $p_{\bv q}-1/2$ at the lowest mode is only $\varO(1/(L\Delta))$, which implies the covariance matrix elements must be measured to $\varO(1/(L\Delta))$ precision, or $\varO(1/L)$ precision when $\Delta$ is fixed.

\section{Solving fermionic quasi-free Lindblad dynamics}
\label{appendix:quasifree}
In this section, 
we provide details on how to solve the Lindblad dynamics for fermionic quasi-free systems. In Appendix \ref{appendix:xx_bulk} we will discuss the XX model (i.e. the XXZ model in \cref{sec:XXZ} with $J_z=0$) with fermionic bulk cooling, and in Appendix \ref{appendix:kitaev_quasifree} we will discuss the Kitaev honeycomb model with local Majorana cooling.

A system is called quasi-free if the Hamiltonian is quadratic and the jump operators are linear in the fermionic creation and annihilation operators. For comprehensive and detailed reviews on quasi-free systems, we refer readers to Refs. \cite{Prosen2008,  ProsenZunkovic2010, BarthelZhang2022}. A key property of a quasi-free system is that the dynamics of the system can be fully characterized by the two-point correlation functions. In particular, if the system is particle-number preserving, it can be fully characterized only by the one-particle reduced density matrix (1-RDM).
\subsection{The XX model with bulk cooling}\label{appendix:xx_bulk}
\subsubsection{Solving the 1-RDM dynamics}
We recall that the XX model can be fermionized into a free fermionic model by the Jordan--Wigner transformation, and the resulting Hamiltonian is given by \cref{eq:XX_fermionized}. 
We consider the bulk cooling, where the coupling operators are chosen as the local fermionic creation and annihilation operators $\{c_i, c_i^\dag\}_{i=1}^L$. The dynamics of the system can be fully characterized by the 1-RDM $P(t)$,
 which consists of two-point correlation functions of the following form \cite{LiZhanLin2025}:
\begin{equation}
    P_{ij}(t) = \Tr(\rho(t) c_j^\dagger c_i),\quad i,j = 1,\cdots, L.
\end{equation}
{We consider both open boundary conditions (OBCs) and periodic boundary conditions (PBCs). In the PBC case, the wraparound term introduces 
\begin{equation}
    S_L^x S_1^x + S_L^y S_1^y = -\frac12 (c_L^\dag c_1 + c_1^\dag c_L) \exp(\I \pi \sum_j c_j^\dag c_j),
\end{equation}
and since we are working in the half-filling sector, we have $\exp(\I \pi \sum_j c_j^\dag c_j) = \exp(\I \pi L/2)   = (-1)^{L/2}  $. Thus the coefficient matrix of the Hamiltonian is given by}
\begin{equation}
    F ={\frac{J_{xy}}{2}} \begin{pmatrix}0 & 1 & 0 & \cdots & p\\
1 & 0 & 1 & \cdots & 0 \\
0 & 1 & 0 & \cdots & 0 \\
\vdots & \vdots & \vdots & \ddots & 1 \\
p & 0 & 0 & 1 & 0 \end{pmatrix}
\text{\quad where\quad }
    p = \begin{cases}
        1 & \text{{PBCs}, and $L= 2 \mod 4 $;}\\
        -1 & \text{{PBCs}, and $L = 0 \mod 4$;}\\
        0 & \text{{OBCs}.}
    \end{cases}
\end{equation}
For simplicity, we only discuss the case where $L$ is a multiple of $4$ and take $J_{xy} = 1$. Then we have the following explicit form of the diagonalization of $F = U\Lambda U^\dagger$. For OBC,
\begin{equation}
 \Lambda = \mathrm{diag}(\varepsilon_i) = \mathrm{diag}\mqty(-  \cos\frac{\pi}{L+1},\cdots - \cos \frac{L\pi}{L+1}),\quad U = \mqty(
    \sqrt{\frac{2}{L+1}} \sin \frac{j(L+1-k)\pi}{L+1}
)_{j,k=1}^L,
\label{eq:fourier_tightbinding}
\end{equation}
For PBC with $L = 0 \mod 4$,
\begin{equation}
  \Lambda = \diag(\varepsilon_i) = \diag
\mqty(- \cos\frac{\pi}{L},- \cos \frac{\pi}{L},- \cos \frac{3\pi}{L}, - \cos \frac{3\pi}{L},\cdots, - \cos \frac{(L-1)\pi}{L}, - \cos \frac{(L-1)\pi}{L} ),
\end{equation}
\begin{equation}
    U =   \begin{pmatrix} \vert & \vert & & \vert \\ u_1 & u_2 & \cdots & {u_L} \\ \vert & \vert & & \vert \end{pmatrix} ,\quad u_{2m-1}=\mqty(\sqrt{\frac2L}\cos \frac{(L-j)(L-2m+1)}{L}\pi)_{j=1}^L,\quad u_{2m} = \mqty(\sqrt{\frac2L}\sin \frac{(L-j)(L-2m+1)}{L}\pi)_{j=1}^L.
\end{equation}
We define the transformed fermionic modes as 
\begin{equation}
    \tilde{c}_k = \sum_{j=1}^L U_{jk}^*  c_j,\quad \tilde{c}_k^\dagger = \sum_{j=1}^L U_{jk} c_j^\dagger,\quad H = \sum_{k=1}^L \varepsilon_k \tilde{c}_k^\dagger \tilde{c}_k,
\end{equation}
and  assume that $L$ is even for simplicity, then 
\begin{equation}
   \varepsilon_1\le \cdots \le \varepsilon_{L/2} < 0 < \varepsilon_{L/2+1} \le \cdots \le \varepsilon_L.
\end{equation}
Thus the ground state of the system is given by the half-filling state
\begin{equation}\label{eq:halffilled}
    \ket{\psi_0} = \prod_{k=1}^{L/2} \tilde{c}_k^\dagger \ket{0}
\end{equation}
and, {for OBC,} the spectral gap of the Hamiltonian  
is given by 
\begin{equation}
    \Delta_H = \varepsilon_{L/2+1} - \varepsilon_{L/2 } = 2\sin\frac{\pi}{2(L+1)}\sim \frac1L \quad (L\to \infty).
\end{equation}

Using \cref{eq:thouless_fermion}, the jump operators can be expressed as 
\begin{equation}
    \begin{aligned}
    K_i^+ &:= \int_{\RR} f(s) c_i^\dagger(s) \dd s = \sum_{j=1}^L \int_{\RR} f(s) (e^{\I Fs } )_{ji} c_j^\dagger \dd s = \sum_{j=1}^L \wh{f}( F)_{ji} c_j^\dagger,\\
    K_i^- &:= \int_{\RR} f(s) c_i(s) \dd s = \sum_{j=1}^L \int_{\RR} f(s) (e^{-\I Fs } )_{ij} c_j \dd s = \sum_{j=1}^L \wh{f}(-F)_{ij} c_j.
    \end{aligned}
\end{equation}
Therefore, the jump operators are also linear in the fermionic creation and annihilation operators, and the resulting Lindblad dynamics is quasi-free. In the Heisenberg picture, the Hilbert--Schmidt adjoint of $\mc L$ is given by
\begin{equation}
    \mc L^\dagger [O] = \I [H,O] + \sum_{i=1}^L \left( K_i^{+\dagger} O K_i^+ - \frac{1}{2} \{K_i^{+\dagger} K_i^+, O\} + K_i^{-\dagger} O K_i^- - \frac{1}{2} \{K_i^{-\dagger} K_i^-, O\} \right).
\end{equation}
Plugging in $O = c_j^\dagger c_i$, using the canonical anticommutation relations (CAR) to simplify, and then taking the expectation value with respect to the state $\rho(t)$, we get a closed-form equation for the 1-RDM $P(t)$ \cite{LiZhanLin2025}
\begin{equation}\label{eq:quasifree_dynamics}
    \frac{\dd}{\dd t} P(t) = -\I [F, P(t) ] + M^+ -\frac12\{P(t), M^++M^-\}
\end{equation}
where the matrices $M^+$ and $M^-$ are given by
\begin{equation}
    (M^+)_{pq} = \sum_{k} \wh f(F)_{pk} {\wh f(F)^*_{qk}},\quad (M^-)_{pq} = \sum_{k} \wh f(-F)_{pk} {\wh f(-F)^*_{qk}},\quad \text{i.e.} \quad M^{\pm} = \wh f(\pm F)  \wh f(\pm F) ^\dag.
\end{equation}
We choose the ``pure N\'eel state'' as the initial state. 
In the fermionic language, it can be expressed as $\ket{\psi_{\text{ini}}} = \prod_{j=1}^{L/2} c_{2j-1}^\dagger \ket{0}_{c}$, and the corresponding 1-RDM is given by
\begin{equation}\label{eq:neel_1RDM}
    P_{ij}(0) = \Tr(\ket{\psi_{\text{ini}}}\bra{\psi_{\text{ini}}} c_j^\dagger c_i) = \delta_{ij} \cdot \begin{cases}1 & i \text{ is odd} \\ 0 & i \text{ is even}\end{cases}.
\end{equation}
Therefore $\Tr P(0) = L/2$ and the initial state is half-filled.

In this case, the initial state is a Slater determinant state and in particular, a Gaussian pure state. Since the dynamics is quasi-free, the state at any time $t$ is also Gaussian and the observables of interest can be computed from the 1-RDM $P(t)$ using Wick's theorem. For the spin correlation function $\langle S_i^z S_j^z \rangle$, when $i\ne j$,
\begin{equation}
    \langle S_i^z S_j^z \rangle = \left\langle \left(c_i^\dagger c_i - \frac12\right)\left(c_j^\dagger c_j - \frac12\right)\right \rangle = \langle n_i \rangle \langle n_j\rangle -\langle c_i^\dag c_j\rangle\langle c_j^\dag c_i\rangle -\frac12(\langle n_i\rangle+\langle n_j\rangle)+\frac14 = P_{ii} P_{jj} - |P_{ij}|^2 -\frac12(P_{ii}+P_{jj})+\frac14,
\end{equation}
and when $i=j$, using $n_i^2 = n_i$,
\begin{equation}
    \langle S_i^z S_i^z \rangle = \left\langle \left(c_i^\dagger c_i - \frac12\right)^2 \right\rangle = \langle n_i\rangle -\langle n_i\rangle +\frac14 = \frac14.
\end{equation}
Together with $\langle S_i^z\rangle = \langle n_i\rangle -\frac12 = P_{ii} -\frac12$, we can obtain the connected spin correlation function $\langle S_i^z S_j^z \rangle - \langle S_i^z\rangle \langle S_j^z\rangle$ which is needed to compute the subsystem bipartite spin fluctuation $\mathcal{F}(\ell)$.

We are also interested in the fidelity of the evolved state $\rho(t)$ with respect to the true ground state \cref{eq:halffilled}, whose 1-RDM is given by
\begin{equation}
    P_{ij}^{\text{GS}} = \bra{\psi_{\rm GS}} c_j^\dagger c_i \ket{\psi_{\rm GS}} = \sum_{k,\ell=1}^L U_{jk}^* U_{i\ell} \bra{\psi_{\rm GS}} \tilde{c}_k^\dagger \tilde{c}_\ell \ket{\psi_{\rm GS}} = \sum_{k=1}^{L/2} U_{jk}^* U_{ik}.
\end{equation}
We note that in the limit of ideal filtering, we have $\wh f(-F) = U \mqty(0&0\\0&I_{\frac L2})U^\dagger$ and $\wh f(F) = U\mqty(I_{\frac L2}&0\\0&0) U^\dagger$, thus the steady state $P_{\rm ss}$ of \cref{eq:quasifree_dynamics} satisfies 
\begin{equation}
  -  \I [F, P_{\rm ss}] + U\mqty(I_{\frac L2}&0\\0&0) U^\dagger-\frac12\{P_{\rm ss}, I_L\} = 0.
\end{equation}
The solution of the Sylvester-type equation is given by
\begin{equation}
    P_{\rm ss} = U\mqty(I_{\frac L2}&0\\0&0) U^\dagger= P^{\text{GS}}.
\end{equation}
Note that for free-fermionic mixed states, the {trace distance to the ground state can be controlled by the trace-norm difference of the corresponding covariance data, see} \cite[Theorem 1]{BittelMeleEisertLeone2025}. 
Therefore we can study the convergence of the evolved state to the true ground state by looking at the convergence of the 1-RDM $P(t)$ to $P^{\text{GS}}$ in the trace norm, as is done in \cref{fig:dissipative_XX_trace}.

\subsubsection{Derivation of error in 1-RDM}\label{appendix:1rdm_error}
In this subsection we compute the difference between dissipative steady state and ground state expectation values of the 1-RDM $\braket{c_i^{\dagger} c_j}$.

Using \cref{eq:fourier_tightbinding} for OBC, we can write
\begin{equation}
    \begin{aligned}
        \tr \left( c_i^{\dagger} c_j \rho \right) = & \sum_{k,l} U_{ik}^{*}U_{jl} \tr \left( \tilde{c}_k^{\dagger}\tilde{c}_l  \rho \right) = \sum_k U_{ik}^{*}U_{jk} \braket{\wt n_k}\\
        = & \sum_k \frac{2}{L+1} \sin \left( \frac{i(L+1 - k)}{L+1}\pi \right) \sin \left( \frac{j(L+1 - k)}{L+1}\pi \right) \braket{ \wt n_k}\\
        = & \frac{1}{L+1} \sum_k \left[ \cos \left( \frac{(i-j)(L+1 - k)}{L+1}\pi \right) - \cos \left( \frac{(i+j)(L+1 - k)}{L+1}\pi \right) \right] \braket{\wt n_k}.
    \end{aligned}
\end{equation}
Since $\braket{\wt n_k}_{\rm ss} - \braket{\wt n_k}_{\rm GS} = 0$ for all $|\varepsilon_k| > \Delta$, we have
\begin{equation}
    \begin{aligned}
        \braket{c_i^{\dagger} c_j}_{\rm ss} - \braket{c_i^{\dagger} c_j}_{\rm GS} & = \frac{1}{L+1} \sum_{|\varepsilon_k| < \Delta} \left[ \cos \left( \frac{(i-j)(L+1 - k)}{L+1}\pi \right) - \cos \left( \frac{(i+j)(L+1 - k)}{L+1}\pi \right) \right] \left(\braket{\wt n_k}_{\rm ss} - \braket{\wt n_k}_{\rm GS}\right)\\
        & \sim \frac{1}{\pi} \int_{- \arcsin(\Delta  )}^{\arcsin(\Delta )}  \left[\cos\left( (i-j)(\pi/2-x) \right) - \cos\left( (i+j)(\pi/2-x) \right)\right] \varphi(x) \dInt x
    \end{aligned}
\end{equation}
where 
$\varphi(x) := \widehat f(x)^2 / \left( \widehat f(x)^2 + \widehat f(-x)^2\right) - \mathbf{1}_{(-\infty,0]}(x)$ accounts for the difference between the realistic and ideal filters. In particular, for  
$q_k := k\pi / (L+1) - \pi / 2$, we have $\varphi (q_k)=\left(\braket{\wt n_{k}}_{\rm ss} - \braket{\wt n_{k}}_{\rm GS}\right)$ and {$\varphi$ is supported in $(-\Delta,\Delta)$ for an exact finite-resolution cutoff.}

According to \cref{eq:steady_state_free_fermion}, we have $\varphi(x) = -\varphi(-x)$ and $|\varphi(x)| \le 1$. When we focus on the bulk of the system, i.e., {when $i+j$ is large}, the second term with $i+j$ will vanish due to the Riemann--Lebesgue lemma. For the first term, when $i - j$ is odd, we have
\begin{equation}
    \begin{aligned}
        &\left|\int_{- \arcsin(\Delta )}^{\arcsin(\Delta )}  \cos\left( (i-j)(\pi/2-x) \right) \varphi(x)  \dInt x \right| \\
        = &\left|\int_{- \arcsin(\Delta )}^{\arcsin(\Delta )}  \sin\left( (i-j)x\right)\varphi(x)\dInt x \right| = \varO(\Delta^2 |i-j|),
    \end{aligned}
\end{equation}
for small $\Delta$. When $i-j$ is even, then the integral vanishes since $\varphi(x) = -\varphi(-x)$. Therefore we obtain that $|\braket{c_i^{\dagger} c_j}_{\rm ss} - \braket{c_i^{\dagger} c_j}_{\rm GS}| = \varO(\Delta^2|i-j|)$ for small $\Delta$.

\subsection{The Kitaev honeycomb model with local Majorana cooling}\label{appendix:kitaev_quasifree}
\subsubsection{Solving the covariance matrix dynamics}\label{appendix:kitaev_quasifree_dynamics}

We next discuss the Kitaev honeycomb model, where the jump operators are chosen as the local Majorana operators $\{\eta_{\bv sA}
,\eta_{\bv sB}\}_{\bv s}$. We recall that, after fixing the $\ZZ_2$ gauge and diagonalizing the Hamiltonian as discussed in Appendix \ref{app:kitaev_diagonalization}, the Kitaev honeycomb model can be written as a quadratic fermionic form \cref{eq:kitaev_majorana}. They are related to the fermionic operators in the momentum space as
\begin{equation}\label{eq:jump_majorana}
   { \eta_{\bv s A} =  \frac 1L  \sum_{\bv q} e^{\I \bv q\cdot \bv s} v_{\bv q} (c_{\bv qB}-c_{-\bv q B}^\dagger),\quad \eta_{\bv sB} = - \frac 1L  \sum_{\bv q} e^{\I \bv q\cdot\bv s} u_{\bv q} (c_{\bv qB}+c_{-\bv q B}^\dagger)},\quad u_{\bv q} = \frac{1}{\sqrt{2}},  \quad {v_{\bv q} = \frac{\I e^{\I \arg g(\bv q)}}{\sqrt{2}}}.
\end{equation}
The inverse transformation is given by
\begin{equation}\label{eq:CB}
    c_{\bv qB} = v_{\bv q}^\ast a_{\bv qA} - u_{\bv q}^\ast a_{\bv qB} = \frac1L \sum_{\bv s} e^{-\I \bv q\cdot \bv s} (v_{\bv q}^\ast \eta_{\bv sA} - u_{\bv q}^\ast \eta_{\bv sB}) = -\frac1L\sum_{\bv s} e^{-\I \bv q\cdot \bv s}  \left( v_{-\bv q}  \eta_{\bv sA} + u_{-\bv q}  \eta_{\bv sB}\right).
\end{equation}
\begin{equation}\label{eq:CBdag}
    c_{\bv qB}^\dag = v_{\bv q} a_{-\bv q A} - u_{\bv q} a_{-\bv q B} =    \frac1L\sum_{\bv s} e^{\I \bv q\cdot \bv s}  \left( v_{\bv q}  \eta_{\bv sA} - u_{\bv q}  \eta_{\bv sB}\right).
\end{equation}
Here the jump operators are hybrid linear combinations of the fermionic creation and annihilation operators, thus the resulting Lindblad dynamics is still quasi-free. However, the dynamics of the 1-RDM is not closed anymore since the {cross} terms that contain the ``anomalous'' correlation functions $\langle c_{\bv qB}c_{\bv q'B}\rangle$ cannot be omitted. Therefore, we need to consider the covariance matrix that contains more information than the 1-RDM \cite{BarthelZhang2022,ZhanDingHuhnEtAl2025}. For this, we define the Majorana operators and the corresponding Majorana covariance matrix in the momentum space as follows (see also \cref{eq:majorana_momentum_first})
\begin{equation}\label{eq:majorana_momentum}
    \mqty(c_{\bv qB}\\ c_{\bv qB}^\dag) = \frac1{\sqrt{2} } \mqty(1&-\I \\1&\I) \mqty(w_{\bv q,+}\\ w_{\bv q,-}),\quad \mqty(w_{\bv q,+}\\ w_{\bv q,-}) = \frac1{\sqrt{2}} \mqty(1&1\\\I&-\I)\mqty(c_{\bv qB}\\ c_{\bv qB}^\dag),
\end{equation}
\begin{equation}\label{eq:gamma_w_from_w}
    \Gamma^w_{(\bv p, \bullet), (\bv q,\circ)} = \frac{\I}{2} \mathrm{Tr}(\rho [w_{\bv p,\bullet}, w_{\bv q,\circ}]) = \I\langle w_{\bv p,\bullet} w_{\bv q,\circ}\rangle -\frac{\I}{2}\delta_{(\bv p,\bullet), (\bv q,\circ)},\quad  \bullet,\circ = +,-.
\end{equation}
With a slight abuse of notation, we drop the superscript $w$ and denote the momentum-space covariance matrix by $\Gamma = \left(\frac{\I}{2 } \langle[ w_{\bv p,\bullet}, w_{\bv q,\circ}]\rangle\right)_{(\bv p,\bullet), (\bv q,\circ)}$ throughout the following discussion for brevity. This notation should be distinguished from the real-space covariance matrix $\Gamma = \left(\frac{\I}{2 }\langle [\eta_{\bv s\mu},\eta_{\bv t\nu}]\rangle\right)_{\bv s\mu, \bv t\nu}$ discussed in \cref{appendix:topological}.

The Hamiltonian can be rewritten as 
\begin{equation}
    {H = \sum_{\bv q}\left({\I} \abs{g(\bv q)} w_{\bv q,+} w_{\bv q,-} {-} \I\abs{g(\bv q)} w_{\bv q,-} w_{\bv q,+}\right)}
\end{equation}
Accordingly, the coefficient matrix here is given by 
\begin{equation}
    {G_{(\bv q,+), (\bv q,-)} =  {\I} \abs{g(\bv q)},\quad G_{(\bv q,-), (\bv q,+)} = {-}\I\abs{g(\bv q)},\quad G_{(\bv p,\bullet), (\bv q,\circ)} = 0 \text{ for $\bv p \ne \bv q$}.}
\end{equation}
The matrix $G$ is Hermitian and anti-symmetric (in particular, it is purely imaginary). Now using the Heisenberg evolution of the Majorana operators
\begin{equation}
  w_i(t) = e^{\I Ht} w_i e^{-\I Ht} = \sum_j (e^{-2\I G t})_{ij} w_j,\quad\text{for} \quad H = \sum_{i,j } G_{ij} w_{i} w_j,
\end{equation}
together with \cref{eq:jump_majorana,eq:majorana_momentum}, we can express the jump operators as a linear combination of Majorana operators in the momentum space as
\begin{equation}
    \begin{aligned}
K_{\bv s A}  &= \int_{\RR} \sqrt 2\eta_{\bv sA}(s) f(s)\dd s \\&
= \frac{1}{L} \sum_{\bv q, \bv p,\bullet\in \{\pm\}} e^{\I \bv q\cdot  \bv s} v_{\bv q}  \left( -\wh f(-2G)_{(-\bv q,+), (\bv p, \bullet)} -\I \wh f(-2G)_{(-\bv q, -), (\bv p, \bullet)} +  \wh f(-2G)_{(\bv q, +), (\bv p, \bullet)} -\I  \wh f(-2G)_{(\bv q, -), (\bv p, \bullet)}\right) {w_{(\bv p,\bullet)}}
    \end{aligned}
\end{equation}
\begin{equation}
    \begin{aligned}
    K_{\bv sB} &= \int_{\RR} \sqrt 2\eta_{\bv sB}(s) f(s)\dd s \\& =\frac{1}{L} \sum_{\bv q, \bv p,\bullet\in \{\pm\}} e^{\I \bv q\cdot  \bv s}u_{\bv q}  \left( -\wh f(-2G)_{(-\bv q,+), (\bv p, \bullet)} -\I \wh f(-2G)_{(-\bv q, -), (\bv p, \bullet)} -  \wh f(-2G)_{(\bv q, +), (\bv p, \bullet)} +\I  \wh f(-2G)_{(\bv q, -), (\bv p, \bullet)}\right) {w_{(\bv p,\bullet)}}.
    \end{aligned}
\end{equation}
We define 
\begin{equation}
    \begin{aligned}
E_{(\bv p,\bullet),\bv sA} &:= \frac1L \sum_{\bv q} e^{\I \bv q\cdot  \bv s}v_{\bv q}  \left( -\wh f(-2G)_{(-\bv q,+), (\bv p, \bullet)} -\I \wh f(-2G)_{(-\bv q, -), (\bv p, \bullet)} +  \wh f(-2G)_{(\bv q, +), (\bv p, \bullet)} -\I  \wh f(-2G)_{(\bv q, -), (\bv p, \bullet)}\right),\\
    E_{(\bv p,\bullet),\bv sB} &:= \frac1L \sum_{\bv q} e^{\I \bv q\cdot  \bv s} u_{\bv q} \left( -\wh f(-2G)_{(-\bv q,+), (\bv p, \bullet)} -\I \wh f(-2G)_{(-\bv q, -), (\bv p, \bullet)} -  \wh f(-2G)_{(\bv q, +), (\bv p, \bullet)} +\I  \wh f(-2G)_{(\bv q, -), (\bv p, \bullet)}\right),
    \end{aligned}
\end{equation}
and
\begin{equation}
    D_{(\bv p, \bullet), (\bv p', \bullet')} =  \sum_{\bv s } E_{(\bv p,\bullet),\bv sA} E_{(\bv p',\bullet'),\bv sA}^\ast + E_{(\bv p,\bullet),\bv sB} E_{(\bv p',\bullet'),\bv sB}^\ast.
\end{equation}
Then the dynamics of the covariance matrix is given by \cite[Proposition 1]{BarthelZhang2022}
\begin{equation}
    \dv{\Gamma}{t} = X\Gamma + \Gamma X^T + Y,\quad X = -2\I G - D^R, \quad Y =   D^I.
\end{equation}
Here, $D^R$ and $D^I$ are the \emph{element-wise} real and imaginary parts of $D$, respectively. 
The full-filled state corresponds to 
\begin{equation}\label{eq:cov_fullfilled}
    {\Gamma^{\rm GS} = \mqty(0& - I_{L^2\times L^2}/2\\    I_{L^2\times L^2}/2& 0).}
\end{equation}
Again, according to \cite[Theorem 1]{BittelMeleEisertLeone2025}, the distance to the ground state in the trace norm of the many-body density matrix {is controlled by the covariance-matrix discrepancy}, which justifies the study of the convergence of the covariance matrix to $\Gamma^{\rm GS}$ in \cref{fig:dissipative_2Dc}.
\subsubsection{The $z$-bond correlation function}\label{appendix:kitaev_zbond}
Next, we discuss how to compute the $z$-bond correlation function $\langle S_{\bv sA}^z S_{\bv sB}^z \rangle$ in terms of the covariance matrix. We first express the $z$-bond correlation function in terms of the fermionic operators in the momentum space as

\begin{equation}
    \begin{aligned}
    \langle  S_{\bv sA}^z S_{\bv sB}^z \rangle  &= \langle \eta_{\bv s A}^z \eta_{\bv s B}^z \frac{{4}}{L^2} \sum_{\bv q,\bv q'} e^{\I (\bv q+\bv q')\cdot \bv s} a_{\bv q A} a_{\bv q' B} \rangle \note{\cref{eq:gauge_field}}= - \frac\I{L^2} \sum_{\bv q,\bv q'} e^{\I(\bv q+\bv q')\cdot \bv s}\langle a_{\bv {q} A}a_{\bv q'B}\rangle 
    \end{aligned}
\end{equation}
where, by \cref{eq:CA}, we have
\begin{equation}
    \begin{aligned}
    \langle a_{\bv qA} a_{\bv q'B}\rangle%
    & =  
\frac{\I}{2}(e^{\I \arg g(\bv q)}\langle c_{-\bv qB}^\dag c_{-\bv q' B}^\dag \rangle +e^{\I \arg g(\bv q)} \langle c_{-\bv qB}^\dag c_{\bv q'B}\rangle  - e^{\I (2\arg g(\bv q) -\arg g(\bv q'))}\langle c_{\bv qB} c_{-\bv q' B}^\dag \rangle- {e^{\I \arg g(\bv q)}}  \langle c_{\bv qB} c_{\bv q'B}\rangle ).
    \end{aligned}
\end{equation}
In particular, if the state is the full-filled ground state, then only the $\langle c_{-\bv qB}^\dag c_{\bv q'B}\rangle$ term does not vanish and we have
\begin{equation}
    \langle S_{\bv sA}^z S_{\bv sB}^z \rangle =  \frac{1}{2L^2} \sum_{\bv q,\bv q'} \exp(\I [(\bv q+\bv q')\cdot \bv s + \arg g(\bv q)])\langle c_{-\bv qB}^\dagger  c_{\bv q'B}\rangle.
\end{equation}
Using the transformation between complex fermions and Majorana fermions \cref{eq:majorana_momentum}, we could express the correlation functions as follows
\begin{equation}
    \begin{aligned}
    \langle S_{\bv sA}^z S_{\bv sB}^z \rangle
    & =    \frac{1}{4L^2} \sum_{\mathbf{q},\mathbf{q}'} e^{\I(\mathbf{q}+\mathbf{q}')\cdot \mathbf{s} + {\I \arg g(\mathbf{q})}} \Bigg(  \mathsf{W}_{(-\mathbf{q},+),(-\mathbf{q}',+)} + \I\mathsf{W}_{(-\mathbf{q},+),(-\mathbf{q}',-)} + \I\mathsf{W}_{(-\mathbf{q},-),(-\mathbf{q}',+)} - \mathsf{W}_{(-\mathbf{q},-),(-\mathbf{q}',-)} \\
&\quad + \mathsf{W}_{(-\mathbf{q},+),(\mathbf{q}',+)} - \I\mathsf{W}_{(-\mathbf{q},+),(\mathbf{q}',-)} + \I\mathsf{W}_{(-\mathbf{q},-),(\mathbf{q}',+)} + \mathsf{W}_{(-\mathbf{q},-),(\mathbf{q}',-)} \\
&\quad - \mathsf{W}_{(\mathbf{q},+),(-\mathbf{q}',+)} - \I\mathsf{W}_{(\mathbf{q},+),(-\mathbf{q}',-)} + \I\mathsf{W}_{(\mathbf{q},-),(-\mathbf{q}',+)} - \mathsf{W}_{(\mathbf{q},-),(-\mathbf{q}',-)} \\
&\quad - \mathsf{W}_{(\mathbf{q},+),(\mathbf{q}',+)} + \I\mathsf{W}_{(\mathbf{q},+),(\mathbf{q}',-)} + \I\mathsf{W}_{(\mathbf{q},-),(\mathbf{q}',+)} + \mathsf{W}_{(\mathbf{q},-),(\mathbf{q}',-)} \Bigg) 
    \end{aligned}
\end{equation}
Here we introduce the notation $\mathsf W_{(\bv p,\bullet), (\bv q,\circ)} = \langle w_{(\bv p,\bullet)} w_{(\bv q,\circ)}\rangle$ for $\bullet,\circ \in \{+,-\}$ for brevity.
 In particular, if the state is the full-filled ground state, it can be simplified to
\begin{equation}
    \langle S_{\bv sA}^z S_{\bv sB}^z \rangle =  \frac{1}{4L^2} \sum_{\bv q,\bv q'} e^{\I [(\bv q+\bv q')\cdot \bv s + \arg g(\bv q)]}   \left(\mathsf{W}_{(-\mathbf{q},+),(\mathbf{q}',+)} - \I\mathsf{W}_{(-\mathbf{q},+),(\mathbf{q}',-)} + \I\mathsf{W}_{(-\mathbf{q},-),(\mathbf{q}',+)} + \mathsf{W}_{(-\mathbf{q},-),(\mathbf{q}',-)} \right).
\end{equation}

If we want to simulate the cooling dynamics starting from the all-spins-up state, i.e. the ground state of the system in the limit $J_x=J_y=0$, $J_z=1$, we could express a ground state of Hamiltonian in one parameter point $J_{z,1}$ in the basis of $c_{\bv qB}^{(2)}$ operators which correspond to the ground state in another parameter point $J_{z,2}$. To be specific, these two ground states are given by
\begin{equation}
    \ket{\text{GS}_1} = \prod_{\bv q} c_{\bv qB}^{(1)\dag} \ket{0},\quad \ket{\text{GS}_2} = \prod_{\bv q} c_{\bv qB}^{(2)\dag} \ket{0}.
\end{equation}
And we compute the covariance matrix of $\ket{\text{GS}_1}$ in the basis of $c_{\bv qB}^{(2)}$ operators. For simplicity, we denote $c_{\bv qB} = c_{\bv qB}^{(2)}$ and $c_{\bv qB}^\dag = c_{\bv qB}^{(2)\dag}$ and $\wt c_{\bv qB} = c_{\bv qB}^{(1)}$ and $\wt c_{\bv qB}^\dag = c_{\bv qB}^{(1)\dag}$. Similarly, all of the $J_z$-dependent variables in $J_{z,1}$ case are denoted with tilde.

We first express $w$ in terms of $\wt w$. We do this following the chain of transformations $w\to c\to \eta \to \wt c \to \wt w$, using \cref{eq:CB,eq:CBdag,eq:jump_majorana,eq:majorana_momentum}. Since performing the sum over $\bv s$ gives
$
\frac{1}{L^2}\sum_{\bv s} e^{\I(\bv q+\bv k)\cdot\bv s}
=\delta_{\bv k,-\bv q}$ and $
\frac{1}{L^2}\sum_{\bv s} e^{\I(\bv k-\bv q)\cdot\bv s}
=\delta_{\bv k,\bv q},
$ we have

\begin{equation}
\begin{aligned}
w_{(\bv q,+)}
&=\frac12\Big[
v_{\bv q} \wt v_{-\bv q}(-\wt w_{(\bv q,+)}-\I\wt w_{(\bv q,-)}+\wt w_{(-\bv q,+)}-\I\wt w_{(-\bv q,-)})
-v_{-\bv q}\wt v_{\bv q}(-\wt w_{(-\bv q,+)}-\I\wt w_{(-\bv q,-)}+\wt w_{(\bv q,+)}-\I\wt w_{(\bv q,-)})
\\
&\qquad
-u_{-\bv q}\wt u_{\bv q}(-\wt w_{(-\bv q,+)}-\I\wt w_{(-\bv q,-)}-\wt w_{(\bv q,+)}+\I\wt w_{(\bv q,-)})
-u_{\bv q} \wt u_{-\bv q}(-\wt w_{(\bv q,+)}-\I\wt w_{(\bv q,-)}-\wt w_{(-\bv q,+)}+\I\wt w_{(-\bv q,-)})
\Big].
\end{aligned}
\end{equation}
Grouping the terms according to $\wt w_{(\pm \bv q,\pm)}$ we introduce the coefficients
\begin{equation}
\begin{aligned}
\mathsf A_{\bv q} &=\frac12\Big(-
v_{\bv q}\wt v_{-\bv q}-v_{-\bv q}\wt v_{\bv q}
+u_{-\bv q}\wt u_{\bv q}+u_{\bv q}\wt u_{-\bv q}\Big),\quad \mathsf  B_{\bv q}  =\frac12\Big(
v_{\bv q}\wt v_{-\bv q}+v_{-\bv q}\wt v_{\bv q}
+u_{-\bv q}\wt u_{\bv q}+u_{\bv q}\wt u_{-\bv q}\Big),\\
\mathsf C_{\bv q} &=\frac12\Big(-
v_{\bv q}\wt v_{-\bv q}+v_{-\bv q}\wt v_{\bv q}
-u_{-\bv q}\wt u_{\bv q}+u_{\bv q}\wt u_{-\bv q}\Big),\quad \mathsf D_{\bv q}  =\frac12\Big(
v_{\bv q}\wt v_{-\bv q}-v_{-\bv q}\wt v_{\bv q}
-u_{-\bv q}\wt u_{\bv q}+u_{\bv q}\wt u_{-\bv q}\Big).
\end{aligned}
\end{equation}
Thus we have the final linear mapping from $\wt w$ to $ w$:
\begin{equation}
    \mqty(w_{ \bv q,+ } \\ w_{ \bv q,- }) =  \mqty(\mathsf A_{\bv q} & \I \mathsf C_{\bv q} \\
-\I \mathsf C_{\bv q} & \mathsf A_{\bv q} )\mqty(\wt w_{(\bv q,+)} \\ \wt w_{(\bv q,-)} ) + \mqty(\mathsf B_{\bv q} & -\I \mathsf D_{\bv q} \\-\I\mathsf D_{\bv q} & -  \mathsf B_{\bv q} )\mqty(\wt w_{(-\bv q,+)} \\ \wt w_{(-\bv q,-)} )=: \mathsf T_{\bv q} \mqty(\wt w_{(\bv q,+)} \\ \wt w_{(\bv q,-)} ) + \mathsf S_{\bv q} \mqty(\wt w_{(-\bv q,+)} \\ \wt w_{(-\bv q,-)} ).
\end{equation}
Consequently, the two-point correlation function transforms as follows:
\begin{equation}\label{eq:WtoW}
\mathsf{W}_{\bv q,\bv q'} = \mathsf X_{\bv q} \wt {\mathsf{W}}_{\bv q,\bv q'} \mathsf X_{\bv q'}^\dag ,\quad \mathsf X_{\bv q} = \begin{pmatrix} \mathsf T_{-\mathbf{q}} &  \mathsf S_{-\mathbf{q}} \\ \mathsf S_{\mathbf{q}} &  \mathsf T_{\mathbf{q}} \end{pmatrix}.
\end{equation}
Here we introduce
\begin{equation}\label{eq:Wqqp}
    \quad \mathsf{W}_{\mathbf{q}, \mathbf{q}'} = 
\begin{pmatrix}
\mathsf{W}_{(-\mathbf{q},+),(-\mathbf{q}',+)} & \mathsf{W}_{(-\mathbf{q},+),(-\mathbf{q}',-)} & \mathsf{W}_{(-\mathbf{q},+),(\mathbf{q}',+)} & \mathsf{W}_{(-\mathbf{q},+),(\mathbf{q}',-)} \\
\mathsf{W}_{(-\mathbf{q},-),(-\mathbf{q}',+)} & \mathsf{W}_{(-\mathbf{q},-),(-\mathbf{q}',-)} & \mathsf{W}_{(-\mathbf{q},-),(\mathbf{q}',+)} & \mathsf{W}_{(-\mathbf{q},-),(\mathbf{q}',-)} \\
\mathsf{W}_{(\mathbf{q},+),(-\mathbf{q}',+)} & \mathsf{W}_{(\mathbf{q},+),(-\mathbf{q}',-)} & \mathsf{W}_{(\mathbf{q},+),(\mathbf{q}',+)} & \mathsf{W}_{(\mathbf{q},+),(\mathbf{q}',-)} \\
\mathsf{W}_{(\mathbf{q},-),(-\mathbf{q}',+)} & \mathsf{W}_{(\mathbf{q},-),(-\mathbf{q}',-)} & \mathsf{W}_{(\mathbf{q},-),(\mathbf{q}',+)} & \mathsf{W}_{(\mathbf{q},-),(\mathbf{q}',-)} 
\end{pmatrix}.
\end{equation}
The expectation values for the ground state $\ket{\wt{\text{GS}}}$ evaluate to
\begin{equation}\label{eq:GS_W_W_GS}
 \wt{\mathsf W}_{(\bv q,\circ),(\bv q',\circ')}=    \mel{\wt{\text{GS}}  }{ \wt{w}_{(\bv q,\circ)} \wt{w}_{(\bv q',\circ')}  }{\wt{\text{GS}}}= - \I \wt{\Gamma}_{(\bv q,\circ),(\bv q',\circ')}^{\rm GS} + \frac12 \delta_{(\bv q,\circ),(\bv q',\circ')},
\end{equation}
where the ground-state covariance matrix $\wt{\Gamma}^{\rm GS}$ in this basis takes the block form
\begin{equation}
    {\wt{\Gamma}^{\rm GS} =  \mqty(0& - I_{L^2\times L^2}/2\\    I_{L^2\times L^2}/2& 0),}
\end{equation}
which characterizes a fully occupied Fermi sea for the $\wt c$ fermions (refer to \cref{eq:cov_fullfilled}). By combining \cref{eq:WtoW} and \cref{eq:GS_W_W_GS}, we can thus explicitly construct the covariance matrix of $\ket{\wt{\rm GS}}$ in the {$w$-Majorana basis associated with the untilded Hamiltonian}.

\subsubsection{The covariance matrix in the original real space basis}\label{app:cov_real_space}

Note by \cref{eq:majorana_momentum,eq:CB,eq:CBdag} that we can obtain the transformation from the $w$-Majorana basis (where we perform the Lindblad simulation as discussed in \cref{appendix:kitaev_quasifree_dynamics}) to the $\eta$-Majorana basis in the real space (where we propose to carry out the construction of the parent Hamiltonian and computation of the topological diagnostics, etc.) as follows: 
 \begin{equation}\label{eq:w_to_eta}
    \begin{aligned}
    \eta_{\bv sA}   = 
    \frac1{\sqrt{2L^2}} \sum_{\bv q} e^{\I\bv q \cdot \bv s}  v_{\bv q} \left(-  w_{-\bv q+} -\I   w_{-\bv q-} + w_{\bv q+} -\I w_{\bv q-} \right)
\\
          \eta_{\bv sB}  
         = \frac1{\sqrt{2L^2}} \sum_{\bv q} e^{\I\bv q \cdot \bv s}  u_{\bv q} \left(- w_{-\bv q+} -\I w_{-\bv q-} - w_{\bv q+} +\I w_{\bv q-} \right)
    \end{aligned}
 \end{equation}
 Since the Majorana operators satisfy $
  \{\eta_{\bv s,\mu},\eta_{\bv s',\mu'}\}  =\delta_{\bv s,\bv s'} \delta_{\mu,\mu'} $ 
 we can compute the covariance matrix in the $\eta$-Majorana basis, i.e. the real space covariance matrix as follows:
 \begin{equation}
  \Gamma_{\bv s\mu,\bv s'\mu' } = \frac{\I}{2}  {\langle[\eta_{\bv s,\mu}, \eta_{\bv s',\mu'}]\rangle} = \I \langle \eta_{\bv s,\mu} \eta_{\bv s',\mu'} \rangle - \frac{\I}{2} \delta_{\bv s,\bv s'} \delta_{\mu,\mu'}.
 \end{equation}
Substituting the transformations into the definition of the covariance matrix, we can express the components of $\Gamma$ directly in terms of the expectation values of the $w$-Majoranas in momentum space. By expanding the product $\langle \eta_{\mathbf{s}\lambda} \eta_{\mathbf{s}'\lambda'} \rangle$ using \cref{eq:w_to_eta}, we obtain the four block components:
\begin{equation}
   \Gamma _{\mathbf{s}, \mathbf{s}'} = 
\begin{pmatrix} 
\Gamma_{\mathbf{s} A, \mathbf{s}' A} & \Gamma_{\mathbf{s} A, \mathbf{s}' B} \\ 
\Gamma_{\mathbf{s} B, \mathbf{s}' A} & \Gamma_{\mathbf{s} B, \mathbf{s}' B} 
\end{pmatrix},
\end{equation}
We then have
\begin{equation}
  \Gamma_{\mathbf{s}, \mathbf{s}'} = \frac{\I}{L^2} \sum_{\mathbf{q}, \mathbf{q}'} e^{\I(\mathbf{q} \cdot \mathbf{s}  - \mathbf{q}' \cdot \mathbf{s}')} \mathsf{R}_{\mathbf{q}} \mathsf{W}_{\mathbf{q}, \mathbf{q}'} \mathsf{R}_{\mathbf{q}'}^\dag - \frac{\I}{2} \delta_{\mathbf{s}, \mathbf{s}'}I,\quad \text{where}\quad  \mathsf{R}_{\mathbf{q}} = {\frac{1}{\sqrt{2}}}
\begin{pmatrix} 
v_{\mathbf{q}} & \I v_{\mathbf{q}} & -v_{\mathbf{q}} & \I v_{\mathbf{q}} \\ 
u_{\mathbf{q}} & \I u_{\mathbf{q}} & u_{\mathbf{q}} & -\I u_{\mathbf{q}} 
\end{pmatrix}.
\end{equation}
Here $\mathsf W_{\bv q,\bv q'}$ is defined in \cref{eq:Wqqp} and is directly related to the $w$-Majorana covariance matrix by \cref{eq:gamma_w_from_w}.

\end{document}